\documentclass[sigconf,authorversion,nonacm]{acmart}

\usepackage{cite}
\usepackage{bussproofs}
\usepackage{stmaryrd}
\usepackage{cmll}
\usepackage{mathrsfs}
\usepackage{algorithmic}
\usepackage{graphicx}
\usepackage{textcomp}
\usepackage{xcolor}
\usepackage{proof}
\usepackage{cleveref}

\newtheorem{theorem}{Theorem}
\newtheorem{proposition}[theorem]{Proposition}
\newtheorem{lemma}[theorem]{Lemma}

\theoremstyle{definition}
\newtheorem{definition}[theorem]{Definition}

\newcommand{\Ex}{\mathrm{Ex}}
\newcommand{\meas}[1]{\mathopen{\llbracket}#1\mathclose{\rrbracket}}

\newcommand{\A}{\mathscr{A}}

\DeclareMathOperator*{\bigmeet}{\bigcurlywedge}
\DeclareMathOperator*{\bigjoin}{\bigcurlyvee}
\DeclareMathOperator{\preq}{\preccurlyeq}

\newcommand{\varset}{\mathcal{V}}

\DeclareMathOperator{\li}{\multimap}

\DeclareMathOperator{\tens}{\otimes}

\newcommand{\orth}[1]{#1^\bot}
\newcommand{\1}{\mathbf{1}}

\DeclareMathOperator{\dom}{dom}
\DeclareMathOperator{\lsep}{Sep_L}

\DeclareMathOperator{\lcore}{S^0_L}
\DeclareMathOperator{\ecore}{S^0_!}
\DeclareMathOperator{\elemcore}{S^0_\sharp}
\DeclareMathOperator{\uacl}{\uparrow @}
\DeclareMathOperator{\equiprov}{\dashv\vdash}

\newcommand{\I}{\mathbf{I}}
\newcommand{\B}{\mathbf{B}}
\newcommand{\C}{\mathbf{C}}
\newcommand{\K}{\mathbf{K}}
\newcommand{\W}{\mathbf{W}}
\newcommand{\cS}{\mathbf{S}}

\newcommand{\cmb}{\mathbf{A}}

\newcommand{\Ke}{\mathbf{K_!}}
\newcommand{\We}{\mathbf{W_!}}
\newcommand{\D}{\mathbf{D}}
\newcommand{\cd}{\mathbf{\delta}}
\newcommand{\F}{\mathbf{F}}

\newcommand{\perm}{\mathfrak{S}}

\DeclareMathOperator{\pow}{\mathcal{P}}

\DeclareMathOperator{\ex}{Ex}
\newcommand{\m}[2]{\llbracket #1, #2 \rrbracket_m}
\newcommand{\proj}{\mathfrak}
\newcommand{\Proj}{\mathfrak{P}}
\DeclareMathOperator{\antipode}{\mathbf{\aleph}}
\newcommand{\sorth}[1]{#1^\simperp}
\newcommand{\types}{\mathbf{\Pi}}
\newcommand{\pid}{\proj{id}}

\DeclareMathOperator{\red}{\twoheadrightarrow}

\DeclareMathOperator{\bred}{\red_\beta}

\DeclareMathOperator{\fv}{FV}

\DeclareMathOperator{\bangred}{\red_!}

\DeclareMathOperator{\enteq}{\dashv\vdash}
\DeclareMathOperator{\ent}{\vdash}
\newcommand{\cl}[1]{[#1]}

\newcommand{\RecordFont}[1]{\texttt{#1}}
\newcommand{\Recl}{\RecordFont{l}}
\newcommand{\Appl}[1]{#1\RecordFont{.l}}
\newcommand{\Recr}{\RecordFont{r}}
\newcommand{\Appr}[1]{#1\RecordFont{.r}}
\newcommand{\Record}[1]{\left\{ #1 \right\}}

\newtheorem{remark}[theorem]{Remark}
\newtheorem{example}[theorem]{Example}
\newtheorem*{theorem*}{Theorem}
\newtheorem*{corollary*}{Corollary}

\theoremstyle{definition}
\newtheorem*{definition*}{Definition}

\begin{document}

\title{Linear Realisability and Implicative Algebras}

\author{Alexandre Lucquin}
\email{lucquin@lipn.univ-paris13.fr}
\affiliation{%
  \institution{Université Paris Nord}
  \city{Villetaneuse}
  \country{France}
}

\author{Luc Pellissier}
\email{luc.pellissier@u-pec.fr}
\orcid{0000-0003-1923-8193}
\affiliation{%
  \institution{Université Paris-Est Créteil}
  \city{Créteil}
  \country{France}
}

\author{Thomas Seiller}
\email{thomas.seiller@cnrs.fr}
\orcid{0000-0001-6313-0898}
\affiliation{%
  \institution{CNRS}
  \city{Paris}
  \country{France}
}

\begin{abstract}
Realizability, introduced by Kleene, can be understood as a concretization of
the Brouwer-Heyting-Kolmogorov (BHK) interpretation of proofs,
providing a framework to interpret mathematical statements and
proofs in terms of their constructive or computational content. Over
time, this concept has evolved through various extensions, such as
Kreisel’s modified realizability or Krivine's classical realizability. Parallel to these developments, Girard’s work on linear logic introduced another perspective, often seen as another concrete realization of the BHK interpretation. The resulting constructions, encompassing
models like geometry of interaction, ludics, and interaction
graphs, were recently unified under the term linear realizability
models to stress the intuitive connection with intuitionnistic and classical realizability. 

The present work establishes for the first time a formal link between linear realizability models and the realizability constructions
of Kleene and Krivine. Our approach leverages Miquel’s framework:
just as linear logic can be viewed as a decomposition of intuitionistic
and classical logic, we propose a linear decomposition of implicative algebras and show that linear realisability models provide concrete examples of such decompositions.
\end{abstract}

\maketitle

\section{Introduction}


Introduced by Kleene in 1945 \citep{kleene} realizability was developed to analyze constructive proofs and derive supplementary insights from them. Realizability can also be understood as a concretization of the Brouwer-Heyting-Kolmogorov (BHK) interpretation of proofs, providing a framework to interpret mathematical statements and proofs in terms of their constructive or computational content. Over time, this concept has evolved through various extensions, such as Kreisel’s modified realizability \citep{kreisel}, all of which operate within the domain of intuitionistic logic. This restriction is natural, given the constructive foundations of realizability. While Kleene's \emph{number realizability} was build upon the recursive functions, more recent work have been considering the lambda-calculus as the underlying computational model, notably to establish bridges between computational and logical principles, such as bar induction and the axiom of countable choice \citep{BerardiCoquand}.

Realizability topoi \citep{vanOosten} form a bridge between realizability theory and the general framework of topos theory, encapsulating computational phenomena within a categorical structure. They arise from partial combinatory algebras (PCA), which may be regarded as providing a notion of untyped computation. A topos can then be constructed via the tripos-to-topos construction, or via assemblies. A tripos is a structure that encodes logical information, and in the case of realizability, it captures the logical principles inherent in the PCA. The resulting realizability topos is a generalization of the category of sets, enriched with an internal logic reflective of the computability encoded in the PCA. These topoi are important instances within the broader class of topoi, connecting specific computational interpretations to the abstract framework of topos theory, and demonstrating how logic, computation, and category theory interplay at a foundational level.

In recent years, Krivine introduced a groundbreaking development with classical realizability \citep{krivine1,krivine2,MiquelReal}. By considering the lambda calculus extended with the \texttt{call-cc} operator, which encapsulates the computational essence of classical principles \citep{griffin}, Krivine developed realizability constructions that extend to classical logic. This approach moreover realizes the axioms of Zermelo-Fraenkel set theory (excluding the axiom of choice) and leads to the construction of novel models of set theory \citep{krivinenewmodels}.

While a connection between traditional realizability constructions and Krivine’s classical realizability was anticipated \citep{streicher}, it was Miquel’s work that formally unified these frameworks. Miquel axiomatized realizability constructions in a manner that can systematically derive models for both intuitionistic and classical logic, bridging these two domains. The framework, \emph{implicative algebras}, also allows to properly account for the categorical aspects of realisability: each implicative algebra naturally gives rise to a tripos \citep{tripostheory}.

Parallel to these developments, Girard’s work on linear logic introduced another perspective, often seen as another concrete realization of the BHK interpretation. Girard's constructions—spanning models like geometry of interaction \citep{goi1,goi2,goi3,goi5}, ludics \citep{locussolum}, and interaction graphs \citep{seiller-goim,seiller-goiadd,seiller-goig,seiller-goie,seiller-goif,seiller-markov}—were recently unified under the term \emph{linear realizability} models by Seiller \citep{seiller-hdr}. This nomenclature underscores the striking similarities between Girard’s and Krivine’s techniques, suggesting potential connections between linear realizability models and both Kleene or Krivine’s approaches. However, these connections have remained elusive due to critical, albeit subtle, mismatches in their constructions.


\paragraph{Contributions}
The present work establishes for the first time a formal link between Girard’s linear realizability models and the realizability constructions of Kleene and Krivine. Our approach leverages Miquel’s framework: just as linear logic can be viewed as a decomposition of intuitionistic and classical logic, we propose a linear decomposition of implicative algebras. We demonstrate that linear realizability models from the literature fit naturally into this framework. We then explore the relationship with Miquel’s implicative algebras and the topos construction.

By establishing that linear realizability models give rise to implicative algebras, this work opens numerous avenues for further research. First, it is crucial to explore how these models relate to existing constructions. Notably, we anticipate that linear realizability models will yield novel examples of classical realizability, potentially enriching its applications in set theory. Second, we believe our results point towards a decomposition of the resulting tripos in alignment with the linear logic decomposition of implication. This, in turn, may pave the way for the formulation of a \emph{linear tripos}, a concept that has remained elusive in the field although some progress have been made \citep{shulman}.

\section{Background}

\subsection{Implicative algebras}

Implicative algebras are an algebraic structure introduced by Miquel \citep{DBLP:journals/mscs/Miquel20} which generalizes forcing and realizability (both intuitionistic and classical), and allows us to factorize the corresponding model-theoretic constructions.

This structure has the property that the operations of the $\lambda$-calculus can be lifted into it, allowing its elements to be seen both as truth values and as (generalized) realizers.

\begin{definition}
    An implicative structure $(\A,\preq, \to)$ is a complete lattice \footnote{In \citep{DBLP:journals/mscs/Miquel20} Miquel considers equivalently complete meet semi-lattices, and says that the join only exists "by accident". Since we will use the join in some constructions, we will here speak of complete lattices} $(\A,\preq)$ equipped with a binary operation $(a,b) \mapsto (a \to b)$ called implication satisfying for all $a,a',b,b' \in \A$ :
    
    \begin{enumerate}\label{im1}
        \item if $a' \preq a$ and $b \preq b'$ then $(a \to b) \preq (a' \to b')$
        \item $a \to \bigmeet_{b \in B} b = \bigmeet_{b \in B} (a \to b)$ 
    \end{enumerate}
\end{definition}

A first intuition is that $\A$ represents a semantic type system, where $\preq$ is the relation of subtyping, and $\to$ represents the arrow type construction. We also want to think that there is some notion of realizability for which $\A$ is the set of truth values, i.e. there is a set of realizers $P$ with some closure properties, and $\A \subseteq \pow(P)$. With this intuition, we can associate to each realizer $t$ its principal type $[t]$, the smallest truth value realized by $t$. Then we can use any realizer as truth values, with the relation $t \in a$ becoming $[t] \preq a$. 

An important feature of implicative structures is that we can lift the application and lambda abstraction to the level of truth values, allowing us to use truth values as if they were realizers. With this property, we can view the elements of $\A$ as a generalized realizer where each element realizes itself and is its own principal type. In this third view, the relation $a \preq b$ can be read $a$ is a subtype of $b$, $a$ realize $b$ or $a$ is stronger than $b$ (in the sense that $a$ realize every type realized by $b$) depending on whether we choose to see $a$ and $b$ as tow truth values, a realizer and a truth value, or two realizers. With this approach, each truth value is realized, at least by itself. So we need to equip our implicative structure with a criterion of truth encompassed in what Miquel call a separator. 

\begin{definition}
Let $(A,\leq,\rightarrow)$ be an implicative structure.
A \emph{separator} is a subset $\mathcal{S} \subseteq A$ such that:
\begin{enumerate}
  \item $\mathcal{S}$ is \emph{upward closed}: if $a \in \mathcal{S}$ and
        $a \leq b$, then $b \in \mathcal{S}$.
  \item $\mathcal{S}$ is closed under \emph{modus ponens}: if
        $a \in \mathcal{S}$ and $a \rightarrow b \in \mathcal{S}$, then
        $b \in \mathcal{S}$.
\end{enumerate}
\end{definition}

From a logical point of view, a separator plays the role of a deductively closed theory; and the smallest separator, called the \emph{core} of the implicative structure, corresponds to the set of tautologies. A very important point for us is that the deduction rules are represented by the presence of all closed $\lambda$-terms in this core, so restricting the content of the core corresponds to omitting logical rules. We will use this mechanism to allow our separators to avoid structural rules, enabling them to account for linearity.

As it is, this framework is not compatible with implicative algebras, so we will introduce linear implicative algebras, a relaxed version of those which allow the omission of control rules. This new structure encompass linear realizability models (with some minor assumption, such as the existence of identity). We will show that linear implicative algebras represent intuitionnistic multiplicative linear logic, and then extend to bigger fragments of linear logic.

\subsection{Applicative structure}

In order to relate the implicative algebras approach to linear realisability, we introduce an alternative definition of implicative algebra focused on \emph{application} instead of \emph{implication}. This will be useful to show the connection with linear realizability models from the literature, since the latter are defined with a focus on application.

\begin{definition}\label{applicative structure}
    An \emph{applicative structure} $(\A,\preq, \cdot)$ is a complete lattice $(\A,\preq)$ equipped with a binary operation $(a,b) \mapsto (a \cdot b)$ called application satisfying the following axioms for all $a,a',b,b' \in \A$:
    \begin{enumerate}
        \item if $a \preq a'$ and $b \preq b'$, then $a \cdot b \preq a' \cdot b'$;
        \item $\bigjoin_{a \in A} (a \cdot b) = (\bigjoin_{a \in A} a) \cdot b$.
    \end{enumerate}
    From this, we define 
    \[
        a \leadsto b := \bigjoin \{c \in \A \mid c \cdot a \preq b\}
    \]
\end{definition}

%

\begin{proposition}\label{prop:applicative}
    Let $(\A,\preq, \cdot)$ be an applicative structure. For all $a,a',b,b' \in \A$ :
    \begin{enumerate}
        \item if $a' \preq a$ and $b \preq b'$, then $a \leadsto b \preq a' \leadsto b'$;
        \item $a \preq (b \leadsto a \cdot b)$;
        \item $(a \leadsto b) \cdot a \preq b$;
        \item $a \leadsto b = \max\{c \in \A : c \cdot a \preq b\}$;
        \item $a \cdot b \preq c$ iff $a \preq (b \leadsto c)$.
    \end{enumerate}
\end{proposition}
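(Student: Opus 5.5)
The plan is to first establish item~3, since items~2, 4 and~5 follow from it together with the definition of $\leadsto$ as a join. Item~1 is then a direct monotonicity argument.

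For item~3, I write $C_{a,b} := \{c \in \A \mid c \cdot a \preq b\}$, so that $a \leadsto b = \bigjoin C_{a,b}$. Axiom~2 of Definition~\ref{applicative structure} (application distributes over arbitrary joins in its left argument) gives
\[
(a \leadsto b)\cdot a = \Bigl(\bigjoin_{c \in C_{a,b}} c\Bigr)\cdot a = \bigjoin_{c\in C_{a,b}} (c \cdot a) \preq b,
\]
because every term of the last join lies below $b$. In the case $C_{a,b}=\emptyset$ this uses the empty-join instance of axiom~2, namely $\bot \cdot a = \bot$. This computation is the one genuinely non-trivial step: it is the only place where axiom~2 is used, and without it $\leadsto$ would not be a right adjoint.

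From item~3 the others follow in order.

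Item~4: item~3 says that $a \leadsto b \in C_{a,b}$. Being the join of $C_{a,b}$, it is therefore its maximum.

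Item~5: if $a\cdot b \preq c$, then $a \in C_{b,c}$, so $a \preq b \leadsto c$. Conversely, if $a \preq b\leadsto c$, then axiom~1 (monotonicity) and item~3 give $a \cdot b \preq (b\leadsto c)\cdot b \preq c$.

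Item~2: this is the instance of item~5 with $c := a\cdot b$, read from left to right starting from $a\cdot b \preq a\cdot b$.

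Item~1: assume $a'\preq a$ and $b\preq b'$. By item~5 it suffices to show $(a\leadsto b)\cdot a' \preq b'$. Monotonicity in the right argument (axiom~1) gives $(a\leadsto b)\cdot a' \preq (a\leadsto b)\cdot a$. Item~3 bounds this by $b$, and $b \preq b'$ finishes the argument.
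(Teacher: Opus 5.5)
Your proof is correct and essentially the paper's. The key step is the same: the computation $(a \leadsto b)\cdot a = \bigjoin_{c \in V_{a,b}} (c \cdot a) \preq b$ from join-distributivity, with items 4 and 5 then derived exactly as in the appendix (the paper's ``$\min$'' in item 4 is a typo for your $\max$). The only difference is cosmetic: the paper proves items 1 and 2 directly from the definition, via $V_{a,b}\subseteq V_{a',b'}$ and $a \in V_{b,a\cdot b}$, whereas you route them through the adjunction of item 5, which comes to the same thing.
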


We will now show that applicative structures define implicative structures, and vice versa.

\begin{proposition}\label{prop:applicativetoimplicative}
     If $(\A,\preq, \cdot)$ is an applicative structure with $\leadsto$ defined as in \ref{applicative structure}, then $(\A, \preq, \leadsto)$ is an implicative structure and its application $ab$ is equal to  $a \cdot b$, for all $a,b \in \A$.
\end{proposition}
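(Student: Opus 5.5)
We need to check the two axioms of an implicative structure for $\leadsto$, and then show that the application induced by $\leadsto$ coincides with $\cdot$. Throughout we rely on Proposition~\ref{prop:applicative}, in particular the adjunction (5): $a\cdot b \preq c$ iff $a \preq (b\leadsto c)$.

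\textbf{Monotonicity.} The first implicative axiom (antitone in $a$, monotone in $b$) is exactly item (1) of Proposition~\ref{prop:applicative}, so nothing more is needed there.

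\textbf{Commutation with meets.} For the second axiom we argue through the adjunction. For any $c\in\A$, we have $c \preq a\leadsto \bigmeet_{b\in B} b$ iff $c\cdot a \preq \bigmeet_{b\in B} b$. This holds iff $c\cdot a\preq b$ for every $b\in B$, iff $c\preq a\leadsto b$ for every $b\in B$, iff $c\preq \bigmeet_{b\in B}(a\leadsto b)$. Since the two sides have the same lower bounds, they are equal. This also covers $B=\emptyset$: both sides are then the top element, because $c\cdot a\preq\top$ always holds.

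\textbf{Induced application.} Here we must recall Miquel's definition of application in an implicative structure,
\[ab := \bigmeet\{c\in\A \mid a\preq (b\leadsto c)\}.\]
By the adjunction (5), the set $\{c \mid a\preq (b\leadsto c)\}$ equals $\{c \mid a\cdot b\preq c\}$, the set of upper bounds of $a\cdot b$. Its meet is therefore $a\cdot b$ itself, since $a\cdot b$ belongs to the set and lies below every element of it. So $ab=a\cdot b$.

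The only delicate point is this last step, because it depends on Miquel's definition of application, which the paper has not restated. Should the paper use the variant $ab=\bigmeet\{c \mid a\preq b\to c\}$, it is the same thing, so the adjunction argument applies unchanged. Everything else follows from the adjunction, which was established in Proposition~\ref{prop:applicative} using the join-distributivity axiom of applicative structures.
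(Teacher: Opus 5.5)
Your proof is correct and follows essentially the same route as the paper's appendix proof. Axiom 1 is item (1) of Proposition~\ref{prop:applicative}, and both the meet-commutation axiom and the identity $ab = a\cdot b$ come from the adjunction (5); the only cosmetic difference is that the paper gets the inequality $a\leadsto\bigmeet_{b\in B} b \preq \bigmeet_{b\in B}(a\leadsto b)$ from monotonicity rather than by running the adjunction in both directions as you do.
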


\begin{proposition}\label{prop:implicativetoapplicative}
    If $(\A, \preq, \to)$ is an implicative structure then $(\A,\preq, \cdot)$, where $a \cdot b := ab$, is an applicative structure, and $\leadsto$ as defined in \ref{applicative structure} is equal to $\to$.
\end{proposition}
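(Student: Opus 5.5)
The plan is to rely on the standard definition of application in an implicative structure from \citep{DBLP:journals/mscs/Miquel20}, namely $ab := \bigmeet\{c \in \A \mid a \preq (b \to c)\}$. The first step is to prove an adjunction $ab \preq c \iff a \preq (b \to c)$ for all $a,b,c \in \A$. Every remaining claim should then follow formally from it, using only the fact that $\preq$ is a complete lattice.

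\emph{Step 1 (adjunction).} Suppose $a \preq (b \to c)$. Then $c$ belongs to the set whose meet defines $ab$, so $ab \preq c$. For the converse, the key observation is $a \preq (b \to ab)$. Write $C = \{c' \mid a \preq (b\to c')\}$. Then $a$ is below every $b \to c'$ with $c' \in C$, hence
\[
a \preq \bigmeet_{c'\in C} (b \to c') = b \to \bigmeet_{c' \in C} c' = b \to ab,
\]
where the middle equality is axiom~2 of implicative structures. Now if $ab \preq c$, covariance of $\to$ in its second argument (axiom~1) gives $b \to ab \preq b \to c$, and therefore $a \preq b \to c$. I expect this step to be the only non-formal point of the proof. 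It is exactly where the meet-distributivity axiom is needed; everything else is order-theoretic bookkeeping.

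\emph{Step 2 (axioms of an applicative structure).} For monotonicity, let $a \preq a'$ and $b \preq b'$. Step~1 gives $a' \preq b' \to a'b'$, and contravariance of $\to$ in its first argument gives $b' \to a'b' \preq b \to a'b'$. Hence $a \preq b \to a'b'$, and the adjunction yields $ab \preq a'b'$.

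For join-distributivity, fix a family $(a_i)$ and compare both sides against an arbitrary $c$:
\[
\Big(\bigjoin_i a_i\Big) b \preq c \iff \bigjoin_i a_i \preq b\to c \iff \forall i,\ a_i \preq b \to c \iff \forall i,\ a_i b \preq c \iff \bigjoin_i (a_i b) \preq c.
\]
Two elements that have the same upper bounds are equal, so the two sides coincide.

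\emph{Step 3 ($\leadsto$ equals $\to$).} By definition, and using the adjunction,
\[
a \leadsto b = \bigjoin\{c \mid c\cdot a \preq b\} = \bigjoin\{c \mid c \preq a \to b\}.
\]
The right-hand set is the principal down-set of $a \to b$, so its join is $a \to b$. This completes the plan.
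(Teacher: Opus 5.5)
Your proof is correct and follows essentially the same route as the paper. Your join-distributivity chain of equivalences and your identification $a \leadsto b = \bigjoin\{c \mid c \preq a \to b\} = a \to b$ are exactly the paper's computations, and both rest on the adjunction $ab \preq c \iff a \preq b \to c$. The only difference is that you derive this adjunction (via axiom~2) and the monotonicity of application explicitly, whereas the paper takes them as already known from Miquel's work.
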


    Any implicative structure $\A = (\A, \preq, \to)$ naturally induces a semantic type system whose types are the elements of $\A$. 
    
    In this framework, a typing context is a finite (unordered) list $\Gamma = x_1 : a_1, \ldots, x_n : a_n$, where $x_1, \ldots, x_n$ are pairwise distinct $\lambda$-variables and where $a_1, \ldots, a_n \in \A$. Thinking of the elements of $\A$ as realizers rather than as types, we may also view every typing context $\Gamma = x_1 : a_1, \ldots, x_n : a_n$ as the substitution $\Gamma = x_1 := a_1, \ldots, x_n := a_n$.
    
    Given a typing context $\Gamma = x_1 : a_1, \ldots, x_n : a_n$, we write $\dom(\Gamma) = {x_1, \ldots, x_n}$ its domain, and the concatenation $\Gamma, \Gamma'$ of two typing contexts $\Gamma$ and $\Gamma'$ is defined as expected, provided $\dom(\Gamma) \cap \dom(\Gamma') = \emptyset$. 
    
    Given two typing contexts $\Gamma$ and $\Gamma'$, we write $\Gamma' \preq \Gamma$ when for every declaration $(x : a) \in \Gamma$ there is $(x : b) \in \Gamma'$ such that $b \preq a$.
    
\begin{definition}[Semantic type system]
    Given a typing context $\Gamma$, a $\lambda$-term $t$ with parameters in $\A$ and an element $a \in \A$,
    we define the (semantic) typing judgment $\Gamma \vdash t : a$ a as the following shorthand:
    $$
        \Gamma \vdash t : a \quad :\iff \quad \fv(t) \subseteq \dom(\Gamma)\ \mathrm{and}\ (t[\Gamma])^\A \preq a
    $$
    (using $\Gamma$ as a substitution in the right-hand side inequality). 
\end{definition}

This semantic type system gives us some semantic typing rules that we will use to construct typing derivation from proof trees.

\begin{proposition}[Semantic typing rules (2.23)]\label{prop:semantictyping}
    For all typing context $\Gamma, \Gamma'$, $\lambda$-terms $t,u$ with parameters in $\A$, $a,a',b,b',c \in \A$ and $(a_i)_{i \in I} \in \A^I$, the following 'semantic typing rules' are valid :
    \begin{itemize}
        \item $x : a \vdash x : a$ (Axiom)
        \item $\vdash a : a$ (Parameter)
        \item If $\Gamma \vdash t : a$ and $a \preq a'$ then $\Gamma \vdash t : a'$ (Subsumption)
        \item If $\Gamma \vdash t : a$ and $\Gamma' \preq \Gamma$ then $\Gamma' \vdash t : a$ (Context subsumption)
        \item If $\Gamma, x : a \vdash t : b$ then $\Gamma \vdash \lambda x.t : a \li b$ ($\li$-R)
        \item If $\Gamma \vdash t : a$ and $\Delta, x : a \vdash u : b$ then $\Gamma, \Delta \vdash (\lambda x.u)t : b$ (Cut)
        \item If $\Gamma \vdash t : a$ and $\Delta, x : b \vdash u : c$ then $\Gamma,\Delta, y : a \li b \vdash (\lambda x.u)(yt) : c$ ($\li$-L)
        \item If $\Gamma \vdash t : a_i$ for all $i \in I$ then $\Gamma \vdash t : \bigmeet_{i \in I} a_i$ (Generalization)
    \end{itemize}
\end{proposition}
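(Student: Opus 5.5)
Each rule will be checked directly from the definition of the semantic typing judgment. The tools are the interpretation of $\lambda$-terms in $\A$ as defined by Miquel, $(tu)^\A := t^\A u^\A$ and $(\lambda x.t)^\A := \bigmeet_{a \in \A}(a \to (t[x:=a])^\A)$, together with the applicative facts of Proposition~\ref{prop:applicative}, transported along Propositions~\ref{prop:applicativetoimplicative} and~\ref{prop:implicativetoapplicative}. In particular I will use that $ab$ is monotone in both arguments, that $ab \preq c \iff a \preq (b \to c)$, and that $\to$ commutes with meets in its second argument. Here $\li$ is read as the implication $\to$ of $\A$.

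The first step is a monotonicity lemma, proved by induction on $t$: if $a_i \preq b_i$ for all $i$, then $(t[x_i := a_i])^\A \preq (t[x_i := b_i])^\A$. The variable and parameter cases are trivial. The application case uses monotonicity of application. The abstraction case uses the fact that $\to$ is monotone in its codomain and that meets are monotone. With this lemma, Axiom and Parameter are immediate, Subsumption is transitivity of $\preq$, and Context subsumption is the lemma itself. Generalization holds because $(t[\Gamma])^\A$ is a lower bound of every $a_i$, hence lies below their meet.

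For $\li$-R, assume $(t[\Gamma, x:=a])^\A \preq b$. Since $(\lambda x. t[\Gamma])^\A$ is a meet taken over all $a'$, it lies below the component at $a' = a$, namely $a \to (t[\Gamma][x:=a])^\A$. That term is in turn $\preq a \to b$ by monotonicity of $\to$. For Cut, I first note $(\lambda x.u)^\A\, c \preq (u[x:=c])^\A$ ($\beta$-soundness). This follows from $(\lambda x.u)^\A \preq c \to (u[x:=c])^\A$ together with the adjunction $(c\to d)c \preq d$. Taking $c = (t[\Gamma])^\A \preq a$ and using the monotonicity lemma gives $(u[\Delta, x:=c])^\A \preq (u[\Delta, x:=a])^\A \preq b$. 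The variable-disjointness assumption on $\Gamma,\Delta$ is needed so that substitutions commute; $x \notin \dom(\Gamma,\Delta)$ is ensured by renaming.

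The rule $\li$-L reduces to Cut. Since $y : a\to b \vdash y : a\to b$ and $\Gamma \vdash t : a$, I get $(y t)[\Gamma, y:= a\to b]^\A = (a\to b)\,(t[\Gamma])^\A$. Monotonicity of application bounds this by $(a\to b)a \preq b$, so $\Gamma, y:a\to b \vdash yt : b$. Applying Cut with $\Delta, x:b \vdash u : c$ then gives the claim.

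I expect the main obstacle to be the bookkeeping in the monotonicity lemma and $\beta$-soundness for terms with parameters. The crux is the abstraction case, where one must check that substitution of parameters commutes with the meet defining $\lambda$. I would handle this by proving the lemma for open terms with all free variables instantiated simultaneously.
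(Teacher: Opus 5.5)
Your proposal is correct and takes essentially the same route as the paper's appendix proof. Each rule is checked directly from the definition of $\Gamma \vdash t : a$, using that $(\lambda x.t)^\A$ lies below its component at $a$, monotonicity of $\li$ and of application, and $(a \li b)a \preq b$; the paper only cites monotonicity of substitution, which you prove, and otherwise the differences are cosmetic. In Cut the paper replaces $(t[\Gamma])^\A$ by $a$ via monotonicity of application before instantiating the meet, whereas you instantiate at $(t[\Gamma])^\A$ and then use monotonicity of substitution, and you obtain $\li$-L as an instance of Cut rather than recomputing it directly.
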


\begin{remark}
    We do not use the exact same rules as in \citep{DBLP:journals/mscs/Miquel20} but the systems are equivalent.
\end{remark}


\subsection{Linear realisability models}

Soon after the introduction of linear logic \citep{ll}, Jean-Yves Girard introduced the \emph{geometry of interaction program} \citep{towards}. Motivated by the idea of having a dynamic representation of proofs, many of the constructions introduced as part of this program since its inception also include a reconstruction of types based on an underlying dynamic situation. These constructions have been recently called \emph{linear realisability} by Seiller \citep{seiller-hdr} who provided a more abstract presentation akin to the PCA-based view on standard (intutionnistic) realizability. 

More precisely, the models are defined from a computational model, together with a measurement allowing to define types called \emph{linear realisability situation}. The definition involves the choice of a commutative group $(\Theta,+,0)$. For our purposes, this group can be considered to be the real numbers together with the usual addition.

\begin{definition}
A \emph{linear realisability situation} is a triple $(P,\Ex,\meas{\cdot,\cdot})$, where:
\begin{itemize}
\item $P$ is a set (of \emph{programs});
\item $\Ex: P\times P \rightarrow P$ is an associative operation representing the composition of programs;
\item $\meas{\cdot,\cdot}: P\times P \rightarrow \Theta$ satisfies the so-called \emph{trefoil}, or \emph{2-cocycle}, property with respect to $\Ex$:
\begin{equation} \forall p,q,r \in P, ~ \meas{\Ex(p,q),r}+\meas{p,q} = \meas{p,\Ex(q,r)}+\meas{q,r}. \label{trefoil} \end{equation}
\end{itemize}
\end{definition}

We will now sketch how models of linear logic can be constructed from a linear realisability situation. More detail on the construction and these instances can be found in Seiller's habilitation thesis \citep{seiller-hdr}. 

The construction define \emph{types} as bi-orthogonally closed set, for a notion of orthogonality induced by the measurement. This generalizes formal concepts \citep{GanterWille1999,seiller-Weyl}, and follows a technique which has been used to define denotational models of linear logic such as coherent spaces \citep{proofsandtypes,qcs,probcoh} or finiteness spaces \citep{finitenessspaces}. In the latter constructions, abstracted as categorical double glueing constructions \citep{doubleglueing}, the orthogonality satisfies a Jacobi identity $a\otimes b \perp c \Leftrightarrow a\perp b\otimes c$, allowing to lift the $\otimes$ construction to a monoidal product on types. However, in the present case, Equation \ref{trefoil} exhibits a mismatch between the associativity of execution and the measurement. The construction we now detail allows to lift the execution on types in a way that ensures associativity of the operation on types despite this mismatch. This is done by adjoining to the program an element of $\Theta$ used to twist the execution, constructing types on the set $\Theta\times P$ instead of $P$ directly.

\begin{definition}
A project is a pair $(\alpha, a)\in \Theta\times P$, written $\alpha\cdot a$. Execution and measurement are extended to projects as follows:
\begin{align*}
\Ex(\alpha \cdot a, \beta \cdot b) &= (\alpha+\beta+\meas{a,b}) \cdot \Ex(a,b)\\
\meas{\alpha \cdot a, \beta \cdot b} &= \alpha + \beta + \meas{a,b}
\end{align*}
Note that, on projects, the trefoil property becomes a Jacobi identity:
\[ \meas{\Ex(\alpha \cdot a, \beta \cdot b), \gamma \cdot c} = \meas{\alpha \cdot a, \Ex(\beta \cdot b, \gamma \cdot c)}. \] 
\end{definition}

Given any subset $\Perp \subset \Theta$, one can define an orthogonality relation from the measurement, by 
\[ \alpha \cdot a \perp \beta\cdot b \Leftrightarrow \meas{\alpha \cdot a, \beta \cdot b} \in \Perp. \]
One can then define \emph{types} as bi-orthogonally closed sets $A = A^{\bot\bot}$, where for any set $X$, 
\[ X^\bot = \{ \alpha\cdot a\mid \forall \xi\cdot x \in X, \xi \cdot x \perp \alpha\cdot a \}.\] 
Equivalently, a set $A$ is a type if and only if there exists another set $T$ (of \emph{tests}) such that $A=T^\bot$.

One can then show that the following constructions on types model the connectives of (multiplicative) linear logic:
\[
\begin{array}{rcl}
A\multimap B &=& \{ \xi\cdot x \mid \forall \alpha\cdot a\in A, \Ex(\xi\cdot x, \alpha\cdot a)\in B \}\\
A\otimes B &=& \{ \Ex(\alpha\cdot a, \beta\cdot b) \mid \alpha\cdot a \in A, \beta\cdot b\in B \}^{\bot\bot}
\end{array}
\]
One interesting fact is that the definition of $A\multimap B$ does not require the double orthogonal closure; it is nonetheless a type, as one can show that $A\multimap B = (A\otimes B^\bot)^\bot$. In general, models are moreover \emph{localised}, in the sense that each object has an assigned location, which can be understood as an associated point in a boolean algebra. Constructions such as ludics \citep{locussolum}, geometry of interaction \citep{multiplicatives,towards,goi1,goi2,goi3,goi5}, interaction graphs \citep{seiller-goim,seiller-goiadd,seiller-goig,seiller-goie,seiller-goif,seiller-markov}, or transcendental syntax \citep{syntran1,seiller-syntran} can be understood as examples of the above \citep{seiller-hdr}. In most of those cases, the models are extended to larger fragments of linear logic by considering additional operations on the underlying model of computation (the set of programs), which can be lifted to operations on types corresponding to logical connectives.

\section{Multiplicative linear logic}

From now on, we will use $\li$ instead of $\to$ for the implication in implicative structures. 

\subsection{Combinators}

Let us consider the following combinators : 
\[
\begin{array}{lll}
    \I = \lambda x.x & \hspace{2em} & \K = \lambda xy.x \\
    \B = \lambda xyz.x(yz) && \W = \lambda xy.xyy \\
    \C = \lambda xyz.xzy && \cS = \lambda xyz.xz(yz)
\end{array}
\]

\begin{proposition}[2.24]
We have the following equalities :
    \begin{align*}
        \I^\A & = \bigmeet_{a \in \A} (a \li a) \\
        \B^\A & = \bigmeet_{a,b,c \in \A} ((b \li c) \li (a \li b) \li a \li c) \\
        \C^\A & = \bigmeet_{a,b,c \in \A} ((a \li b \li c) \li b \li a \li c) \\
        \K^\A & = \bigmeet_{a,b \in \A} (a \li b \li a) \\
        \W^\A & = \bigmeet_{a,b \in \A} ((a \li a \li b) \li a \li b) \\
        \cS^\A & = \bigmeet_{a,b,c \in \A} ((a \li b \li c) \li (a \li b) \li a \li c)
    \end{align*}
\end{proposition}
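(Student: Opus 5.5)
The plan is to unfold Miquel's interpretation of $\lambda$-terms in $\A$ and apply two principles repeatedly. Recall the interpretation: $(\lambda x.t)^\A = \bigmeet_{a \in \A} (a \li (t[x:=a])^\A)$ and $(tu)^\A = t^\A u^\A$. Application is $ab = \bigmeet\{c \mid a \preq (b \li c)\}$, which is the same as $a \cdot b$ by Proposition~\ref{prop:implicativetoapplicative}.

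The first principle is that $\li$ commutes with meets in its second argument. This is axiom (2) of implicative structures, and it lets us pull all the nested $\bigmeet$'s to the front. For a term $\lambda x_1\cdots x_n.t$ we therefore get
\[
  (\lambda x_1\cdots x_n.t)^\A = \bigmeet_{a_1,\ldots,a_n}\bigl(a_1 \li \cdots \li a_n \li t[\vec x:=\vec a]^\A\bigr).
\]
For $\I$ this immediately gives $\bigmeet_a (a \li a)$. For $\K$ the body is a variable, so we get $\bigmeet_{a,b}(a \li b \li a)$ directly.

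The second principle handles the combinators whose body contains an application. Here we prove two inequalities, using the adjunction $ab \preq c \iff a \preq (b \li c)$ from Proposition~\ref{prop:applicative}(5), transported via Proposition~\ref{prop:implicativetoapplicative}, together with $(a \li b)a \preq b$ from item (3).

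\textbf{($\preq$) direction.} Take $\B$ as the model case. For any $a,b,c$, specialise the outer meet to $x:=b\li c$, $y:=a\li b$, $z:=a$. Then monotonicity of $\li$ in its second argument, combined with
\[
  (b \li c)((a\li b)a) \preq (b\li c)b \preq c,
\]
shows that $\B^\A \preq (b\li c)\li(a\li b)\li a\li c$. Taking the meet over $a,b,c$ gives this direction. Here we use that application is monotone, which is Proposition~\ref{prop:applicative}(1) on the applicative side. $\C$, $\W$ and $\cS$ are handled the same way. For $\cS$, for instance, we use $((a\li b\li c)a)((a\li b)a) \preq (b\li c)b \preq c$.

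\textbf{($\succcurlyeq$) direction.} Fix arbitrary $x,y,z$. We must show the expression on the right-hand side is below $x \li y \li z \li x(yz)$. We instantiate the meet at the "principal types" $a:=z$, $b:=yz$, $c:=x(yz)$. It then suffices to check that the instantiated formula $(yz \li x(yz)) \li (z \li yz) \li z \li x(yz)$ is below $x \li y \li z \li x(yz)$. By contravariance in the first argument, this reduces to $x \preq (yz \li x(yz))$ and $y \preq (z \li yz)$. Both are exactly Proposition~\ref{prop:applicative}(2).

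The analogous instantiations for the other combinators are:
\begin{itemize}
  \item $\C$: $a:=z$, $b:=y$, $c:=xzy$;
  \item $\W$: $a:=y$, $b:=xyy$;
  \item $\cS$: $a:=z$, $b:=yz$, $c:=xz(yz)$.
\end{itemize}
In each case the needed inequalities come from Proposition~\ref{prop:applicative}(2) and monotonicity. For example, in the $\C$ case we need $x \preq (z \li y \li xzy)$, which follows from applying item (2) twice: first $xz \preq (y \li xzy)$, then $x \preq (z \li xz)$ together with monotonicity.

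The main obstacle is bookkeeping rather than any conceptual step. The hard part is choosing the right witnesses in the ($\succcurlyeq$) direction and correctly chaining variance, covariant in the conclusion and contravariant in the premises, through iterated arrows. $\cS$ is the most delicate case, since $z$ is duplicated. We check that instantiating both occurrences of $a$ with $z$ is consistent, which it is because the meet ranges over a single $a$.
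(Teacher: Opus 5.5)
Your proof is correct and is essentially the standard argument behind Miquel's Proposition~2.24, which the paper imports without reproving: you get the $(\preq)$ direction by specialising the outer meet and using $(a\li b)a\preq b$ (equivalently, via the semantic typing rules), and the reverse inequality by instantiating at the principal types $z$, $yz$, $x(yz)$ (and their analogues) and using $a\preq b\li ab$ with contravariance of $\li$. There is one citation slip: the monotonicity of application you invoke is axiom~1 of the definition of applicative structure (available here via Proposition~\ref{prop:implicativetoapplicative}), not Proposition~\ref{prop:applicative}(1), which concerns $\leadsto$.
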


The combinators $\K$, $\W$, and $\cS$ represent operations that delete or duplicate premises, and therefore should not be allowed without control in linear logic. This prevents us from including them in our definition of separator. So we restrict ourselves to a class of $\lambda$-terms large enough to represent the deduction rules of (multiplicative) linear logic without enclosing structural rules. 

\begin{definition}
    We will call ‘linear $\lambda$-term’ any $\lambda$-term where every abstraction binds exactly one variable and every free variable appears at most once.
\end{definition}

We will see that this is a large enough part of the $\lambda$-terms by showing that the typing rules we will use only produce such terms, and that it is small enough by giving an example of a linear model.

This subsection is dedicated to show that linear $\lambda$-terms are generated by $\I$, $\B$, and $\C$.

\begin{remark}
    Linear $\lambda$-terms are strongly normalisable.
\end{remark}


\begin{definition}
    A linear combinatory term is a linear $\lambda$-term which is either $\B$, $\C$, $\I$, a free variable or an application of linear combinatory terms.
\end{definition}

Our objective is to establish that each linear $\lambda$-term can be \mbox{$\beta$-expanded} to a linear combinatory term. In order to prove this theorem we will need some more combinators : 
$$
\begin{tabular}{ccc}
    $\I_1 = \I$ & $\quad$ &
    $\I_{n+1} = \B\I_n$ \\
    $\B_1 = \B$ & $\quad$ &
    $\B_{n+1} = \B\B_n$ \\
    $\C_1 = \C$ & $\quad$ &
    $\C_{n+1} = \B\C_n$ \\
\end{tabular}
$$
With some calculation one can establish that:
\begin{equation*}
    \I_n \bred \lambda \overline{x}.\overline{x} \quad
    \B_n \bred \lambda \overline{x}yz.\overline{x}(yz) \quad
    \C_n \bred \lambda \overline{x}yz.\overline{x}zy 
\end{equation*}
where $\overline{x} = x_1 \ldots x_n$.

We will now consider \(\lambda\)-terms representing permutations to establish the main theorem.

\begin{definition}
    Let $\sigma \in \perm_n$ be a permutation, we define the $\lambda$-term of this permutation as
    $$
    \lambda_\sigma := \lambda xy_1 \ldots y_n.xy_{\sigma^{-1}(1)} \ldots y_{\sigma^{-1}(n)}
    $$
\end{definition}

Let us consider two permutations $\sigma\in\perm_n$ and $\tau\in\perm_m$ that coincide on their shared support, that is either the restriction of $\sigma$ to $\{1,\dots,m\}$ is equal to $\tau$ or the restriction of $\tau$ to $\{1,\dots,n\}$ is equal to $\sigma$. Then $\sigma$ and $\tau$ can be represented by the same $\lambda$-term.

\begin{proposition}\label{prop:permutations}
   %
    Formally if $\sigma \in \perm_m$ and $\tau \in \perm_n$ are such that $m < n$, $\tau(i) = \sigma(i)$ for $i \leq n$, and  $\tau(i) = i$ for $i > n$, then for all $\lambda$-terms $t, t_1, ..., t_n$, 
    $$
        \lambda_\sigma t t_1 ... t_n =_\beta \lambda_\tau t t_1 ... t_n
    $$
\end{proposition}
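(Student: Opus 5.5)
The plan is a direct computation: $\beta$-reduce both sides to a common term. I read the hypothesis as intended: $\tau(i) = \sigma(i)$ for $i \leq m$ and $\tau(i) = i$ for $m < i \leq n$. The condition ``$i \leq n$'' in the statement must be a typo for $i \leq m$, since $\sigma$ is only defined on $\{1,\dots,m\}$.

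\emph{Step 1: inverse of $\tau$.} Since $\sigma$ is a bijection of $\{1,\dots,m\}$, $\tau$ maps $\{1,\dots,m\}$ onto itself, acting there as $\sigma$. It fixes every $i$ with $m < i \leq n$. Hence $\tau^{-1}(i) = \sigma^{-1}(i)$ for $i \leq m$ and $\tau^{-1}(i) = i$ for $m < i \leq n$.

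\emph{Step 2: reduce the left-hand side.} Application associates to the left, so
\[
\lambda_\sigma t t_1 \ldots t_n = (\lambda_\sigma t t_1 \ldots t_m)\, t_{m+1} \ldots t_n .
\]
The head redex $\lambda_\sigma t t_1 \ldots t_m$ consumes exactly its $m+1$ abstractions. It reduces by $m+1$ head $\beta$-steps to $t\, t_{\sigma^{-1}(1)} \ldots t_{\sigma^{-1}(m)}$. Because application is compatible with $\beta$, the left-hand side therefore reduces to
\[
t\, t_{\sigma^{-1}(1)} \ldots t_{\sigma^{-1}(m)}\, t_{m+1} \ldots t_n .
\]
There is no variable capture, since the bound variables $x, y_1, \ldots, y_m$ of $\lambda_\sigma$ can be chosen fresh with respect to $t, t_1, \ldots, t_n$.

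\emph{Step 3: reduce the right-hand side.} In the same way, $\lambda_\tau t t_1 \ldots t_n$ reduces by $n+1$ head $\beta$-steps to $t\, t_{\tau^{-1}(1)} \ldots t_{\tau^{-1}(n)}$. By Step 1 this is syntactically the same term as the one obtained in Step 2, which gives the $\beta$-equality.

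No step is a real obstacle. The only points needing care are the correction of the index range in the hypothesis and the observation in Step 1 that $\tau^{-1}$ restricts to $\sigma^{-1}$, which uses that $\sigma$ permutes $\{1,\dots,m\}$.
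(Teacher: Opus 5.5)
Your proposal is correct and follows the same route as the paper's proof: both sides are $\beta$-reduced to the common term $t\, t_{\sigma^{-1}(1)} \ldots t_{\sigma^{-1}(m)}\, t_{m+1} \ldots t_n$, using that $\tau^{-1}$ agrees with $\sigma^{-1}$ on $\{1,\dots,m\}$ and fixes the remaining indices. Your corrected reading of the hypothesis ($\tau(i)=\sigma(i)$ for $i \leq m$, $\tau(i)=i$ for $m < i \leq n$) is the one the paper's computation implicitly relies on.
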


This allows us to consider only $\sigma \in  \perm_n$ where $n$ is the greatest element in the support of $\sigma$, as $\lambda_\sigma$ will have the correct behavior even on bigger sets. In particular, we write permutations as products of cycles without worrying about the domains. 

\begin{proposition}\label{prop:combinatorypermutations}
    For each $\sigma \in \perm$, there is a closed linear combinatory term $t$ such that $t \bred \lambda_\sigma$.
\end{proposition}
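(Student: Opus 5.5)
We proceed by decomposing $\sigma$ into adjacent transpositions and building each piece from the combinators $\C_n$, $\B$, $\I$. Every permutation of $\perm_n$ is a product of adjacent transpositions $s_k = (k\ k{+}1)$. Proposition~\ref{prop:permutations} lets us work with whatever ambient $n$ is convenient. So it suffices to show two things: each $\lambda_{s_k}$ is the $\beta$-reduct of a closed linear combinatory term, and the class of such permutations is closed under composition, with the identity also handled.

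\emph{Identity and generators.} For the identity we take $\I$: indeed $\lambda_{\mathrm{id}} = \lambda x y_1\ldots y_n. x y_1 \ldots y_n$ is $\eta$-equivalent to $\I$. To avoid relying on $\eta$, we instead treat $\lambda_\sigma$ only through its behaviour on $n+1$ arguments, as Proposition~\ref{prop:permutations} already does. If the statement is meant literally, we take the identity term to be $\I_{n+1}$ for a suitable $n$, since $\I_{n+1} \bred \lambda x y_1\ldots y_n. x y_1\ldots y_n$ (with $x = x_1$ and $y_i = x_{i+1}$). For the transposition $s_k$ we use $\C_k$. By the computation above, $\C_k \bred \lambda x_1\ldots x_k y z. x_1\ldots x_k z y$. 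Renaming $x_1$ as $x$ and $x_2,\ldots,x_k$ as $y_1,\ldots,y_{k-1}$, this is exactly $\lambda x y_1\ldots y_{k+1}. x y_1 \ldots y_{k-1} y_{k+1} y_k = \lambda_{s_k}$.

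\emph{Composition.} Given $\lambda_\sigma$ and $\lambda_\tau$ in $\perm_n$, we claim $\B\lambda_\sigma\lambda_\tau$ behaves as $\lambda_{\tau\sigma}$ or $\lambda_{\sigma\tau}$ (we fix the order by a direct check). Compute
\[
\B\lambda_\sigma\lambda_\tau\, x\, y_1\ldots y_n \bred \lambda_\sigma(\lambda_\tau x)\, y_1\ldots y_n \bred \lambda_\tau x\, y_{\sigma^{-1}(1)}\ldots y_{\sigma^{-1}(n)}.
\]
Write $z_i = y_{\sigma^{-1}(i)}$. The last term reduces further to $x\, z_{\tau^{-1}(1)}\ldots z_{\tau^{-1}(n)} = x\, y_{\sigma^{-1}\tau^{-1}(1)}\ldots = x\, y_{(\tau\sigma)^{-1}(1)}\ldots$. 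Hence, after abstracting, we obtain $\lambda_{\tau\sigma}$.

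The main obstacle is that $\B t u$ is not literally a $\lambda$-abstraction over $x,y_1,\ldots,y_n$. Its $\beta$-normal form is $\lambda x.\,t(ux)$ with $t$ and $u$ normalised, and this reduces to $\lambda x y_1\ldots y_n.\cdots$ only if $t$ and $u$ are themselves abstractions of the right arity. They are: $\lambda_\sigma(\lambda_\tau x)$ reduces under the binder, because $\lambda_\sigma = \lambda x' y_1\ldots y_n. x' \ldots$ consumes $\lambda_\tau x$ and leaves $\lambda y_1\ldots y_n.(\lambda_\tau x) y_{\sigma^{-1}(1)}\ldots$. This then reduces to $\lambda y_1\ldots y_n. x\,\ldots$. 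So $\B \lambda_\sigma\lambda_\tau \bred \lambda_{\tau\sigma}$ with all reductions happening under $\lambda x$, provided both permutations are taken in a common $\perm_n$. To arrange the common arity we pad: $\C_k$ only exposes $k+1$ binders after $x$. We therefore must either replace $\C_k$ by a version with more trailing binders, or invoke Proposition~\ref{prop:permutations} and accept $=_\beta$ on applied forms. Our plan is to pick $n$ as the maximum of the supports. We then use that $\lambda_{s_k}$ in $\perm_n$ is obtained as $\B\,\lambda_{s_k}^{(k+1)}\,\I_{n}$-like padding, or, more simply, we note that the trailing arguments are passed through untouched, so the reduction of $\B t u$ still goes through by the same computation. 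Finally, linearity and closedness are preserved because $\B$, $\C_k$ and $\I$ are closed linear combinatory terms and application preserves both properties. Induction on the length of the decomposition into adjacent transpositions then concludes the proof.
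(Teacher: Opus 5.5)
Your proposal is correct and is essentially the paper's proof: realise each adjacent transposition $(k,k+1)$ by $\C_k \bred \lambda_{(k,k+1)}$, compose via $\B\lambda_\sigma\lambda_\tau \bred \lambda_{\tau\sigma}$, and use that adjacent transpositions generate $\perm_n$. Your treatment of the identity (via $\I_{n+1}$) and of mismatched arities is extra care that the paper omits. It is sound, because $\B\lambda_\sigma\lambda_\tau$ reduces to $\lambda_{\tau\sigma}$ at the larger of the two arities whichever factor is shorter; note that padding to arity $n$ uses $\I_{n+1}$, not $\I_n$.
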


\begin{proof}
    For all $\sigma, \tau \in \perm_n$, we can compute their composition: 
    \begin{align*}
        \B\lambda_\sigma \lambda_\tau 
            & \bred\lambda x. \lambda_\sigma (\lambda_\tau x) \\
            & \bred \lambda x y_1 \ldots y_n.\lambda_\tau x y_{\sigma^{-1}(1)} \ldots y_{\sigma^{-1}(n)} \\
            & \bred \lambda x y_1 \ldots y_n.x y_{\sigma^{-1}\tau^{-1}(1)} \ldots y_{\sigma^{-1}\tau^{-1}(n)} \\
            & \bred \lambda x y_1 \ldots y_n.x y_{(\tau\sigma)^{-1}(1)} \ldots y_{(\tau\sigma)^{-1}(n)} = \lambda_{\tau\sigma}.
    \end{align*}
    The result then comes from the fact that $\C_i = \lambda_{(i,i+1)}$, since the set $\{(i,i+1)\}_{i < n}$ generates $\perm_n$.
\end{proof}

We add the followings combinators to our list, for $l \leq k < n$ :
\begin{align*}
    \C_{l,k} & = \B\C_l(\B\C_{l+1}\ldots(\B\C_{k})\ldots) \\
    \B_{k,n} & = \underbrace{\B\B_k(\B\B_k\ldots(\B\B_k)\ldots)}_{n-k}\\
    \cmb_{k,n} & = \C_{2,k+1} \B_{k+1,n+1}
\end{align*}
and check that :
\begin{align*}
    \C_{l,k} & \bred \lambda_{(k+1, k,\ldots,l)} \\
    \B_{k,n} & \bred \lambda xy_1\ldots y_n.xy_1\ldots y_{k-1}(y_k \ldots y_n) \\
    \cmb_{k,n} 
    & \bred \lambda x y z_1 \ldots z_k.\B_{k+1,n+1} x z_1 \ldots z_k y\\
    & \bred \lambda x y z_1 \ldots z_n.x z_1 \ldots z_k (y z_{k+1} \ldots z_n)
\end{align*}

\begin{lemma}\label{lem:linearcombiterms}
    Let $t$ be a linear combinatory term. For all $x_1, \ldots, x_n \in \fv(t)$, there is a linear combinatory terms $t_0$ such that $t_0 \bred \lambda x_1 \ldots x_n.t$.
\end{lemma}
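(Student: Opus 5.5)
The plan is to reduce the lemma to the case of a single variable and then run a linear version of bracket abstraction by induction on the structure of $t$.

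\emph{Reduction to one variable.} Suppose that for every linear combinatory term $t$ and every $x \in \fv(t)$ there is a linear combinatory term $t'$ with $\fv(t') = \fv(t)\setminus\{x\}$ and $t' \bred \lambda x.t$. We then argue by induction on $n$.
\begin{itemize}
\item First apply the one-variable case to $x_n$, obtaining $t_n$ with $t_n \bred \lambda x_n.t$.
\item The term $t_n$ is again a linear combinatory term, and it still contains $x_1,\ldots,x_{n-1}$ free. By the induction hypothesis there is $t_0$ with $t_0 \bred \lambda x_1\ldots x_{n-1}.t_n$.
\item Since $\beta$-reduction is closed under abstraction, $\lambda x_1\ldots x_{n-1}.t_n \bred \lambda x_1\ldots x_n.t$, and we conclude by transitivity.
\end{itemize}
For this step we must make sure that the one-variable construction preserves the free variables other than $x$. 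This also keeps $t_n$ linear.

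\emph{One variable, by induction on $t$.} Since $x \in \fv(t)$, the term $t$ is neither $\B$, $\C$, $\I$, nor a variable other than $x$. Two cases remain.
\begin{itemize}
\item If $t = x$, take $t' = \I$.
\item If $t = uv$, then by linearity $x$ occurs in exactly one of $u$ and $v$, and $u,v$ have disjoint free variables.
  \begin{itemize}
  \item If $x \in \fv(v)$, the induction hypothesis gives $v'$ with $v' \bred \lambda x.v$. Set $t' = \B u v'$; then $t' \bred \lambda z.u(v'z) \bred \lambda x.uv$.
  \item If $x \in \fv(u)$, the induction hypothesis gives $u'$ with $u' \bred \lambda x.u$. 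Set $t' = \C u' v$; then $t' \bred \lambda z.u'zv \bred \lambda x.uv$.
  \end{itemize}
\end{itemize}
In both subcases $t'$ is an application of linear combinatory terms. Its free variables are exactly $\fv(t)\setminus\{x\}$, each occurring once, because $u$ and $v$ (or $u'$ and $v$) keep disjoint free variables. So $t'$ is again a linear combinatory term.

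\emph{Expected obstacle.} The main difficulty is bookkeeping rather than conceptual. The induction hypotheses must carry the invariant on free variables, which is what guarantees that the result is a \emph{linear} combinatory term. We must also handle $\alpha$-renaming of the bound variable $z$ and reduction under $\lambda$ in the last step of each case, which relies on the confluence and congruence properties of $\bred$.

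We do not need the combinators $\cmb_{k,n}$ here; they presumably serve the subsequent theorem. The only inputs from earlier are the reduction behaviour of $\B$, $\C$ and $\I$.
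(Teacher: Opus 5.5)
Your proof is correct, but it takes a different route from the paper's.

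\textbf{The paper's route.} It abstracts all $n$ variables at once at each application node. For $t = uv$, it first reorders $x_1,\ldots,x_n$ using the permutation combinators of Proposition~\ref{prop:combinatorypermutations}, so that $x_1,\ldots,x_k$ lie in $u$ and the rest lie in $v$. Induction then gives $u_0 \bred \lambda x_1\ldots x_k.u$ and $v_0 \bred \lambda x_{k+1}\ldots x_n.v$. A single $n$-ary combinator $\cmb_{k,n}$ finishes the case, with $\cmb_{k,n}u_0v_0 \bred \lambda x_1\ldots x_n.\,u_0x_1\ldots x_k(v_0x_{k+1}\ldots x_n)$. So the $\cmb_{k,n}$ are used in this very lemma, not only in the theorem that follows.

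\textbf{Your route.} You perform classical one-variable bracket abstraction for $\B\C\I$: $[x]x=\I$, and $[x](uv)$ is $\B u([x]v)$ or $\C([x]u)v$ depending on where $x$ occurs. You then iterate this over $x_n,\ldots,x_1$. The invariant $\fv(t')=\fv(t)\setminus\{x\}$ carries linearity and the remaining free variables from one abstraction to the next.

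\textbf{What your route buys.} It is more elementary: it uses only the reduction behaviour of $\B,\C,\I$. It avoids the auxiliary combinators $\C_{l,k},\B_{k,n},\cmb_{k,n}$ and the reordering step entirely. It also handles uniformly the degenerate splits, where all abstracted variables lie in $u$ or all lie in $v$. The paper's side conditions ($l\le k<n$) on these combinators do not explicitly cover those splits. Finally, it makes explicit the free-variable bookkeeping that the paper leaves implicit.

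\textbf{What the paper's route buys.} At each application node it produces one combinator for the whole block $x_1,\ldots,x_n$, rather than $n$ nested layers of $\B$ or $\C$. This is a more direct $n$-ary abstraction.

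\textbf{Minor points.} The paper's extra case $t=\lambda y.u$ is vacuous, since a linear combinatory term never has that top-level form; your case analysis rightly omits it. Your final step needs only compatibility of $\bred$ with abstraction and application, plus transitivity and $\alpha$-renaming. Confluence is not used.
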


\begin{proof}
    The proof is by induction on the structure of $t$.
    If $t$ is a free variable, we take $\I$.
    If $t = uv$, then for all $x_1, \ldots, x_n \in \fv(t)$, $i \leq n$, either $x_i \in \fv(u)$ or $x_i \in \fv(v)$.
    By Proposition \ref{prop:combinatorypermutations}\footnote{Take a linear combinatory term $T \bred \lambda_\sigma$ and then use $(T\cmb_{k,n})$ instead of $\cmb_{k,n}$.} we can suppose that there is a $k \leq n$ such that $x_1, \ldots, x_k \in \fv(u)$ and $x_{k+1}, \ldots, x_n \in \fv(v)$.
    By induction, there are two linear combinatory terms $u_0 \bred \lambda x_1 \ldots x_k.u$ and $v_0 \bred \lambda x_{k+1} \ldots x_n.v$. Then we have
    \begin{align*}
        \cmb_{k,n}u_0v_0 
        & \bred \lambda x_1 \ldots x_n.u_0 x_1 \ldots x_k (v_0 x_{k+1} \ldots x_n) \\
        & \bred \lambda x_1 \ldots x_n.uv \\
        & = \lambda x_1 \ldots x_n.t
    \end{align*}
    If $t = \lambda y.u$ then for all $x_1, \ldots, x_n \in \fv(t)$, $x_1, \ldots, x_n, y \in \fv(u)$, so by induction there is a linear combinatory term $u_0 \bred \lambda x_1 \ldots x_n y.u = \lambda x_1 \ldots x_n.t$. 
\end{proof}

\begin{theorem}\label{thm:linearcombinatory}
    For each linear $\lambda$-term $t$, there is a linear combinatory term $t_0$ such that $t_0 \bred t$.
\end{theorem}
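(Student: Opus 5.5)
We prove the theorem by structural induction on the linear $\lambda$-term $t$, with Lemma~\ref{lem:linearcombiterms} doing the work in the abstraction case. The statement is stronger than what the induction needs, so we note what kind of term each case produces. Throughout, we use that $\bred$ is compatible with application and abstraction: if $u_0 \bred u$ and $v_0 \bred v$, then $u_0 v_0 \bred uv$.

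\emph{Variable.} If $t = x$ is a variable, then $t$ is already a linear combinatory term, and we take $t_0 = x$.

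\emph{Application.} If $t = uv$, linearity of $t$ gives that $u$ and $v$ are linear and that $\fv(u) \cap \fv(v) = \emptyset$. By induction we obtain linear combinatory terms $u_0 \bred u$ and $v_0 \bred v$, and we set $t_0 = u_0 v_0$. This is an application of linear combinatory terms. To see that it is again a linear $\lambda$-term, we need $\fv(u_0) \cap \fv(v_0) = \emptyset$. This is the point where the induction hypothesis must be strengthened: we also require $\fv(t_0) = \fv(t)$. The strengthening holds in the variable case. It is preserved in the application case, since $\fv(u_0 v_0) = \fv(u) \cup \fv(v)$.

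\emph{Abstraction.} If $t = \lambda y.u$, linearity gives $y \in \fv(u)$. By induction we get a linear combinatory $u_0 \bred u$ with $\fv(u_0) = \fv(u)$, so $y \in \fv(u_0)$. Applying Lemma~\ref{lem:linearcombiterms} with the single variable $y$ gives a linear combinatory term $t_0 \bred \lambda y.u_0$. Since $u_0 \bred u$, we get $\lambda y.u_0 \bred \lambda y.u = t$, and by transitivity $t_0 \bred t$.

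The main obstacle is maintaining $\fv(t_0) = \fv(t)$ through the abstraction case, which requires checking that the lemma produces $t_0$ with $\fv(t_0) = \fv(u_0) \setminus \{y\}$. Inspecting the lemma's proof, the term it builds has the form $\cmb_{k,n} u_1 v_1$ (or $T\cmb_{k,n} u_1 v_1$, with $T$ a permutation term), built from closed combinators and the recursively abstracted subterms. So its free variables are exactly those of $t$ minus the abstracted ones. In the lemma's own abstraction case it simply reuses the term obtained for $u$.

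Two further details need care there. First, the variable case of the lemma, $\lambda x.x$ realised by $\I$, should also cover the case of abstracting zero variables, where we return $t$ itself. Second, the closed combinators $\B$, $\C$, $\I$ are linear. I would therefore state and prove this free-variable invariant as a small addendum to Lemma~\ref{lem:linearcombiterms}, by the same induction, and then the theorem follows.
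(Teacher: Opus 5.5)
Your proof is correct and follows the paper's argument: structural induction on $t$, taking $u_0v_0$ in the application case and applying Lemma~\ref{lem:linearcombiterms} to $u_0$ with the single variable $y$ in the abstraction case. The invariant $\fv(t_0)=\fv(t)$ that you add is left implicit in the paper, though it is needed for $u_0v_0$ to be linear and for $y\in\fv(u_0)$; it actually comes for free, so no addendum to the lemma is required, because every binder in $\B$, $\C$, $\I$ uses its variable exactly once and hence $\beta$-reduction from a linear combinatory term neither creates nor erases free variables.
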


\begin{proof}
    The proof is by induction on the structure of $t$.
    If $t$ is a free variable, the result holds.
    If $t = uv$, then by induction there exists two linear combinatory terms $u_0,v_0$ such that $u_0 \bred u$ and $v_0 \bred v$, and $t_0 = u_0v_0$ is a linear combinatory terms such that $t_0 \bred t$.
    If $t = \lambda x.u$, by induction there exists a linear combinatory term $u_0 \bred u$, and the preceding lemma ensures that there exists an other linear combinatory term $t_0 \bred \lambda x.u_0 \bred t$.
\end{proof}

\subsection{Separation}

We now suppose given an implicative algebra. We will define the notion of linear separator.

\begin{definition}
    A \emph{linear separator} is a subset $S \subseteq \A$ such that:
    \begin{enumerate}
        \item If $a \in S$ and $a \preq b$ then $b \in S$; \label{it1}
        \item $\I^\A, \B^\A, \C^\A \in S$;
        \item If $a \in S$ and $(a \li b) \in S$ then $b \in S$. \label{it3}
    \end{enumerate}
    We say that $S$ is consistent if $\bot \not\in S$.
\end{definition}

\begin{remark}
    Item \ref{it1} allow us to reformulate item \ref{it3} as:
    \begin{enumerate}
        \item[3'.] If $a \in S$ and $b \in S$ then $ab \in S$.
    \end{enumerate}
\end{remark}


    Let us note that any separator in the sense of Miquel \citep{DBLP:journals/mscs/Miquel20} is in particular a linear separator. It should be clear, however, that the converse does not hold.

\begin{proposition}[linear $\lambda$-closure]\label{prop:lambdaclosure}
    If $S \subseteq \A$ is a linear separator, then for all linear $\lambda$-term $t$ with free variables $\overline{x}$ and for all parameters $\overline{a} \in S$:
    \[
        (t\{\overline{x} := \overline{a}\})^\A \in S
    \]
    In particular for all closed linear $\lambda$-term $t$, $t^\A \in S$.
\end{proposition}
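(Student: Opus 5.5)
The strategy is to reduce to linear combinatory terms via Theorem~\ref{thm:linearcombinatory}, and then argue by induction on the structure of combinatory terms using closure under application. Let $t$ be a linear $\lambda$-term with free variables $\overline{x}$, and let $\overline{a} \in S$. By Theorem~\ref{thm:linearcombinatory} there is a linear combinatory term $t_0$ with $t_0 \bred t$. Since $t_0$ $\beta$-reduces to $t$, its free variables are among those of $t$, so $t_0\{\overline{x} := \overline{a}\}$ is a closed term with parameters. I will then use the standard fact from Miquel's framework that $\beta$-reduction is sound for the interpretation: if $u \bred u'$ then $u^\A \preq u'^\A$. This should follow from the semantic typing rules of Proposition~\ref{prop:semantictyping}, or directly from $(\lambda x.u)^\A\, v^\A \preq u\{x:=v\}^\A$ together with monotonicity of application and abstraction. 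The inequality is preserved under the substitution of parameters, so
\[
(t_0\{\overline{x} := \overline{a}\})^\A \preq (t\{\overline{x} := \overline{a}\})^\A .
\]
By upward closure (item~\ref{it1}), it therefore suffices to show that $(t_0\{\overline{x} := \overline{a}\})^\A \in S$.

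I prove this by induction on the linear combinatory term $t_0$.
\begin{itemize}
\item If $t_0$ is one of $\I$, $\B$ or $\C$, its interpretation is $\I^\A$, $\B^\A$ or $\C^\A$, which belongs to $S$ by the second axiom of a linear separator.
\item If $t_0$ is a free variable $x_i$, its substitution instance is $a_i \in S$.
\item If $t_0 = uv$, the interpretation of the substituted term is $(u\{\ldots\})^\A\,(v\{\ldots\})^\A$, since interpretation commutes with application. Both factors are in $S$ by induction, so their application is in $S$ by item~3' of the remark.
\end{itemize}
Linearity plays no role in this induction; it was only needed in order to invoke Theorem~\ref{thm:linearcombinatory}. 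The particular case of a closed linear term $t$ is immediate, taking the empty substitution.

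The main obstacle I expect is the soundness of $\beta$-reduction, that is, the monotonicity $u \bred u' \Rightarrow u^\A \preq u'^\A$ for terms with parameters. This requires checking that one-step $\beta$-reduction in an arbitrary context decreases the interpretation, using:
\begin{itemize}
\item $(\lambda x.u)^\A\, b \preq u\{x:=b\}^\A$, which holds by Proposition~\ref{prop:applicative}(3) and the definition of abstraction as $\bigmeet_b (b \li u\{x:=b\}^\A)$;
\item monotonicity of application and of abstraction in their bodies.
\end{itemize}
If this is not already available from Miquel's paper, I would prove it by induction on the context of the redex.
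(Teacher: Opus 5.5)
Your argument is correct and is essentially the paper's. The paper applies Theorem~\ref{thm:linearcombinatory} to the closed linear term $\lambda\overline{x}.t$, gets $t_0^\A\,\overline{a}\in S$ from axioms 2 and 3', and concludes by soundness of $\beta$-reduction and upward closure; you apply the theorem to the open term $t$ and substitute $\overline{a}$ into $t_0$, which is the same argument. One small correction: $t_0 \bred t$ by itself only gives $\fv(t)\subseteq\fv(t_0)$, since reduction can erase variables, so the inclusion $\fv(t_0)\subseteq\fv(t)$ that you need should be justified by the linearity of $t_0$ (no variable is erased during reduction), or avoided by substituting $\I^\A$ for any extra variables.
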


(TODO: keep the following as a separate definition, or push it into the def. of linear separator?)

\begin{definition}
    We call linear separator generated by $X$ and write $\lsep(X)$ the smallest linear separator containing $X$, and we define the linear core of $\A$ as $\lcore(\A) := \lsep(\emptyset)$.
\end{definition}

\begin{remark}
    It is straightforward from \autoref{thm:linearcombinatory} that 
    \[ \lsep(X) = \uacl(X \cup \{\I^\A, \B^\A, \C^\A\}), \] were $@$ denote the applicative closure, and $\uparrow$ the closure w.r.t. $\beta$-expansion (which we will sometimes call \emph{upward closure}).
\end{remark}

\begin{lemma}
    For every linear separator $S \subseteq \A$:
    \[
        (a \li b) \in S \implies b \in \lsep(S \cup \{a\}).
    \]
\end{lemma}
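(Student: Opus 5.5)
This is essentially modus ponens inside the enlarged separator. I would work in $S' := \lsep(S \cup \{a\})$, the smallest linear separator containing both $S$ and $a$, and show that it contains both premises of modus ponens. Item~\ref{it3} of the definition of linear separator then gives the conclusion.

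The steps, in order, are as follows.
\begin{enumerate}
  \item Since $S \subseteq S \cup \{a\} \subseteq S'$ by definition of the generated separator, the hypothesis $(a \li b) \in S$ gives $(a \li b) \in S'$.
  \item Also by definition, $a \in S'$.
  \item Because $S'$ is a linear separator, item~\ref{it3} (closure under modus ponens) applied to $a \in S'$ and $(a \li b) \in S'$ yields $b \in S'$, which is the claim.
\end{enumerate}
Equivalently, via the reformulation 3', one has $(a \li b)\,a \in S'$. By Proposition~\ref{prop:applicative} (item 3), transported through Proposition~\ref{prop:implicativetoapplicative}, $(a \li b)\,a \preq b$, and upward closure then gives $b \in S'$.

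The only point needing care is that $\lsep(S \cup \{a\})$ is well defined, i.e.\ that the smallest linear separator containing a given set exists. I would check that an arbitrary intersection of linear separators is again a linear separator: each of the three conditions (upward closure, containing $\I^\A, \B^\A, \C^\A$, and modus ponens) is preserved under intersection. The whole lattice $\A$ is itself a linear separator, so the intersection is over a nonempty family. Alternatively, the Remark describing $\lsep(X)$ as $\uacl(X \cup \{\I^\A,\B^\A,\C^\A\})$ already gives a concrete construction. Apart from this, there is no real obstacle: the argument is a direct unfolding of the definitions, and it does not need any structural rule such as $\K$ or $\W$, so it holds in the linear setting.
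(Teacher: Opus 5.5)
Your proof is correct and is the argument the paper intends: $a$ and $a \li b$ both lie in $\lsep(S \cup \{a\})$, so closure under modus ponens (item 3 of the definition of linear separator) puts $b$ there too. The paper states this lemma without a written proof. Your check that $\lsep$ is well defined (an intersection of linear separators is again one, and $\A$ itself is one) is correct and fills in a detail the paper leaves implicit.
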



Note that we only have an implication here, because $b \in \lsep(S \cup \{a\})$ means that $a$ can be deduced from formulas in $S \cup \{a\}$. It does not make any assumption on how many time each formula is used.
This illustrate a fundamental difference between contexts and theories in linear logic, that does arise in the context of intuitionnistic or classical logic.

\subsection{Interpreting MLL}

We now focus on the interpretation of multiplicative linear logic. We restrict to the case of intuitionnistic linear logic, as the classical variant requires additional structure described in the remark below.

\begin{definition}
Let $(\A, \preq, \li)$ be a linear implicative structure. We define:
\[
    a \tens b := \bigmeet_{c \in \A} ((a \li b \li c) \li c).
\]
\end{definition}

One can then show that this tensor product validates the multiplicative rules for conjunction.

\begin{proposition}\label{prop:semantictyping:tensor}
    The following semantic typing inferences are valid in any linear implicative structure:
    \[
        \AxiomC{$\Gamma \vdash t : a$}
        \AxiomC{$\Delta \vdash u : b$}
        \RightLabel{$\tens$-R}
        \BinaryInfC{$\Gamma,\Delta \vdash \lambda z.ztu : a \tens b$}
    \DisplayProof
    \quad
        \AxiomC{$\Gamma, x : a, y : b \vdash t : c$}
        \RightLabel{$\tens$-L}
        \UnaryInfC{$\Gamma, z : a \tens b \vdash z(\lambda xy.t) : c$}
    \DisplayProof
    \]
\end{proposition}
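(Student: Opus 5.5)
We derive both inferences from the semantic typing rules of Proposition~\ref{prop:semantictyping} (Axiom, Subsumption, Context subsumption, $\li$-R, Cut, $\li$-L, Generalization) together with the definition $a \tens b := \bigmeet_{c} ((a \li b \li c) \li c)$. The point is that no rule duplicating or erasing hypotheses is needed. The contexts are simply concatenated, matching the linearity of the terms $\lambda z.ztu$ and $z(\lambda xy.t)$.

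For $\tens$-R, we aim to show $\Gamma,\Delta \vdash \lambda z.ztu : (a \li b \li c) \li c$ for every $c \in \A$, and then apply Generalization to reach the meet $a \tens b$.
\begin{enumerate}
\item Fix $c$. By Axiom, $z : a \li b \li c \vdash z : a \li b \li c$.
\item Using $\Gamma \vdash t : a$, we want $\Gamma, z : a\li b\li c \vdash zt : b \li c$. The $\li$-L rule only produces the term $(\lambda w.w')(zt)$, so we argue directly: by monotonicity of application (Proposition~\ref{prop:applicative} via Proposition~\ref{prop:implicativetoapplicative}) and $(a\li d)a \preq d$, we get $(z t)[\Gamma, z:=a\li b\li c] \preq (a\li b\li c)\,a \preq b\li c$. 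The same argument then gives $\Gamma,\Delta, z : a\li b\li c \vdash ztu : c$.
\item Alternatively, we can use $\li$-L twice with an axiom, obtaining the $\beta$-expanded term $(\lambda w.w)(\ldots)$. Since the interpretation of $\lambda$-terms in $\A$ is only monotone along $\beta$-reduction, in the sense that $t \bred t'$ implies $t^\A \preq t'^\A$, we would still need to pass to the smaller term. This is why we prefer the direct inequality of step 2.
\item Apply $\li$-R to abstract $z$. This gives $\Gamma,\Delta \vdash \lambda z.ztu : (a\li b\li c)\li c$.
\end{enumerate}

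For $\tens$-L, we start from $\Gamma, x:a, y:b \vdash t : c$.
\begin{enumerate}
\item Apply $\li$-R twice. This gives $\Gamma \vdash \lambda xy.t : a \li b \li c$.
\item Since $a\tens b \preq (a\li b\li c)\li c$ (the meet is below each of its components), Context subsumption lets us treat $z$ as having type $(a\li b\li c)\li c$.
\item Apply the application inequality $(d\li e)\,d \preq e$ together with monotonicity. This gives $z(\lambda xy.t)[\Gamma, z:=a\tens b] \preq ((a\li b\li c)\li c)(\lambda xy.t)[\Gamma] \preq c$.
\end{enumerate}

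The main obstacle is the mismatch between the listed rules and the plain application terms. The rules $\li$-L and Cut produce redexes rather than bare applications, so we rely on an application rule: if $\Gamma \vdash t : a\li b$ and $\Delta \vdash u : a$, then $\Gamma,\Delta \vdash tu : b$. This rule follows immediately from Proposition~\ref{prop:applicative}(3), monotonicity of application, and the fact that substitution distributes over application, using that the domains of $\Gamma$ and $\Delta$ are disjoint.

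Some care is also needed with variable capture. In $\lambda z.ztu$ we must choose $z \notin \fv(t)\cup\fv(u)$. Similarly, $x,y$ are bound in $\lambda xy.t$, so substituting $\Gamma$ commutes with the abstraction.
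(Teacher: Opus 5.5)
Your proof is correct and is essentially the paper's argument. The paper unfolds $((\lambda z.ztu)[\Gamma,\Delta])^\A$ and $((z(\lambda xy.t))[\Gamma, z := a \tens b])^\A$ directly, using monotonicity of application and $(a \li b)\,a \preq b$, whereas you route the same inequalities through $\li$-R, Generalization, Context subsumption and a derived application rule; your double use of $\li$-R in $\tens$-L is slightly cleaner, since the paper's chain bounds $\bigmeet_{a_0,b_0}(a_0 \li b_0 \li (t[\Gamma, x := a_0, y := b_0])^\A)$ by $\bigmeet_{a_0,b_0}(a_0 \li b_0 \li c)$ although the premise only justifies, and the argument only needs, the instance $a_0 = a$, $b_0 = b$.
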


We can thus interpret intuitionnistic multiplicative linear logic (IMLL) without unit inside an implicative structure $(\A, \preq, \li)$. For this, we will use the following syntax.

Formulas are defined from a set of variables $\varset$ by the following grammar:
\[
    A ::= X \mid A \li A \mid A \tens A, \quad (X\in\varset).
\]
The deduction rules for IMLL are given in Figure \autoref{fig:IMLL}. 

\begin{figure}
\[
\begin{array}{cc}
    \AxiomC{}
    \RightLabel{ax}
    \UnaryInfC{$A \vdash A$}
\DisplayProof
&
    \AxiomC{$\Gamma \vdash A$}
    \AxiomC{$\Delta, A \vdash C$}
    \RightLabel{cut}
    \BinaryInfC{$\Gamma,\Delta \vdash C$}
\DisplayProof
\\ \ \\
    \AxiomC{$\Gamma \vdash A$}
    \AxiomC{$\Delta \vdash B$}
    \RightLabel{$\tens$-R}
    \BinaryInfC{$\Gamma,\Delta \vdash A \tens B$}
\DisplayProof
&
    \AxiomC{$\Gamma, A, B \vdash C$}
    \RightLabel{$\tens$-L}
    \UnaryInfC{$\Gamma, A \tens B \vdash C$}
\DisplayProof
\\ \ \\
    \AxiomC{$\Gamma, A \vdash B$}
    \RightLabel{$\li$-R}
    \UnaryInfC{$\Gamma \vdash A \li B$}
\DisplayProof
&
    \AxiomC{$\Gamma \vdash A$}
    \AxiomC{$\Delta, B \vdash C$}
    \RightLabel{$\li$-L}
    \BinaryInfC{$\Gamma,\Delta, A \li B \vdash C$}
\DisplayProof
\end{array}
\]
\caption{Rules for IMLL}\label{fig:IMLL}
\end{figure}

We now define the interpretation of IMLL formulas, and then show a soundness result.

\begin{definition}
An interpretation is a function which associates a value $X^\A \in \A$ to every atomic formula $X$, and then is recursively defined on every formula by $(A \li B)^\A = A^\A \li B^\A$ and $(A \tens B)^\A = A^\A \tens B^\A$.
\end{definition}

\begin{proposition}[Soundness]\label{prop:MLLsoundness}
    If a formula $A$ is a tautology in IMLL, then $A^\A \in \lcore(\A)$.
\end{proposition}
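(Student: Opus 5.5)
We prove a stronger statement by induction on the IMLL derivation. For every derivable sequent $A_1,\ldots,A_n \vdash B$ we build a \emph{linear} $\lambda$-term $t$ with $\fv(t) = \{x_1,\ldots,x_n\}$, each $x_i$ occurring exactly once, such that the semantic judgment
\[ x_1 : A_1^\A, \ldots, x_n : A_n^\A \vdash t : B^\A \]
holds in the sense of the semantic type system. Since contexts are unordered lists, no exchange rule is needed: variable names keep track of which hypothesis is which.

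The cases follow the rules of Figure~\ref{fig:IMLL}.
\begin{itemize}
\item The axiom is interpreted by the (Axiom) rule of Proposition~\ref{prop:semantictyping}, with term $x$.
\item Cut is interpreted by the (Cut) rule, giving $(\lambda x.u)t$.
\item $\li$-R is interpreted by the ($\li$-R) rule, giving $\lambda x.t$.
\item $\li$-L is interpreted by the ($\li$-L) rule, giving $(\lambda x.u)(yt)$.
\item $\tens$-R is interpreted by the first rule of Proposition~\ref{prop:semantictyping:tensor}, giving $\lambda z.ztu$.
\item $\tens$-L is interpreted by the second rule of Proposition~\ref{prop:semantictyping:tensor}, giving $z(\lambda xy.t)$.
\end{itemize}
In each case the premises' contexts $\Gamma,\Delta$ are disjoint, so the terms $t$ and $u$ have disjoint free variables. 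The freshly introduced variables ($y$ in $\li$-L, $z$ in $\tens$-L, the bound $z$ in $\tens$-R) are chosen fresh. Hence every term built is linear: each abstraction binds a variable occurring exactly once, and each free variable occurs exactly once. This linearity check is the one point where the proof really differs from Miquel's intuitionistic soundness, and it is the main obstacle. Its key ingredient is that IMLL has no weakening or contraction, so the context splitting in the binary rules guarantees that free variables do not overlap. We also check that the binder in $\tens$-L, $\lambda xy.t$, binds variables that indeed occur once in $t$, because $x$ and $y$ come from the context of the premise.

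Now let $A$ be a tautology, i.e.\ $\vdash A$ is derivable. The induction gives a closed linear $\lambda$-term $t$ with $\vdash t : A^\A$, that is, $t^\A \preq A^\A$. By Proposition~\ref{prop:lambdaclosure}, which rests on Theorem~\ref{thm:linearcombinatory} expressing linear terms through $\I,\B,\C$, we get $t^\A \in \lcore(\A)$. Upward closure of the linear separator $\lcore(\A)$ then yields $A^\A \in \lcore(\A)$.

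One subtlety remains. Proposition~\ref{prop:lambdaclosure} is stated for linear terms whose abstractions bind exactly one occurring variable. Our terms satisfy this, but intermediate redexes such as $(\lambda x.u)t$ must also be accepted as linear terms, not only normal forms. They are, since the definition of linear term does not require normality. So no further normalisation argument is needed.
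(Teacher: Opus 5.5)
Your proposal is correct and follows the paper's proof. Both arguments build, by induction on the IMLL derivation and using the semantic typing rules of Propositions~\ref{prop:semantictyping} and~\ref{prop:semantictyping:tensor}, a closed linear $\lambda$-term $t$ with $\vdash t : A^\A$, maintaining the invariant $\fv(t)=\dom(\Gamma)$ with $t$ linear. Both then conclude by linear $\lambda$-closure (Proposition~\ref{prop:lambdaclosure}) and upward closure; you are simply more explicit than the paper about the rule-by-rule linearity check and the final upward-closure step.
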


\begin{remark}
While we have focussed on IMLL here, a natural extension of the above definitions provides a framework for a sound interpretation of MLL. More formally, one can consider the additional definitions:
\begin{align*}
    \orth a &:= a \li \bot \\
    a \parr b &:= \bigmeet_{c \in \A} ((a \li c) \li (b \li c) \li c) \\
    \1 &:= \bot \li \bot
\end{align*}
Then, by adding an element of type $\bigmeet_{a \in \A} a \li a^{\bot\bot}$ inside the separator (as it is done with $cc$ in classical separators), we could fully interpret MLL.
\end{remark}

\subsection{Quotient by a linear separator}

One important construction in the setting of implicative algebras is the consideration of quotients. These define Heyting algebras in the case of Miquel, providing a generic construction of a topos from any implicative algebras. More precisely, implicative algebras define tripoi, which by the tripos-to-topos construction give rise to topoi.

Here, one cannot expect to obtain a Heyting algebra when quotienting by the entailment relation. Although one may expect the structure of a quantale to arise, this is not the case in general as the quotiented structure does not possess all the required limits. The resulting structure resemble that of a residuated lattice.

\begin{definition}
    Let $(\A, \preq, \li)$ be an implicative structure, and $S$ a linear separator. The entailment relation $\ent_S$ is defined as:
    \[
        a \ent_S b \iff a \li b \in S.
    \]
    This is a preorder, and we denote by $\enteq_S$ the induced equivalence relation.
\end{definition}

We will now consider $\A/S := \A/\enteq_S$, the quotient of $\A$ w.r.t. the entailment equivalence.

\begin{proposition}\label{prop:quotientdef}
    The following operations are well defined on $\A/S$ : 
    \begin{align*}
        \cl a \li \cl b & := \cl{a \li b}  \\
        \cl a \tens \cl b & := \cl{a \tens b}
    \end{align*}
\end{proposition}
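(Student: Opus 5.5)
To show that $\li$ and $\tens$ are well defined on $\A/S$, we prove that both operations preserve $\ent_S$ in the appropriate variance: antitone in the first argument and monotone in the second for $\li$, and monotone in both arguments for $\tens$. Compatibility with $\enteq_S$ then follows immediately. The method is uniform. For each required entailment we write down a closed linear $\lambda$-term $t$, with the given entailments as parameters, whose semantic type (by the semantic typing rules of Proposition~\ref{prop:semantictyping} and Proposition~\ref{prop:semantictyping:tensor}) is the desired implication. Since $t$ is linear and its parameters lie in $S$, the linear $\lambda$-closure of Proposition~\ref{prop:lambdaclosure} puts $t^\A$ in $S$, and upward closure of $S$ gives the target implication.

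\textbf{Implication.} Suppose $a' \ent_S a$ and $b \ent_S b'$, and write $f := a' \li a \in S$ and $g := b \li b' \in S$. Consider the term $t := \lambda h.\lambda x.\, g(h(f x))$. From $x : a'$ and the parameter $f : a' \li a$ we get $f x : a$. With $h : a \li b$ this gives $h(fx) : b$, and then $g(h(fx)) : b'$. Applying $\li$-R twice yields $\vdash t : (a \li b) \li (a' \li b')$. The term $t$ is linear: each abstraction binds exactly one variable, used once. It has parameters $f, g \in S$, so $t^\A \in S$. Hence $a \li b \ent_S a' \li b'$. With $a \enteq_S a'$ and $b \enteq_S b'$, applying this in both directions gives $\cl{a \li b} = \cl{a' \li b'}$.

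\textbf{Tensor.} Suppose $a \ent_S a'$ and $b \ent_S b'$, with $f := a \li a'$ and $g := b \li b'$ in $S$. Consider $t := \lambda z.\, z(\lambda x y.\, \lambda k.\, k(fx)(gy))$. From $x : a$ and $y : b$ we get $fx : a'$ and $gy : b'$. The $\tens$-R rule of Proposition~\ref{prop:semantictyping:tensor}, applied with $\Gamma = x : a$ and $\Delta = y : b$, gives $x:a, y:b \vdash \lambda k.k(fx)(gy) : a' \tens b'$. The $\tens$-L rule then gives $z : a \tens b \vdash z(\lambda xy.\lambda k.k(fx)(gy)) : a' \tens b'$. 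Finally $\li$-R yields $\vdash t : (a \tens b) \li (a' \tens b')$. Again $t$ is linear with parameters $f, g$ in $S$, so $a \tens b \ent_S a' \tens b'$. Applying this symmetrically gives the equality of classes.

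\textbf{Main obstacle.} The only delicate point is making sure the semantic typing rules are applied to terms whose free variables are contexts in the sense of the definition. When parameters $f, g$ appear inside terms, we treat them via the Parameter rule, or equivalently as context variables substituted by elements of $S$. We must also check that the tensor rules are used with disjoint contexts, as stated in Proposition~\ref{prop:semantictyping:tensor}. Linearity is then by inspection, and no structural rule is needed. This is exactly why the linear closure of Proposition~\ref{prop:lambdaclosure} suffices, instead of the full $\lambda$-closure of Miquel's separators.
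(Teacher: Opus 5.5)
Your proof is correct and follows the paper's route. For $\tens$, your term $\lambda z.z(\lambda xy.\lambda k.k(fx)(gy))$ is exactly the paper's $T = \lambda x.x(\lambda yzt.t((a \li a')y)((b \li b')z))$, justified the same way by linear $\lambda$-closure and upward closure. For $\li$, the paper only cites Miquel, so your explicit term $\lambda h.\lambda x.\,g(h(fx))$ usefully makes visible the linearity check that this citation silently needs, since $S$ is only a linear separator; the step $fx : a$ rests on $(a' \li a)a' \preq a$ rather than on one of the listed typing rules, but that inequality is immediate.
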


\begin{proposition}\label{prop:quotientproperties}
    If $S$ is a linear separator, theses operations have the following properties :
    \begin{enumerate}
        \item commutativity: $\cl a \tens \cl b = \cl b \tens \cl a$;
        \item associativity: $(\cl a \tens \cl b) \tens \cl c = \cl a \tens (\cl b \tens \cl c)$;
        \item identity is neutral: $\cl a \tens \cl{\I^\A} = \cl a$;
        \item currying: $\cl a \tens \cl b \li \cl c = \cl a \li \cl b \li \cl c$;
        \item modus ponens: $(\cl a \li \cl b) \tens \cl a \ent_S \cl b$; 
        \item pairing: $\cl a \ent_S \cl b \li \cl a \tens \cl b$;
        \item transitivity: $(\cl a \li \cl b) \tens (\cl b \li \cl c) \ent_S \cl a \li \cl c$;
        \item tensor rule: $(\cl a \li \cl b) \tens (\cl c \li \cl d) \ent_S \cl a \tens \cl c \li \cl b \tens \cl d$.
    \end{enumerate}
\end{proposition}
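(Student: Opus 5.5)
Each item reduces to exhibiting a closed linear $\lambda$-term whose value in $\A$ lies below the relevant implication. We then conclude by Proposition~\ref{prop:lambdaclosure} (closed linear terms lie in $S$) and upward closure of $S$. Equalities in $\A/S$ are shown by proving both entailments $\ent_S$. Well-definedness is already given by Proposition~\ref{prop:quotientdef}, so it suffices to work with representatives $a,b,c,d$. For each item we write an IMLL derivation of the corresponding sequent and translate it, rule by rule, into a semantic typing judgment using Proposition~\ref{prop:semantictyping} together with the $\tens$-rules of Proposition~\ref{prop:semantictyping:tensor}. All the term formers produced by these rules ($\lambda z.ztu$, $z(\lambda xy.t)$, $(\lambda x.u)(yt)$, $(\lambda x.u)t$) are linear as long as contexts are split disjointly, so each derivation yields a closed linear term $t$ with $\vdash t : a \li b$, i.e.\ $t^\A \preq a\li b$, and hence $a \li b\in S$.

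The concrete derivations are the standard IMLL proofs. \emph{Modus ponens} is given by the term $\lambda z.z(\lambda fx.fx)$, typed using $\tens$-L and $\li$-L. \emph{Pairing} is given by $\lambda xy.\lambda z.zxy$. \emph{Transitivity} is given by $\lambda p.p(\lambda fg.\lambda x.g(fx))$. The \emph{tensor rule} is given by $\lambda p.p(\lambda fg.\lambda q.q(\lambda xy.\lambda z.z(fx)(gy)))$. \emph{Commutativity} is given by $\lambda p.p(\lambda xy.\lambda z.zyx)$, and the symmetric term gives the converse. \emph{Associativity} uses nested $\tens$-L followed by $\tens$-R, in both directions. \emph{Currying} is given by $\lambda fxy.f(\lambda z.zxy)$ in one direction and $\lambda gp.p(\lambda xy.gxy)$ in the other. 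In every case we check typing by applying the rules syntactically. We could alternatively invoke Proposition~\ref{prop:MLLsoundness} with atoms interpreted by $a,b,c,d$. Its proof via the rules works for arbitrary interpretations, and the sequent $A\vdash B$ becomes the tautology $A\li B$, which lands in $\lcore(\A)\subseteq S$.

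The one item not covered by plain IMLL is \emph{identity is neutral}, since the unit is not part of the interpreted syntax. Here $\I^\A=\bigmeet_e (e\li e)$. For $a\tens\I^\A \ent_S a$, take $\lambda p.p(\lambda xy.yx)$: under $x:a$, $y:\I^\A$ we have $y \preq a\li a$ by subsumption, so $yx : a$. For the converse $a \ent_S a\tens \I^\A$, take $\lambda x.\lambda z.zx\I$. This uses the parameter rule $\vdash \I^\A:\I^\A$, or directly the closed linear term $\I$, whose value is $\I^\A$.

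I expect this unit case to be the main obstacle. We must make sure that mixing the parameter $\I^\A$ into a linear term keeps us inside $S$. Proposition~\ref{prop:lambdaclosure} allows parameters from $S$, and $\I^\A\in S$, so the substituted term $(\lambda x z.zxy)\{y:=\I^\A\}$ has its value in $S$. The other steps are routine.
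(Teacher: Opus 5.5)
Your proposal is correct and matches the paper's proof. The paper also just exhibits, for each item, a closed linear $\lambda$-term (using $\I$ in the unit case) of the required implication type, and concludes by linear $\lambda$-closure and upward closure. Your terms coincide with the paper's up to renaming, except for transitivity and the tensor rule: there you abstract the extra argument inside the continuation ($\lambda p.p(\lambda fg.\lambda x.g(fx))$) rather than outside it ($\lambda xa.x(\lambda yz.z(ya))$), which is equally valid.
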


\subsection{Linear implicative algebras from linear realisability situations}

We finally establish that models defined from linear realisability situations give rise to linear implicative algebras.

\begin{definition}
    We call linear implicative algebra any linear implicative structure equipped with a linear separator $S \subseteq \A$.
    A linear implicative algebra $(\A, \preq, \li, S)$ is said consistent if $S$ is consistent.
\end{definition}

We first prove a general result, showing that under a few assumptions, linear realisability situations give rise to linear implicative algebras. We will then explain why these additional assumptions are reasonable: all linear realisability models from the literature satisfy those\footnote{All but the most recent construction \citep{LogicNucleus}, in which commutativity is not assumed (with a motivating example which is not commutative). However, this model does not soundly model linear logic but a substructural logic with a non-commutative multiplicative conjunction.}.

We will proceed step by step, which allows us to pinpoint where exactly the hypotheses are used, namely in showing that the set of non-empty types is a coherent separator. A first easy result is that the set of types defined from a linear realisability situation is a lattice.

\begin{proposition}\label{prop:linreallattice}
    Given a linear realisability situation $(P,\ex,\meas{\cdot,\cdot})$, the structure $(\types, \subseteq)$ is a complete lattice with:
    \[
    \begin{array}{l}
        \bigmeet_{i \in I} \mathbf{A}_i = \bigcap_{i \in I} \mathbf{A}_i\\
        \bigjoin_{i \in I} \mathbf{A}_i = (\bigcup_{i \in I} \mathbf{A}_i)^{\simperp\simperp}
    \end{array}
    \]
\end{proposition}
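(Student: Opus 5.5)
The plan is to use the standard Galois connection given by orthogonality on the set $\Theta\times P$ of projects, and to read $\types$ as the set of closed sets of the induced closure operator $X\mapsto X^{\simperp\simperp}$. The general fact we rely on is that the closed sets of a closure operator on a powerset form a complete lattice: meets are intersections and joins are closures of unions. So the proof reduces to checking that $(\cdot)^{\simperp\simperp}$ is a closure operator, that $\types$ is exactly its set of fixed points, and then computing meets and joins.

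First I record the elementary properties of $X\mapsto X^\simperp$. It is antitone: if $X\subseteq Y$ then $Y^\simperp\subseteq X^\simperp$, immediately from the definition. It is extensive up to double orthogonal, $X\subseteq X^{\simperp\simperp}$, because $\perp$ is symmetric: $\meas{\alpha\cdot a,\beta\cdot b}=\alpha+\beta+\meas{a,b}$. Symmetry requires $\meas{a,b}=\meas{b,a}$, which is the commutativity assumption alluded to in the paper. If one prefers to avoid it, the same argument works with left and right orthogonals forming an antitone Galois connection, and types taken as the fixed points on one side. From antitonicity and extensivity we get $X^{\simperp\simperp\simperp}=X^\simperp$. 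Hence $(\cdot)^{\simperp\simperp}$ is monotone, extensive and idempotent, and its fixed points are precisely the types $\mathbf A=\mathbf A^{\simperp\simperp}$. This agrees with the stated characterisation that types are the sets of the form $T^\simperp$.

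For meets, I show that an intersection of types is a type. We have $\bigcap_i \mathbf A_i=\bigcap_i \mathbf A_i^{\simperp\simperp}=\bigl(\bigcup_i \mathbf A_i^{\simperp}\bigr)^{\simperp}$, using that orthogonal turns unions into intersections, which holds directly from the definition. So the intersection is of the form $T^\simperp$ and is therefore a type. Being the greatest lower bound under $\subseteq$ among all sets, it is in particular the greatest lower bound in $\types$; the empty meet is the whole set of projects $\emptyset^\simperp$.

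For joins, $(\bigcup_i\mathbf A_i)^{\simperp\simperp}$ is a type by idempotence and contains each $\mathbf A_i$ by extensivity. If $\mathbf B$ is a type containing every $\mathbf A_i$, then it contains the union, and monotonicity gives $(\bigcup_i\mathbf A_i)^{\simperp\simperp}\subseteq\mathbf B^{\simperp\simperp}=\mathbf B$. So this set is the least upper bound.

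The only step that is not purely formal is the symmetry of orthogonality, equivalently the commutativity of $\meas{\cdot,\cdot}$. I expect that to be the main point to address: either assume it, or handle it through the two-sided Galois connection. Everything else follows formally.
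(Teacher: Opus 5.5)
Your proposal is correct and follows the paper's route. The join is the biorthogonal closure of the union, as the least type containing every $\mathbf{A}_i$. The meet is the intersection, because an intersection of orthogonals is again an orthogonal, $\bigcap_i T_i^{\perp} = (\bigcup_i T_i)^{\perp}$; you rightly write a union here, whereas the paper's displayed computation has $\bigcap_i T_i$ by mistake. You are also more careful than the paper in two respects: you spell out the closure-operator facts, and you note that $X \subseteq X^{\perp\perp}$ needs a symmetric measurement, which the paper only assumes later (Proposition~\ref{prop:linrealsep}).
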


(TODO: Why the \(\simperp\)?)

This lattice gives us the underlying structure to define the linear implicative algebra. We now can check that the linear implication connective, defined on types, makes the lattice of types an applicative structure. This, again, does not require additional hypotheses.

\begin{proposition}\label{prop:linrealappstruct}
    Given a linear realisability situation $(P,\ex,\meas{\cdot,\cdot})$, the tuple $(\types, \subseteq, \li)$ is an applicative structure with $\Ex$ as application.
\end{proposition}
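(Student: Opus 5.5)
We define application on types by $\mathbf{A} \cdot \mathbf{B} := \{\Ex(\alpha\cdot a, \beta\cdot b) \mid \alpha\cdot a \in \mathbf{A}, \beta\cdot b \in \mathbf{B}\}^{\simperp\simperp}$. This is the closure of pointwise execution, and the closure is needed so that the result is again a type. Write $\Ex(X,Y)$ for the pointwise image of two sets of projects. By \autoref{prop:linreallattice}, $(\types,\subseteq)$ is already a complete lattice, with joins given by $(\bigcup_i \mathbf{A}_i)^{\simperp\simperp}$. So it only remains to check the two axioms of \autoref{applicative structure}: monotonicity and preservation of joins in the left argument.

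Monotonicity is immediate. If $\mathbf{A}\subseteq \mathbf{A}'$ and $\mathbf{B}\subseteq\mathbf{B}'$, then $\Ex(\mathbf{A},\mathbf{B})\subseteq\Ex(\mathbf{A}',\mathbf{B}')$. Taking the orthogonal twice is monotone, so the inclusion survives the closure.

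The join axiom is the real content. The key lemma I would prove first uses the Jacobi identity on projects, $\meas{\Ex(a,b),t}=\meas{a,\Ex(b,t)}$. For any set of projects $X$ and any type $\mathbf{B}$ it gives
\[
\Ex(X,\mathbf{B})^{\simperp} = \{ t \mid \forall b\in\mathbf{B},\ \Ex(b,t)\in X^{\simperp}\}.
\]
Hence $\Ex(X,\mathbf{B})^\simperp$ depends on $X$ only through $X^\simperp$. Since $X^{\simperp\simperp\simperp}=X^\simperp$, we get $\Ex(X^{\simperp\simperp},\mathbf{B})^{\simperp}=\Ex(X,\mathbf{B})^{\simperp}$, and therefore $\Ex(X^{\simperp\simperp},\mathbf{B})^{\simperp\simperp}=\Ex(X,\mathbf{B})^{\simperp\simperp}$.

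With this lemma the join axiom follows by a chain of equalities:
\[
(\textstyle\bigjoin_i \mathbf{A}_i)\cdot\mathbf{B}
= \Ex\big((\bigcup_i\mathbf{A}_i)^{\simperp\simperp},\mathbf{B}\big)^{\simperp\simperp}
= \Ex\big(\bigcup_i\mathbf{A}_i,\mathbf{B}\big)^{\simperp\simperp}
= \big(\bigcup_i \Ex(\mathbf{A}_i,\mathbf{B})\big)^{\simperp\simperp}.
\]
The first equality is the definition, the second is the lemma, and the third holds because pointwise execution distributes over unions. The same lemma applied to the family of sets $\Ex(\mathbf{A}_i,\mathbf{B})$ gives
\[
\big(\bigcup_i \Ex(\mathbf{A}_i,\mathbf{B})\big)^{\simperp\simperp} = \big(\bigcup_i \Ex(\mathbf{A}_i,\mathbf{B})^{\simperp\simperp}\big)^{\simperp\simperp},
\]
using $(\bigcup_i Y_i)^\simperp=(\bigcup_i Y_i^{\simperp\simperp})^\simperp$. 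The right-hand side is exactly $\bigjoin_i(\mathbf{A}_i\cdot\mathbf{B})$.

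Finally, I would check that the derived residual $\leadsto$ coincides with the type-level $\li$, so that the structure is the expected one. Since $\mathbf{C}$ is a type, $\mathbf{C}\cdot\mathbf{A}\subseteq \mathbf{C}$ holds iff $\Ex(\mathbf{C},\mathbf{A})\subseteq\mathbf{C}$. This in turn holds iff every element of $\mathbf{C}$ lies in $\mathbf{A}\li\mathbf{C}$. So $\max\{\mathbf{C} \mid \mathbf{C}\cdot\mathbf{A}\subseteq\mathbf{B}\}=\mathbf{A}\li\mathbf{B}$, which is a type because $\mathbf{A}\li\mathbf{B}=(\mathbf{A}\otimes\mathbf{B}^\simperp)^\simperp$.

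The main obstacle I expect is the lemma in the second paragraph. It requires getting the argument order in the Jacobi identity right: the left factor $a$ must be isolated against $\Ex(b,t)$. It must also use only associativity and the trefoil property, not commutativity, consistent with the claim that no extra hypotheses are needed here.
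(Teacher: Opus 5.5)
Your proof is correct, but it runs in the opposite direction from the paper's.

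The paper works on the implicative side. It checks the two implicative axioms for $\li$ on types, both immediate since meets in $\types$ are intersections. It then computes the induced application $\bigcap\{\mathbf C\in\types : \mathbf A\subseteq\mathbf B\li\mathbf C\}$ and identifies it with execution. Preservation of joins in the left argument is then automatic from the adjunction, as in Proposition~\ref{prop:implicativetoapplicative}.

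You work on the applicative side, where left join-preservation is the one non-trivial axiom. You prove it directly from the Jacobi identity, through your lemma that $\Ex(X,\mathbf B)^{\simperp}$ depends on $X$ only via $X^{\simperp}$. The implicative axioms for $\li$ then follow from Proposition~\ref{prop:applicativetoimplicative} once $\leadsto$ is identified with $\li$.

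The paper's route is shorter, but it hides where the trefoil property is used, namely in the fact that $\mathbf A\li\mathbf B$ is a type. Its final equality also writes the bare set of executions, whereas the intersection of all types containing them is really its biorthogonal closure. Your definition with $(\cdot)^{\simperp\simperp}$ is the accurate one.

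Your lemma also already shows that $\mathbf A\li\mathbf B$ is a type, so you need not appeal to $(\mathbf A\otimes\mathbf B^{\simperp})^{\simperp}$. From $\Ex(\mathbf A\li\mathbf B,\mathbf A)\subseteq\mathbf B$ you get $\Ex((\mathbf A\li\mathbf B)^{\simperp\simperp},\mathbf A)\subseteq\Ex(\mathbf A\li\mathbf B,\mathbf A)^{\simperp\simperp}\subseteq\mathbf B$.

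Two small slips:
\begin{itemize}
\item In your last paragraph the target should be $\mathbf B$ throughout: ``since $\mathbf B$ is a type, $\mathbf C\cdot\mathbf A\subseteq\mathbf B$ iff $\Ex(\mathbf C,\mathbf A)\subseteq\mathbf B$ iff $\mathbf C\subseteq\mathbf A\li\mathbf B$''.
\item The identity $(\bigcup_i Y_i)^{\simperp}=(\bigcup_i Y_i^{\simperp\simperp})^{\simperp}$ is the Galois property of orthogonality, not an instance of your lemma.
\end{itemize}
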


\begin{proof}[Proof of Proposition \ref{prop:linrealappstruct}]
    We simply check that all needed axioms are satisfied. For axiom 1, let $\mathbf{A}, \mathbf{A}', \mathbf{B} , \mathbf{B}' \in \types$ such that $\mathbf{A}' \subseteq \mathbf{A}$ and $\mathbf{B} \subseteq \mathbf{B}'$. We have :
    \begin{align*}
    \proj{p} \in \mathbf{A} \li \mathbf{B}
    & \implies \forall \proj{a} \in \mathbf{A}, \ex(\proj{p}, \proj{a}) \in \mathbf{B}  \\
    & \implies \forall \proj{a} \in \mathbf{A}', \ex(\proj{p}, \proj{a}) \in \mathbf{B}'  \\
    & \implies \proj{p} \in \mathbf{A}' \li \mathbf{B}'
    \end{align*}

    For axiom 2, let $\mathbf{B}_i \in \types$ for all $i \in I$, $\mathbf{A} \in \types$. We have :
    \begin{align*}
        \bigcap_{i \in I} (\mathbf{A} \li \mathbf{B}_i)
        & = \bigcap_{i \in I} \{\proj{p} \in \Proj : \forall \proj{a} \in \mathbf{A}, \ex(\proj{p}, \proj{a}) \in \mathbf{B}_i\}  \\
        & = \{\proj{p} \in \Proj : \forall \proj{a} \in \mathbf{A}, \ex(\proj{p}, \proj{a}) \in \bigcap_{i \in I} \mathbf{B}_i\}  \\
        & = \mathbf{A} \li (\bigcap_{i \in I} \mathbf{B}_i)
    \end{align*}

    For application, let $\mathbf{A}, \mathbf{B} \in \types$ :
    \begin{align*}
        \mathbf{A}\mathbf{B}
        & = \bigcap \{\mathbf{C} \in \types : \mathbf{A} \subseteq \mathbf{B} \li \mathbf{C}\} \\
        & = \bigcap \{\mathbf{C} \in \types : \mathbf{A} \subseteq \{\proj{p} \in \Proj : \forall \proj{b} \in \mathbf{B}, \ex(\proj{p}, \proj{b}) \in \mathbf{C}\}\} \\
        & = \bigcap \{\mathbf{C} \in \types : \forall \proj{a} \in \mathbf{A}, \forall \proj{b} \in \mathbf{B}, \ex(\proj{a}, \proj{b}) \in \mathbf{C}\} \\
        & = \{\ex(\proj{a}, \proj{b}) : \forall \proj{a} \in \mathbf{A}, \forall \proj{b} \in \mathbf{B}\} \\
        & = \mathbf{A} :: \mathbf{B}\qedhere
    \end{align*}
\end{proof}

We now need to show the existence of a coherent linear separator. This is where additional hypotheses are required, which is not surprising: while the abstract notion of linear realisability situation ensures that one can define types and operations between them, it does not require the existence of specific terms, such as identities. The additional hypotheses added here simply ensure that those elementary terms exist. It should thus not be a surprise that all known instances of linear realisability models do satisfy those additional requirements.

\begin{proposition}\label{prop:linrealsep}
    Suppose given a linear realisability situation $(P,\ex,\meas{\cdot,\cdot})$. If the measurement is symmetric, and there exists $id, \tau \in P$ such that for all $p, q \in P$ : 
    \begin{align*}
        \ex(id, p) & = p   &   \m{id}{p} & = 0 \\
        \ex(\tau, \ex(p, q)) & = \ex(q, p)   &   \m{\tau}{p} & = 0
    \end{align*}
    then $\types \setminus \emptyset$ is a coherent separator.
\end{proposition}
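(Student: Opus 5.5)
We check the three clauses of a linear separator for $S := \types \setminus \{\emptyset\}$, the set of non-empty types, and then consistency. The order $\preq$ is inclusion, and by Proposition~\ref{prop:linreallattice} the bottom element is $\bigcap$ of all types. Upward closure is immediate: if $\mathbf{A} \neq \emptyset$ and $\mathbf{A} \subseteq \mathbf{B}$ then $\mathbf{B} \neq \emptyset$.

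For modus ponens I use the reformulation 3': if $\mathbf{A},\mathbf{B}$ are non-empty then $\mathbf{A}\mathbf{B} \neq \emptyset$. By Proposition~\ref{prop:linrealappstruct} and Proposition~\ref{prop:applicativetoimplicative}, $\mathbf{A}\mathbf{B}$ is the application, which contains every $\ex(\proj a, \proj b)$ with $\proj a \in \mathbf{A}$ and $\proj b\in\mathbf{B}$. It is therefore non-empty.

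The real content is showing $\I^\A,\B^\A,\C^\A \in S$. Using the formulas of Proposition~[2.24], each is an intersection of types. So it suffices to exhibit a single project lying in every $\mathbf{A}\li\mathbf{A}$, a single one in every $(\mathbf{B}\li\mathbf{C})\li(\mathbf{A}\li\mathbf{B})\li\mathbf{A}\li\mathbf{C}$, and one in every $(\mathbf{A}\li\mathbf{B}\li\mathbf{C})\li\mathbf{B}\li\mathbf{A}\li\mathbf{C}$.
\begin{itemize}
\item For $\I$, take $0\cdot id$. Then $\ex(0\cdot id,\alpha\cdot a) = (\alpha + \m{id}{a})\cdot \ex(id,a) = \alpha\cdot a$, so $0\cdot id \in \mathbf{A}\li\mathbf{A}$ for every type $\mathbf{A}$.
\item For $\C$, the natural candidate is built from $\tau$. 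Given $\proj f\in \mathbf{A}\li\mathbf{B}\li\mathbf{C}$, $\proj b\in\mathbf{B}$ and $\proj a\in\mathbf{A}$, we need $\ex(\ex(\ex(\proj t,\proj f),\proj b),\proj a) = \ex(\ex(\proj f,\proj a),\proj b)$. By associativity of $\ex$ on projects, which follows from the trefoil property, the left side is $\ex(\proj t, \ex(\proj f,\ex(\proj b,\proj a)))$. Here I would use $\tau$ and symmetry of the measurement (so that $\ex(\proj b,\proj a)$ and $\ex(\proj a,\proj b)$ agree up to the swap) to rearrange this into $\ex(\proj f,\ex(\proj a,\proj b))$, possibly composing several copies of $\tau$. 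The scalar bookkeeping is handled by $\m{\tau}{p}=0$.
\item For $\B$, we need $\ex(\ex(\ex(\proj t,\proj g),\proj h),\proj a) = \ex(\proj g,\ex(\proj h,\proj a))$. By associativity the left side is $\ex(\proj t,\ex(\proj g,\ex(\proj h,\proj a)))$ and the right side is $\ex(\proj g,\ex(\proj h,\proj a))$, so $\proj t = 0\cdot id$ works.
\end{itemize}

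The main obstacle I expect is the $\C$ case. Projects are closed under $\ex$, so associativity flattens all applications into a single product $\ex(\proj f,\proj b,\proj a)$, and the required identity becomes a commutation of $\proj a$ and $\proj b$ inside a product. This is exactly what $\tau$ together with symmetry of the measurement (a commutativity of the execution up to $\tau$) should provide. I would first establish as a lemma that $\ex$ on projects is commutative modulo $\tau$, i.e.\ $\ex(0\cdot\tau,\ex(\proj p,\proj q))=\ex(\proj q,\proj p)$ with matching scalars, using $\m{p}{q}=\m{q}{p}$. I would then derive $\ex(\proj f,\proj b,\proj a)=\ex(\proj f,\proj a,\proj b)$ by commuting the pair $(\proj b,\proj a)$ as a block.

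Finally, for consistency we must check $\bot=\bigcap\types\notin S$, i.e.\ that the least type is empty. I would exhibit two types with disjoint elements (for instance $\{\proj p\}^{\simperp\simperp}$ for suitable projects, or $\Proj^{\simperp}$ and its orthogonal). If this fails in degenerate situations, I would note that the coherence clause requires $\emptyset$ to be a type, which holds whenever $\Proj^{\simperp}$ is empty or some orthogonal is.
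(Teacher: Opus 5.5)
Upward closure, closure under application, and $\pid$ as realizer of $\I^{\types}$ and $\B^{\types}$ are exactly as in the paper. Your quantified formulation also handles the empty cases automatically.

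The gap is the $\C$ case, which you rightly call the real content but do not resolve. You never name the project realizing $\C^{\types}$, and the rewriting you sketch does not follow from your lemma $\proj t\cdot(\proj p\cdot\proj q)=\proj q\cdot\proj p$ (here $\cdot$ is $\ex$ and $\proj t=(0,\tau)$). A single $\proj t$ in front of $\proj f\cdot\proj b\cdot\proj a$ can only swap an initial segment starting with $\proj f$ with the block that follows it. This yields $\proj b\cdot\proj a\cdot\proj f$, $\proj a\cdot\proj f\cdot\proj b$ or $\proj b\cdot\proj f\cdot\proj a$, never $\proj f\cdot\proj a\cdot\proj b$, because $\proj t$ cannot reach the pair sitting behind $\proj f$. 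Likewise, ``commuting the block'' only gives $\proj f\cdot\proj b\cdot\proj a=\proj f\cdot\proj t\cdot\proj a\cdot\proj b$, and your lemma alone does not remove the stray $\proj t$.

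The missing idea in the paper is to take $\proj t\cdot\proj t$ as realizer: the first copy transports the second one past $\proj f$,
\[
\proj t\cdot\proj t\cdot\proj f\cdot\proj b\cdot\proj a=\bigl(\proj t\cdot(\proj t\cdot\proj f)\bigr)\cdot\proj b\cdot\proj a=\proj f\cdot\bigl(\proj t\cdot(\proj b\cdot\proj a)\bigr)=(\proj f\cdot\proj a)\cdot\proj b\in\mathbf{C}.
\]
This single project lies in every $(\mathbf{A}\li\mathbf{B}\li\mathbf{C})\li\mathbf{B}\li\mathbf{A}\li\mathbf{C}$.

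Alternatively, the identity $\proj f\cdot\proj b\cdot\proj a=\proj f\cdot\proj a\cdot\proj b$ you aim for is in fact true, but for a reason you do not give: the hypotheses force $\ex$ to be commutative on $P$. First, $\ex(\tau,x)=\ex(\tau,\ex(id,x))=\ex(x,id)$, hence
\[
\ex(q,p)=\ex(\tau,\ex(p,q))=\ex(\ex(\tau,p),q)=\ex(\ex(p,id),q)=\ex(p,\ex(id,q))=\ex(p,q).
\]
With a symmetric measurement, execution on projects is then commutative, so $\pid$ already realizes $\C^{\types}$. You need to supply one of these two arguments.

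A minor point on consistency: no disjoint types are needed. As soon as the pole is a proper subset of $\Theta$, one has $\Proj^{\perp}=\emptyset$, because for a fixed project $(\alpha,a)$ and program $x$ the value $\xi+\alpha+\m{x}{a}$ ranges over all of $\Theta$ as $\xi$ does. So $\emptyset=\Proj^{\perp}$ is a type and equals $\bot$; the paper simply takes this for granted.
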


\begin{proof}
    We first check coherence, which is straightforward:
    $\bot = \emptyset \not\in \types \setminus \emptyset$, so if $\types \setminus \emptyset$ is a linear separator, it is coherent. 
    Now, let us show that it is indeed a linear separator.
    
    The fact that it is upward close is clear. 
    For applicative closure, by definition if $\mathbf{A}, \mathbf{B} \in \types$ are non empty, 
    then $\mathbf{A} :: \mathbf{B}$ is non empty.
     As $\ex$ is associative, for clarity and readability, we will use the infixed notation $\cdot$
     instead of $\ex$ until the end of this proof.

    \textbf{We now check that it contains $\I^\types$}. We write $\pid = (0, id)$, where $0$ is the neutral element of $\Theta$. 
    For all $\proj{p} = (a, p) \in \Proj$, we have:
    \[
        \pid \cdot \proj{p} = (0 + a + \m{id}{p}, id \cdot p) = (a, p) = \proj{p}
    \]
    and therefore:
    \begin{align*}
        \forall \proj{p} \in \Proj, \pid \cdot \proj{p} = \proj{p}  
        & \implies \forall \mathbf{A} \in \types, \pid \in \{\proj{p} \in \Proj : \forall \proj{a} \in \mathbf{A}, \proj{p} \cdot \proj{a} \in \mathbf{A}\}  \\
        & \implies \forall \mathbf{A} \in \types, \pid \in \mathbf{A} \li \mathbf{A}  \\
        & \implies \pid \in \bigcap_{\mathbf{A} \in \types} (\mathbf{A} \li \mathbf{A}) = \I^\types.
    \end{align*}
    
    \textbf{We now check that it contains $\B^\types$.}
    Let $\mathbf{A}, \mathbf{B}, \mathbf{C} \in \types$,
    we suppose $\mathbf{B} \li \mathbf{C}$, $\mathbf{A} \li \mathbf{B}$, and that $\mathbf{A}$ is non empty. Then for all $\proj{f} \in \mathbf{B} \li \mathbf{C}$, $\proj{g} \in \mathbf{A} \li \mathbf{B}$, $\proj{a} \in \mathbf{A}$, we have:
    \[
        ((\pid \cdot \proj{f}) \cdot \proj{g}) \cdot \proj{a} = \proj{f} \cdot (\proj{g} \cdot \proj{a}) \in \mathbf{C}.
    \]
    If $\mathbf{A}$ is empty, $\mathbf{A} \li \mathbf{C} = \top\ (= \types)$, hence 
    \[ (\mathbf{B} \li \mathbf{C}) \li (\mathbf{A} \li \mathbf{B}) \li \mathbf{A} \li \mathbf{C} = \top. \]
    If $\mathbf{A} \li \mathbf{B}$ is empty, $(\mathbf{A} \li \mathbf{B}) \li \mathbf{A} \li \mathbf{C} = \top$ so 
    \[ (\mathbf{B} \li \mathbf{C}) \li (\mathbf{A} \li \mathbf{B}) \li \mathbf{A} \li \mathbf{C} = \top. \]
     If $\mathbf{B} \li \mathbf{C}$ is empty, 
     \[ (\mathbf{B} \li \mathbf{C}) \li (\mathbf{A} \li \mathbf{B}) \li \mathbf{A} \li \mathbf{C} = \top. \]

    In all fours cases above, we can conclude that 
    \[
        \pid \in (\mathbf{B} \li \mathbf{C}) \li (\mathbf{A} \li \mathbf{B}) \li \mathbf{A} \li \mathbf{C},
    \]
    which proves that 
    \[
        \pid \in \bigcap_{\mathbf{A}, \mathbf{B}, \mathbf{C} \in \types} ((\mathbf{B} \li \mathbf{C}) \li (\mathbf{A} \li \mathbf{B}) \li \mathbf{A} \li \mathbf{C}) = \B^\types.
    \]
    
    \textbf{Finally, we check that it contains $\C^\types$.} We write $\proj{t} = (0, \tau)$. 
    For all $\proj{p} = (a, p), \proj{q} = (b, q) \in \Proj$, we have :
    \begin{align*}
        \proj{t} \cdot (\proj{p} \cdot \proj{q}) & = (0 + \m{\proj{p}}{\proj{q}} + \m{\tau}{p \cdot q}, \tau \cdot (p \cdot q)) \\
        & = (\m{\proj{p}}{\proj{q}}, q \cdot p) = \proj{q} \cdot \proj{p}.
    \end{align*}
    Let $\mathbf{A}, \mathbf{B}, \mathbf{C} \in \types$.
    We suppose $\mathbf{A}$, $\mathbf{B}$ and $\mathbf{A} \li \mathbf{B} \li \mathbf{C}$ non empty. Then for all $\proj{f} \in \mathbf{A} \li \mathbf{B} \li \mathbf{C}$, $\proj{a} \in \mathbf{A}$, $\proj{b} \in \mathbf{b}$, we have :
    \begin{align*}
        \proj{t} \cdot \proj{t} \cdot \proj{f} \cdot \proj{b} \cdot \proj{a} 
        & = \proj{f} \cdot \proj{t} \cdot \proj{b} \cdot \proj{a}  \\
        & = \proj{f} \cdot \proj{a} \cdot \proj{b} \in \mathbf{C}
    \end{align*}
    so 
    \[
        \proj{t} \cdot \proj{t} \in (\mathbf{A} \li \mathbf{B} \li \mathbf{C}) \li \mathbf{B} \li \mathbf{A} \li \mathbf{C}.
    \]
    If $\mathbf{A}$, $\mathbf{B}$ or $\mathbf{A} \li \mathbf{B} \li \mathbf{C}$ are empty we follow a similar argument as above, and we can conclude:
    \[
        \proj{t} \cdot \proj{t} \in \bigcap_{\mathbf{A}, \mathbf{B}, \mathbf{C} \in \types} ((\mathbf{A} \li \mathbf{B} \li \mathbf{C}) \li \mathbf{B} \li \mathbf{A} \li \mathbf{C}) = \C^\types.\qedhere
    \]
\end{proof}

We now have all the elements to establish the main theorem of this section.
\begin{theorem}\label{thm:lineareal}
    Let $(P,\ex,\meas{\cdot,\cdot})$ be a linear realisability situation such that the measurement $\ex$ is symmetric, and in which there exists elements $id, \tau \in P$ such that for all $p, q \in P$: 
    \[
    \begin{array}{lll}
        \ex(id, p) = p,  &\hspace{2em}&   \m{id}{p} = 0, \\
        \ex(\ex(\tau, p), q) = \ex(q, p),   &&   \m{\tau}{p} = 0.
    \end{array}
    \]
    Then $(\types, \subseteq, \li, \types \setminus \emptyset)$, where $\types$ is the set of all types, is a consistent linear implicative algebra.
\end{theorem}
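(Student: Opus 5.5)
The theorem follows by assembling the three preceding propositions. Proposition~\ref{prop:linreallattice} shows that $(\types,\subseteq)$ is a complete lattice, with meets given by intersections and joins by the biorthogonal closure of unions.

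Proposition~\ref{prop:linrealappstruct} shows that $(\types,\subseteq,\li)$ is an applicative structure whose application is $\Ex$ lifted to types. Concretely, $\mathbf{A}\mathbf{B}=\{\ex(\proj a,\proj b)\}$, or, to be safe, the smallest type containing this set. Proposition~\ref{prop:applicativetoimplicative} then turns it into an implicative structure. The residual $\leadsto$ of this applicative structure coincides with the linear implication on types by construction. So $(\types,\subseteq,\li)$ is an implicative structure, and a fortiori a linear implicative structure.

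It remains to show that $\types\setminus\{\emptyset\}$ is a consistent linear separator; this is the content of Proposition~\ref{prop:linrealsep}. The only point requiring care is that the theorem's hypothesis on $\tau$ is stated as $\ex(\ex(\tau,p),q)=\ex(q,p)$, whereas Proposition~\ref{prop:linrealsep} uses $\ex(\tau,\ex(p,q))=\ex(q,p)$. These agree by associativity of $\Ex$, so the proposition applies verbatim. The step is then three checks.
\begin{itemize}
\item \emph{Upward closure} is clear.
\item \emph{Closure under application} holds because nonempty $\mathbf A$ and $\mathbf B$ yield a nonempty $\mathbf A\mathbf B$. Via Proposition~\ref{prop:applicative} this gives modus ponens: if $\mathbf A$ and $\mathbf A\li\mathbf B$ are nonempty, then $(\mathbf A\li\mathbf B)\mathbf A\subseteq\mathbf B$ is nonempty.
\item \emph{Membership of the combinators}: we exhibit $\pid\in\I^\types$ and $\pid\in\B^\types$ (using associativity and the case split on empty hypotheses, where the implication is $\top$), and $\proj t\cdot\proj t\in\C^\types$.
\end{itemize}

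Finally, consistency holds because $\bot=\emptyset$ is excluded by definition.

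The main obstacle is the $\C$ computation. Here one must verify that the twisting scalars behave correctly, i.e.\ that $\meas{\proj t,\proj p}=0$ and symmetry of the measurement give $\proj t\cdot(\proj p\cdot\proj q)=\proj q\cdot\proj p$ at the level of projects and not merely programs. One also needs the edge cases where $\mathbf A$, $\mathbf B$, or $\mathbf A\li\mathbf B\li\mathbf C$ is empty, for which the relevant implication collapses to $\top=\types$. I would redo the $\C$ case carefully in the form matching the theorem's hypothesis, $\ex(\ex(\tau,p),q)=\ex(q,p)$. I would compute $\proj t\cdot\proj f\cdot\proj b\cdot\proj a$ using associativity, checking that the scalar parts match using symmetry and $\meas{\tau,\cdot}=0$, and adjusting the realiser (e.g.\ to $\proj t$ or $\proj t\cdot\proj t$) as the computation dictates.
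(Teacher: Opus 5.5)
Your proposal is correct and follows the paper's route: the theorem is obtained by assembling Propositions~\ref{prop:linreallattice}, \ref{prop:linrealappstruct} and \ref{prop:linrealsep}. There, $\pid$ realises $\I^\types$ and $\B^\types$, and $\proj t\cdot\proj t$ realises $\C^\types$ via $\proj t\cdot(\proj t\cdot\proj f)=\proj f\cdot\proj t$ and $\proj t\cdot(\proj b\cdot\proj a)=\proj a\cdot\proj b$. Your two refinements are sound: the theorem's form $\ex(\ex(\tau,p),q)=\ex(q,p)$ is the proposition's form by associativity of $\Ex$, and the lifted application should indeed be the biorthogonal closure $\{\ex(\proj a,\proj b)\}^{\bot\bot}$; so your closing hedge about re-deriving the $\C$ realiser is unnecessary.
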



As already mentioned, the three additional hypotheses seem restrictive but in all models from the literature mentioned above the execution and measurement are symmetric. And in this case, the conditions boil down to the existence of an identity $id$ such that $\m{id}{p}=0$ for all $p$. Note however that this comes from the fact that models from the literature are \emph{localised} (i.e. objects are assigned an element from a boolean algebra), but non-localised models can be constructed \citep{seiller-hdr}, in which the measurement is symmetric but the execution need not be. 

\begin{remark}
As recently described \citep{LogicNucleus}, the constructions of linear realisability can be performed in the case of a non-symmetric measurement and non-commutative execution $\Ex$. However, this requires adaptations and model a different substructural logic.
\end{remark}



As explained by Seiller \citep{seiller-hdr}, linear logic models defined by Geometry of Interaction \citep{multiplicatives,goi1,goi2,goi3,goi5}, ludics \citep{locussolum}, Interaction Graphs \citep{seiller-goim,seiller-goiadd,seiller-goig,seiller-goie,seiller-goif,seiller-markov}, and transcendental syntax \citep{syntran1,syntran2,syntran3,seiller-syntran} can be understood as models induced by a linear realisability situation. As a consequence, the set of types in any of these models (which sometimes are named differently, such as \emph{conducts}, \emph{behaviors}, etc.) define a linear implicative algebra.

Most of those models however encompass larger fragments of linear logic. We will now consider extensions of linear implicative algebras accounting for larger fragments.

\section{Exponentials}
\label{sec:exponentials}

In this section, we explain a natural extension of the above framework to interpret exponential connectives. This approach follows the construction of exponential connectives in linear realisability models. 

While this will not be detailed here, it is not difficult to see that additive connectives can be defined by the lattice structure. Note that this definition of additives mirrors the definition of additive connectives in some of the linear realisability models mentioned above, in particular Girard's geometry of interaction in the hyperfinite factor \citep{goi5} and Interaction Graphs \citep{seiller-goiadd,seiller-goig,seiller-goie,seiller-goif,seiller-markov}.

The contents of the current section can thus be extended to the additive fragment. We will however detail in the next section another, more structured, approach to additive connectives, which connects to other realisability models (such as older geometry of interaction models \citep{goi3}) and can be used to define exponential connectives using the formula for the free exponential \citep{freeexp}.


\subsection{Implicatives structures with exponentials}

\begin{definition}
    We call exponential implicative structure an implicative structure $(\A, \preq, \li)$ with a monotonic unary operation $\oc : \A \to \A$.
\end{definition}

We enrich $\lambda$-calculus by adding $\oc$ and a new $\oc$-abstraction, so we have :
\[ 
    t,u ::= x \mid tu \mid \lambda x.t \mid \oc t \mid \lambda \oc x.t
\]
We call them $\lambda_\oc$-terms and we equip them with the $\oc$-reduction : 
\[
    (\lambda \oc x.t)!u \bangred t[x : u]
\]
We expand the interpretation of $\lambda$-terms to those new terms by adding the following induction steps.
\[
\begin{array}{rcl}
    (!t)^\A & := & !(t^\A) \\
    (\lambda !x.t)^\A & := & \bigmeet_{a \in \A} (!a \li (t[x:a])^\A)
\end{array}
\]

\begin{proposition}[Monotony of substitution] \label{monotony of substitution with !}
    For every $\lambda_\oc$-term $t$ with free variables $x_1, \ldots, x_n$, and for all parameters
    $a_1 \preq a_1'$, $\ldots$, $a_n \preq a_n'$, we have:
    \[
        (t\{x_1 := a_1, \ldots, x_n := a_n\})^\A \preq (t\{x_1 := a_1', \ldots, x_n := a_n'\})^\A.
    \]
\end{proposition}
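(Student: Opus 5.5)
We argue by structural induction on the $\lambda_\oc$-term $t$, proving the statement simultaneously for all parameter tuples $a_i \preq a_i'$. Since the induction passes under binders, whose bodies have one more free variable, the hypothesis must be quantified over all substitutions of the remaining free variables. We write $\sigma = \{\overline{x} := \overline{a}\}$ and $\sigma' = \{\overline{x} := \overline{a}'\}$.

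The base cases are immediate. If $t = x_i$, both sides reduce to $a_i \preq a_i'$. If $t$ is a parameter or a variable not among the $x_i$, the two sides coincide.

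For application $t = uv$, the interpretation is $(uv)^\A = u^\A v^\A$. By the induction hypothesis $(u\sigma)^\A \preq (u\sigma')^\A$ and $(v\sigma)^\A \preq (v\sigma')^\A$. Application is monotone in both arguments: by Proposition \ref{prop:implicativetoapplicative} it is the application of an applicative structure, and axiom 1 of Definition \ref{applicative structure} gives this. For $t = \oc u$ we have $(\oc u)^\A = \oc((u)^\A)$, and we conclude from the induction hypothesis together with the assumption that $\oc$ is monotonic.

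The binder cases carry the real content. For $t = \lambda y.u$, where $y$ is chosen distinct from the $x_i$ by $\alpha$-conversion, Miquel's interpretation is $(\lambda y.u)^\A = \bigmeet_{b \in \A} (b \li (u[y:=b])^\A)$. For each fixed $b$, the induction hypothesis applied to $u$ with the extended substitution $y := b$ on both sides gives $(u\sigma[y:=b])^\A \preq (u\sigma'[y:=b])^\A$. Covariance of $\li$ in its second argument (implicative structure axiom 1) then yields $b \li (u\sigma[y:=b])^\A \preq b \li (u\sigma'[y:=b])^\A$. Taking meets over $b$ preserves this pointwise inequality. The case $t = \lambda \oc y.u$ is handled identically, using the new clause $(\lambda \oc y.u)^\A = \bigmeet_{b} (\oc b \li (u[y:=b])^\A)$. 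Only the second argument of $\li$ varies here, so monotonicity of $\oc$ is not even needed in this case.

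The one point requiring care is that the induction must be stated over terms with arbitrary parameters in $\A$ and arbitrary finite sets of free variables. This is what makes the hypothesis available for the bodies $u[y:=b]$, which are terms with parameters. We also need substitution of parameters to commute with the binders, which holds after renaming bound variables. Once the statement is set up this way, the remaining steps are routine.
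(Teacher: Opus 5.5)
Your proof is correct and matches the intended argument. The paper states this proposition without proof, relying on the same structural induction on $t$ that Miquel uses for monotony of substitution on pure $\lambda$-terms, and you handle the two new cases correctly: monotonicity of $\oc$ for $\oc u$, and for $\lambda\oc y.u$ covariance of $\li$ in its second argument plus preservation of pointwise inequalities under meets. One small point: your base case of a variable not among the $x_i$ cannot occur, since the interpretation is only defined on closed terms with parameters, but including it does no harm.
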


\begin{proposition}\label{reductions2}
    If $t \bangred u$ then $t^\A \preq u^\A$.
\end{proposition}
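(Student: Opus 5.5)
The plan is to treat the statement in two stages: first the case of a single head redex $(\lambda \oc x.t)\oc u \bangred t[x:u]$, then the closure of $\bangred$ under arbitrary contexts (and reflexive--transitive closure, which follows immediately from transitivity of $\preq$). Throughout, terms are taken with parameters in $\A$, so that closed terms have a well-defined interpretation. Free variables are handled by substituting parameters first and then reading off the inequality pointwise, exactly as Miquel does in the pure $\lambda$-calculus case.

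\textbf{Head redex.} Write $a_0 := u^\A$. By definition,
\[
    ((\lambda \oc x.t)\oc u)^\A = (\lambda \oc x.t)^\A \, (\oc a_0) = \Bigl(\bigmeet_{a \in \A} (\oc a \li (t[x:a])^\A)\Bigr)(\oc a_0).
\]
The meet is below its component at $a = a_0$. Application is monotone in its first argument (axiom 1 of applicative structures, via Proposition~\ref{prop:implicativetoapplicative}), so this is
\[
    \preq (\oc a_0 \li (t[x:a_0])^\A)\,(\oc a_0) \preq (t[x:a_0])^\A ,
\]
where the last step is item 3 of Proposition~\ref{prop:applicative}, $(a \leadsto b)\cdot a \preq b$, together with $\leadsto = \li$. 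It then remains to identify $(t[x:a_0])^\A$ with $(t[x:u])^\A$. This is a substitution lemma: the interpretation of $t[x:u]$ equals the interpretation of $t$ with the parameter $u^\A$ in place of $x$. I would prove it by a routine induction on $t$ over the five term constructors. The $\oc t$ and $\lambda\oc y.t$ cases go through just like application and $\lambda$, since the new clauses are defined compositionally by substituting parameters.

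\textbf{Context closure.} Next I show that if $t^\A \preq u^\A$ for all parameter instantiations, then $C[t]^\A \preq C[u]^\A$ for every one-hole context $C$. This is an induction on $C$ using monotonicity of each constructor:
\begin{itemize}
    \item application is monotone in both arguments;
    \item $\oc$ is monotone by the definition of an exponential implicative structure;
    \item under $\lambda y$ and $\lambda\oc y$, one uses that $\li$ is monotone in its second argument and that meets are monotone. The hypothesis must hold for every value $a$ of the bound variable, which is why it is stated for all parameter instantiations; Proposition~\ref{monotony of substitution with !} is the relevant tool here.
\end{itemize}

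\textbf{Expected obstacle.} The real content is the substitution lemma in the $\lambda\oc y$ case, where capture-avoidance and the meet over $a$ must commute with substituting $u^\A$. Once bound variables are kept distinct from $x$ and from the free variables of $u$, this is a direct rewriting, so I expect no genuine difficulty.
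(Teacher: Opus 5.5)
Your proof is correct and is the expected argument. The paper gives no separate proof of this statement; it implicitly transposes Miquel's proof that $\beta$-reduction decreases interpretations. That proof instantiates the meet at $a = u^\A$, uses $(\oc u^\A \li b)\,\oc u^\A \preq b$ and the substitution lemma, then closes under contexts by monotonicity of application, $\oc$, $\li$ and meets, which is exactly what you do. One small inaccuracy: in the binder cases of the context closure, the work is done by your induction hypothesis quantified over all instantiations of the bound variable, not by the monotony-of-substitution proposition, which is not needed.
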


For any type $a \in \A$, we allow ourselves to add in typing contexts statements of the form $\oc x : \oc a$.
As a substitution it is a synonym for $x : a$, but it will allow us to keep track of exponentials inside $\lambda_!$-terms.

\begin{proposition}
    The following typing rules are valid in any exponential implicative structure:
    \[
        \AxiomC{$\Gamma, !x : !A \vdash t : C$}
        \RightLabel{$\li R'$}
        \UnaryInfC{$\Gamma \vdash \lambda !x.t : !A \li C$}
    \DisplayProof
    \]
    \[
    \begin{array}{ccc}
        \AxiomC{$\Gamma \vdash t : C$}
        \RightLabel{!wL}
        \UnaryInfC{$\Gamma, !x : !A \vdash (\lambda !x.t)!x : C$}
    \DisplayProof
    &
    &
        \AxiomC{$\Gamma, !x : !A, !y : !A \vdash t : C$}
        \RightLabel{!cL}
        \UnaryInfC{$\Gamma, !z : !A \vdash t[x:z, y:z] : C$}
    \DisplayProof
    \\~\\
        \AxiomC{$\Gamma, x : A \vdash t : C$}
        \RightLabel{!dL}
        \UnaryInfC{$\Gamma, !x : !A \vdash t : C$}
    \DisplayProof
    &&
        \AxiomC{$!\Gamma \vdash t : A$}
        \RightLabel{!R}
        \UnaryInfC{$!\Gamma \vdash !t : !A$}
    \DisplayProof
    \end{array}
    \]
\end{proposition}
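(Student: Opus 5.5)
The approach is to unfold the semantic typing judgment $\Gamma \vdash t : a$, which means $\fv(t) \subseteq \dom(\Gamma)$ and $(t[\Gamma])^\A \preq a$, and to check each rule directly. We use the stated convention that a declaration $\oc x : \oc a$ acts, as a substitution, exactly like $x := a$. I will not reduce to \autoref{prop:semantictyping}. Instead I will use the interpretation clauses for $\oc t$ and $\lambda \oc x.t$, monotonicity of $\oc$, \autoref{monotony of substitution with !}, \autoref{reductions2}, and the basic implicative-structure facts: variance of $\li$, and $(a \li b)\,a \preq b$ from \autoref{prop:applicative}. In every case the free-variable condition is immediate, since the rules only add or rename variables in the context in parallel with the term.

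The structural rules come first, since under the convention they are essentially syntactic. For !dL, the premise is $(t[\Gamma, x := a])^\A \preq C$. The conclusion's context $\Gamma, \oc x : \oc a$ induces the very same substitution, so nothing is to be proved. For !cL, the substitution induced by $\Gamma, \oc z : \oc a$ applied to $t[x:z, y:z]$ equals $t[\Gamma, x := a, y := a]$, which is the premise's substituted term. Hence the two interpretations coincide. For !wL, the conclusion's term is $(\lambda \oc x. t[\Gamma])\,\oc a$, with $x \notin \fv(t)$ (we choose $x$ fresh). This term $\oc$-reduces to $t[\Gamma]$, so \autoref{reductions2} gives $((\lambda \oc x.t[\Gamma])\oc a)^\A \preq (t[\Gamma])^\A \preq C$. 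Alternatively, one can bound the meet $\bigmeet_b (\oc b \li t[\Gamma])^\A$ by its instance at $b = a$ and apply $(\oc a \li c)\,\oc a \preq c$ together with monotonicity of application.

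For $\li R'$, unfold $(\lambda \oc x. t[\Gamma])^\A = \bigmeet_{b \in \A} (\oc b \li (t[\Gamma, x := b])^\A)$. This meet is below its instance at $b = A$, namely $\oc A \li (t[\Gamma, x:=A])^\A$. The premise gives $(t[\Gamma, x:=A])^\A \preq C$, and covariance of $\li$ in its second argument then yields the bound by $\oc A \li C$. For !R, the context $\oc\Gamma$ induces the substitution $x_i := a_i$, so $((\oc t)[\oc\Gamma])^\A = \oc((t[\oc\Gamma])^\A)$ by the interpretation clause. The premise gives $(t[\oc\Gamma])^\A \preq A$, and monotonicity of $\oc$ (part of the definition of exponential implicative structure) concludes.

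The main obstacle is not any calculation; each case is one or two inequalities. The delicate point is being precise about the convention that $\oc x : \oc a$ substitutes $a$ rather than $\oc a$ for $x$. If it were read the other way, !dL and !R would fail without dereliction or promotion axioms on $\oc$. I will therefore state the convention explicitly at the start and then verify each rule under it.
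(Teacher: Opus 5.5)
Your proposal is correct and follows the paper's approach. The paper's proof is only the remark that it suffices to compute the interpretations of the $\lambda_\oc$-terms, and your case-by-case unfolding under the convention that $\oc x : \oc a$ acts as the substitution $x := a$ is exactly that computation made explicit. One cosmetic point: in !wL you need not choose $x$ fresh, since $x \notin \fv(t)$ already follows from $\fv(t) \subseteq \dom(\Gamma)$ and the disjointness of context domains.
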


\begin{proof}
    It suffices to compute the interpretation of the $\lambda_!$-terms.
\end{proof}

We now have two different rules $\li_\oc R$ and $\li_\oc R'$ that overlap. To prevent this, we will always use statements of the form $\oc x : \oc a$ instead of $x : \oc a$. We may think that $\oc x : \oc a$ is a stronger statement as it suppose the existence of an element of type $a$, but as we use generalized realizers, it is always true.

We consider the following additional combinators. The \emph{elementary group}:
    \begin{align*}
    \Ke & := \lambda x \oc y.x  \\
    \We & := \lambda x \oc y.x \oc y \oc y  \\
    \F  & := \lambda \oc x \oc y. \oc (xy)
    \end{align*}
and the \emph{comonadic group}:
\begin{align*}
     \D  & := \lambda \oc x.x  \\
    \cd & := \lambda \oc x. \oc \oc x  
\end{align*}

\begin{proposition}
    The following equalities are satisfied:
    \begin{align*}
        \Ke^\A & = \bigmeet_{a,b \in \A} a \li !b \li a  \\
        \We^\A & = \bigmeet_{a,b \in \A} (!a \li !a \li b) \li !a \li b  \\
        \F^\A  & = \bigmeet_{a,b \in \A} !(a \li b) \li !a \li !b \\
        \D^\A  & = \bigmeet_{a \in \A} !a \li a  \\
        \cd^\A & = \bigmeet_{a \in \A} !a \li !!a  
    \end{align*}
\end{proposition}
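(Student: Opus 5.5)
I will compute each interpretation by unfolding the clauses for $(\lambda \oc x.t)^\A$ and $(\oc t)^\A$, together with Miquel's clause for ordinary abstraction, $(\lambda x.t)^\A = \bigmeet_{a} (a \li (t[x:a])^\A)$. Then I will push the resulting nested meets outward using axiom 2 of implicative structures, $a \li \bigmeet_{b} b = \bigmeet_b (a \li b)$. Applications are interpreted by Miquel's application $ab = \bigmeet\{c : a \preq b \li c\}$. The only facts about it I need are the ones already used for the linear combinators (Proposition 2.24): $(a \li b)\,a \preq b$ and $a \preq b \li ab$ (Proposition~\ref{prop:applicative}, via Propositions~\ref{prop:applicativetoimplicative} and~\ref{prop:implicativetoapplicative}). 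From these, for every $b$,
\[ \bigmeet_{c} \bigl((a \li b \li c) \li c\bigr)\ \text{-type identities hold, in particular } \bigmeet_{f \preq a\li b} (f \li fa) = a \li b \text{ up to the usual argument.} \]

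The easy cases are $\Ke$, $\D$ and $\cd$. For $\D$, the clause gives $\D^\A = \bigmeet_a (\oc a \li a)$ immediately. For $\cd$, it gives $\bigmeet_a (\oc a \li (\oc\oc x)[x:a]^\A) = \bigmeet_a (\oc a \li \oc\oc a)$. For $\Ke$, I get $\Ke^\A = \bigmeet_a (a \li \bigmeet_b (\oc b \li a))$, and axiom 2 turns this into $\bigmeet_{a,b}(a \li \oc b \li a)$.

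The cases $\We$ and $\F$ contain applications, and I will treat them exactly as Miquel treats $\W$ and $\cS$, with a double inequality.

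For $\We$: unfolding gives $\We^\A = \bigmeet_{f,a} (f \li \oc a \li f\,\oc a\,\oc a)$.
\begin{itemize}
\item[($\preq$)] Fix $a,b$ and specialise $f := \oc a \li \oc a \li b$. Then $f\,\oc a\,\oc a \preq b$ by two uses of $(a\li b)a \preq b$, and monotonicity of $\li$ gives the bound.
\item[($\succeq$)] Fix $f,a$ and take $b := f\,\oc a\,\oc a$. Then $f \preq \oc a \li \oc a \li b$ by two uses of $a \preq b \li ab$. Using anti-monotonicity in the first argument, the term $(\oc a\li\oc a\li b)\li\oc a\li b$ lies below $f \li \oc a \li f\,\oc a\,\oc a$.
\end{itemize}
Taking meets on both sides yields equality.

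$\F$ is analogous. Unfolding gives $\F^\A = \bigmeet_{f,a} (\oc f \li \oc a \li \oc(fa))$.
\begin{itemize}
\item[($\preq$)] Specialise $f := a \li b$. Then $fa \preq b$, and since $\oc$ is monotone, $\oc(fa) \preq \oc b$.
\item[($\succeq$)] Given $f,a$, set $b := fa$. Then $f \preq a \li b$, so $\oc f \preq \oc(a\li b)$ by monotonicity of $\oc$, and anti-monotonicity in the first argument of $\li$ finishes the bound.
\end{itemize}
This is where monotonicity of $\oc$, assumed in the definition of exponential implicative structure, is essential.

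The main obstacle is bookkeeping rather than ideas. Each bound-variable clause ranges over all of $\A$, so the $(\succeq)$ direction needs an instantiation of the quantified variables in the target formula in terms of the realizer variables $f,a$. That instantiation is the choice $b := f\,\oc a\,\oc a$ (resp.\ $b := fa$) made above. Once it is made, each inequality is a routine combination of the two adjunction facts, axiom 2, and (anti-)monotonicity of $\li$ and $\oc$.
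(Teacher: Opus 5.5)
Your proof is correct and takes the expected approach. The paper states this proposition without proof; the intended argument is the same direct unfolding of the $\lambda_\oc$-interpretation as for Miquel's combinators in Proposition 2.24: push the meets out with axiom 2, then prove a double inequality by specialising $f$ and instantiating $b := f\,\oc a\,\oc a$ (resp.\ $b := fa$). You are right that monotonicity of $\oc$ is needed for $\F$, and you also tacitly use monotonicity of application in the $\We$ case. Fix the garbled display in your first paragraph: the claimed identity $\bigmeet_{f \preq a \li b}(f \li fa) = a \li b$ is false in general (in a complete Boolean algebra with $a = \bot$ the left side is $\bot$ and the right side is $\top$), and since it is never used, just delete it.
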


\begin{definition}
    We will call \emph{linear $\lambda_\oc$-term} any $\lambda_\oc$-term where every $\lambda$-abstraction bind exactly one variable, this variable is not under a $\oc$, and every free variable appears at most once.
\end{definition}

\begin{definition}
    An exponential combinatory term is a linear $\lambda$-term which is either $\B$, $\C$, $\I$, $\Ke$, $\We$, $\D$, $\cd$, $\F$, a free variable, the $\oc$ of one of the previous, or an application of exponential combinatory terms.
\end{definition}

\begin{theorem}\label{exponential combinatory}
    For each linear $\lambda_\oc$-term $t$, there is a exponential combinatory term $t_0$ such that $t_0 \red t$.
\end{theorem}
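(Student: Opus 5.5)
We mirror the proof of \autoref{thm:linearcombinatory}: first an abstraction lemma extending Lemma~\ref{lem:linearcombiterms} to exponential combinatory terms, then an induction on the structure of the linear $\lambda_\oc$-term $t$. Here $\red$ means the combined reduction $\bred \cup \bangred$.

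\textbf{Step 1 (abstraction lemma).} We prove that for every exponential combinatory term $t$ and distinct $x_1,\dots,x_n \in \fv(t)$, and for each $x_i$ a chosen mode (linear $\lambda x_i$ or exponential $\lambda\oc x_i$), there is an exponential combinatory term $t_0 \red \lambda \overline{x}.t$. The exponential mode is allowed only when every occurrence of $x_i$ in $t$ is under a $\oc$, as the linearity condition on $\lambda_\oc$-terms requires. The induction on $t$ runs as follows.
\begin{itemize}
\item \emph{Variable, linear binding.} Take $\I$.
\item \emph{Constants.} A constant, or $\oc$ of a constant, is closed, so there is nothing to abstract.
\item \emph{Application $t = uv$.} Use a permutation term from Proposition~\ref{prop:combinatorypermutations} composed with $\cmb_{k,n}$, as before. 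The difference is that an exponentially bound variable may now occur in both $u$ and $v$, or in neither. If it occurs in neither, we insert $\Ke$ (suitably composed with $\B$ and $\C$ to reach the right argument position). If it occurs in both, we first rename the occurrences into two variables $!y,!y'$, abstract them separately, and then merge them with $\We$, which realises $\lambda x\oc y.x\oc y\oc y$.
\item \emph{Bang $t = \oc u$.} Here the term lies in the image of $\oc$, and the free variables of $u$ are abstracted exponentially. Take the abstraction $u_0 \red \lambda x_1\ldots x_n.u$ given by the induction hypothesis and push the bang through using $\F$. Since $\F\,\oc f\,\oc a \red \oc(fa)$, iterating gives $\F(\ldots(\F\,\oc u_0)\ldots)$ applied to $\oc x_1,\dots,\oc x_n$, which reduces to $\oc u$. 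Abstracting the $\oc x_i$ then yields a $\lambda\oc$-abstraction, using the rule $(\lambda\oc x.t)\oc u \bangred t[x:u]$ and $\B$-composition.
\end{itemize}

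\textbf{Step 2 (main induction).} We induct on $t$. Variables and applications are handled as in \autoref{thm:linearcombinatory}. For $\oc u$, take $\oc u_0$ if $u_0$ is a single constant; otherwise use $\F$ to distribute the bang over applications, so that $\oc(u_0v_0)$ is obtained from $\F\,\oc u_0\,\oc v_0$. This requires $\oc$ of free variables, which the definition allows only for constants. So the base case $\oc x$ must itself be handled through $\lambda\oc$-abstraction at the binding site, using $\D$ and $\cd$: $\D$ realises dereliction, turning $\lambda\oc x.x$ into a usable linear occurrence, and $\cd$ provides $\oc\oc x$ from $\oc x$ when nested bangs appear. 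For $\lambda x.u$ and $\lambda\oc x.u$, apply Step 1 to $u_0$ in the appropriate mode.

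\textbf{Main obstacle.} The delicate case is the bang in Step 1: a term $\oc u$ whose free variables are bound by $\lambda\oc$ at various depths of $\oc$. The combinators must reproduce exactly the promotion pattern $\oc\Gamma \vdash \oc u$. The plan is to strengthen the induction hypothesis so that it tracks, for each variable, its box depth, and to insert $\cd$ once per extra level of nesting.

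A further subtlety is that the clause ``$\oc$ of a free variable'' is absent from the definition of exponential combinatory terms. We would first check whether $\oc x$ with $x$ free can actually occur in a linear $\lambda_\oc$-term. If it can, we expect to prove the theorem only for closed terms, or up to the convention $\oc x : \oc a$ introduced above, treating $\oc x$ as an atomic variable.
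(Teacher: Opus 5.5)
The paper states this theorem without proof, so your plan has to carry the whole argument. It has a genuine gap: nothing in it produces $\lambda\oc$-binders.

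Reduction never creates binders, so every binder in a reduct of an exponential combinatory term is a copy of a binder of one of the constants. $\B$, $\C$, $\I$, and hence $\cmb_{k,n}$ and the permutation terms, carry only linear binders. So ``$\cmb_{k,n}$ as before'' in your application case yields $\lambda x_1\ldots x_n.\,u_0x_1\ldots$ with \emph{linear} binders. If $u_0\red\lambda\oc x_1.u$, the subterm $u_0x_1$ is then stuck. The same goes for ``$\B$-composition'' in your bang case: $\F\,\oc u_0$ supplies exactly one $\lambda\oc$-binder, not $n$ of them.

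Take $\lambda\oc x.\,y\,\oc x$. Your recipe (bang case for $\oc x$, then composition) gives $\B\,y\,(\F\,\oc\I)\red\lambda z.\,y\,((\lambda\oc w.\oc w)\,z)$, which has the wrong binder and is stuck. Among the $\lambda\oc$-binders of $\Ke,\We,\F,\D,\cd$, only that of $\We$ can end up with body $y\,\oc x$. The bodies of the others either avoid the bound variable or are, and remain, a variable, an abstraction or a $\oc$-term. So $\We$ and $\Ke$ are needed even for a single occurrence: $\We(\B\Ke\,y)\red\lambda\oc z.\,y\,\oc z$. The missing lemma is this $\oc$-$\eta$-expansion: for every $L$, $\We(\B\Ke L)\red\lambda\oc z.\,L\,\oc z$. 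It reduces $\oc$-abstraction to linear abstraction over a fresh variable $a$ standing for $\oc z$.

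With this lemma the proof closes, and your depth-tracking obstacle disappears. Induct on $t$ with the following invariant: every free variable occurs in $t_0$ exactly once, as a plain atom $x$ if it lies under no $\oc$ in $t$, and as an atom $\oc x$ otherwise. (The clause ``the $\oc$ of one of the previous'' does include free variables.)
\begin{itemize}
\item For $\oc u$, push the bang through $u_0$ with $\F$, sending an atom $c$ to $\oc c$ and an atom $\oc c$ to $\cd\,\oc c$. This is all the depth bookkeeping needed.
\item For $\lambda x.u$, $x$ is a plain atom, so the bracket abstraction of Lemma~\ref{lem:linearcombiterms} applies unchanged.
\item For $\lambda\oc x.u$, the body $u$ need not be linear, so the induction hypothesis does not apply to it directly. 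Your Step~1 works on combinatory terms, where each free variable occurs at most once, so the ``occurs in both $u$ and $v$'' case cannot arise there. Instead, rename the $m$ occurrences of $x$ apart to get a linear $u'$, and apply the hypothesis to get $u'_0$.

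Replace each plain atom $x_i$ by $\D\,\oc x_i$. The linearity condition restricts only linear binders, so terms such as $\lambda\oc x.\,x\,y$ are legitimate, and your ``only under $\oc$'' restriction is a misreading. Then replace each $\oc x_i$ by a fresh $a_i$ and abstract linearly, obtaining $L\red\lambda a_1\ldots a_m.s$.

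Finally take $\Ke\,u'_0$ if $m=0$, $\We(\B\Ke L)$ if $m=1$, and $\We(\cdots(\We L))$ with $m-1$ copies of $\We$ if $m\ge 2$.
\end{itemize}
Retreating to closed terms is not an option. A depth argument shows that no combinatory term without $\oc x$ atoms reduces to $\oc x$, and the induction must pass through open bodies such as that of $\lambda\oc x.\oc x$.
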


\begin{proposition}
    The following typing rules are now valid, where $\Gamma = x_1 : a_1, ..., x_n : a_n$ in $!R$.
    \begin{equation*}
    \begin{array}{c}
        \AxiomC{$\Gamma \vdash t : C$}
        \RightLabel{!wL}
        \UnaryInfC{$\Gamma, !x : !A \vdash \Ke t !x : C$}
    \DisplayProof
    \\\\
    \AxiomC{$\Gamma, !x : !A, !y : !A \vdash t : C$}
        \RightLabel{!cL}
        \UnaryInfC{$\Gamma, !z : !A \vdash \We (\lambda !x!y.t) !z : C$}
    \DisplayProof
    \\\\
        \AxiomC{$\Gamma, x : A \vdash t : C$}
        \RightLabel{!dL}
        \UnaryInfC{$\Gamma, !x : !A \vdash (\lambda x.t) (\D !x) : C$}
    \DisplayProof
    \\\\
        \AxiomC{$!\Gamma \vdash t : A$}
        \RightLabel{!R}
        \UnaryInfC{$!\Gamma \vdash \F (... (\F !(\lambda !x_1, ..., !x_n.t) !x_1) ...)!x_n : !A$}
    \DisplayProof
    \end{array}
    \end{equation*}
\end{proposition}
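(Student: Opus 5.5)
The plan is to check each of the four rules separately. In every case I will compute the interpretation of the displayed $\lambda_\oc$-term using the equalities of the preceding proposition for $\Ke^\A,\We^\A,\D^\A,\F^\A$. I will then conclude by Subsumption and Context subsumption from Proposition~\ref{prop:semantictyping}, together with the earlier typing rules (!wL, !cL, !dL, !R and $\li R'$). Throughout, a declaration $\oc x:\oc a$ is read as the substitution $x:=a$, so that $\oc x$ denotes $\oc a$.

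For !wL, assume $\Gamma\vdash t:C$, i.e.\ $(t[\Gamma])^\A\preq C$. Instantiating $\Ke^\A$ at $a=C$ and $b=A$ gives $\Ke^\A\preq C\li\oc A\li C$. By Proposition~\ref{prop:applicative} (items 2--3) and monotonicity of application, $\Ke^\A (t[\Gamma])^\A\, \oc A\preq (C\li \oc A\li C)\,C\,\oc A\preq C$.

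For !cL, first apply $\li R'$ twice to get $\Gamma\vdash\lambda\oc x\oc y.t:\oc A\li\oc A\li C$. Then instantiate $\We^\A\preq(\oc A\li\oc A\li C)\li\oc A\li C$ and apply it to that term and to $\oc z:=\oc A$. This gives type $C$ by the same modus-ponens inequality $(a\li b)a\preq b$.

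For !dL, instantiate $\D^\A\preq \oc A\li A$, so $\D\,\oc x:A$ in context $\oc x:\oc A$. The hypothesis $\Gamma,x:A\vdash t:C$ together with the Cut rule of Proposition~\ref{prop:semantictyping} then yields $(\lambda x.t)(\D\oc x):C$.

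For !R, with $\oc\Gamma=\oc x_1:\oc a_1,\dots,\oc x_n:\oc a_n$, I will first show by $n$ applications of $\li R'$ that $\vdash \lambda\oc x_1\ldots\oc x_n.t : \oc a_1\li\cdots\li\oc a_n\li A$. This term is closed and contains no free variables of $\Gamma$, which is what the statement requires. Monotonicity of $\oc$ then gives $\oc(\lambda\ldots)^\A\preq\oc(\oc a_1\li\cdots\li A)$. Next, an induction on $k$ shows that after $k$ applications of $\F$ to $\oc x_1,\dots,\oc x_k$ one obtains something $\preq\oc(\oc a_{k+1}\li\cdots\li\oc a_n\li A)$. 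Each step uses $\F^\A\preq\oc(c\li d)\li\oc c\li\oc d$ with $c=\oc a_k$; the argument is then $\oc x_k$, interpreted as $\oc a_k$, matching $c$. At $k=n$ we get $\oc A$.

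The step I expect to be the main obstacle is the bookkeeping in !R. There the hypotheses of the context are themselves of the form $\oc a_i$, so $\F$ must be instantiated at $\oc a_i$ and not at $a_i$. This is also where the convention $\oc x:\oc a \equiv x:=a$ has to be used consistently with the interpretation $(\oc x)^\A=\oc a$. The remaining steps are direct instances of the modus-ponens inequality $(a\li b)\,a\preq b$ from Proposition~\ref{prop:applicative}.
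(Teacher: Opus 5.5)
Your arguments for !wL, !cL and !dL are correct. In !dL, rename the bound $x$ of $\lambda x.t$ before applying Cut, since $x$ is also the variable declared by $\oc x:\oc A$.

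The gap is in !R, exactly at the step you flagged. From $\oc\Gamma\vdash t:A$ you correctly get $(\lambda\oc x_1\ldots\oc x_n.t)^\A\preq\oc a_1\li\cdots\li\oc a_n\li A$. But the second argument of $\F^\A\preq\oc(c\li d)\li\oc c\li\oc d$ must lie below $\oc c$, not below $c$.

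\begin{itemize}
\item With $c=\oc a_k$, which makes the first argument match, $\oc x_k$ would have to denote something below $\oc\oc a_k$. Under the substitution $x_k:=a_k$ it denotes only $\oc a_k$.
\item With $c=a_k$ the second argument matches. But then you need $\oc(\oc a_k\li d)\preq\oc(a_k\li d)$, which requires something like $a_k\preq\oc a_k$.
\end{itemize}

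Neither $\oc a\preq\oc\oc a$ nor $a\preq\oc a$ holds in a general exponential implicative structure. The combinator $\cd$ only lives in an exponential separator and yields no inequality in $\A$, whereas semantic typing judgments are inequalities in $\A$.

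Better bookkeeping cannot fix this, because the rule as printed is not valid. Take $\A=\pow(\Lambda/\beta)$ with $a\li b=\{t\mid\forall u\in a,\ tu\in b\}$ and $\oc a=\{\lambda k.ku\mid u\in a\}$. Let $n=1$, $t=x$ and $a_1=A=\{\K\}$. The premise $\oc x:\oc A\vdash x:A$ holds. Now set
\begin{itemize}
\item $f=\lambda pq.p(\lambda x.q(\lambda y.\lambda k.k(xy)))\in\F^\A$,
\item $p=\lambda k.k(\lambda y.y\I)\in\oc\big((\lambda\oc x.x)^\A\big)$,
\item $q=\lambda k.k\K\in\oc A$.
\end{itemize}
The interpretation of $\F\,\oc(\lambda\oc x.x)\,\oc x$ then contains $f\,p\,q=_\beta\lambda k.k(\K\I)$, which is not in $\oc A$. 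Syntactically the same failure shows up: $\F\,\oc(\lambda\oc x_1\ldots)\,\oc x_1$ reduces to $\oc((\lambda\oc x_1\ldots)\,x_1)$, which is stuck.

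The intended term uses ordinary abstractions $\lambda x_1\ldots x_n.t$. Since $\oc x_i:\oc a_i$ is the substitution $x_i:=a_i$, the premise gives $(\lambda x_1\ldots x_n.t)^\A\preq a_1\li\cdots\li a_n\li A$. You then instantiate $\F$ at $c=a_k$, the argument $\oc x_k\mapsto\oc a_k$ matches $\oc c$, and your induction on $k$ goes through unchanged. This is also the term that reduces to $\oc t$, in agreement with the earlier !R rule.
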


\begin{definition}
    We say that a linear separator is an exponential separator if it contains the combinators $\Ke, \We, \D, \cd, \F$, and is closed by $!$. We call exponential core (written $\ecore$) the smallest exponential separator.
    
    A linear separator closed by $\oc$ and containing the combinators $\Ke, \We, \F$ is called an \emph{elementary} separator. The elementary core, written $\elemcore$, is the smallest elementary seprarator.
\end{definition}

\begin{proposition}
    If a formula $A$ is a tautology in IMELL, then $A^\A \in \ecore(\A)$. If a formula $A$ is a tautology in IMELL provable without dereliction and digging, then $A^\A \in \elemcore(\A)$.
\end{proposition}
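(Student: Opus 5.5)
We prove the more general statement for sequents: if $\Gamma \vdash A$ is derivable in IMELL, with $\Gamma = A_1, \ldots, A_n$, then there is a linear $\lambda_\oc$-term $t$ with free variables among $x_1, \ldots, x_n$ such that $x_1 : A_1^\A, \ldots, x_n : A_n^\A \vdash t : A^\A$. Here a hypothesis of the form $\oc B$ is recorded as $\oc x_i : \oc B^\A$, following the convention fixed above. The proposition is the case $n = 0$. We then need to show that a closed linear $\lambda_\oc$-term $t$ with $\vdash t : a$ yields $a \in \ecore(\A)$.

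\textbf{Step 1: building the term.} We proceed by induction on the derivation, mirroring the proof of Proposition~\ref{prop:MLLsoundness}.
\begin{itemize}
\item For axiom, cut, $\li$-R and $\li$-L we use the semantic typing rules of Proposition~\ref{prop:semantictyping}.
\item For $\tens$-R and $\tens$-L we use Proposition~\ref{prop:semantictyping:tensor}.
\item For weakening, contraction, dereliction and promotion we use the last proposition, with the terms $\Ke t \oc x$, $\We(\lambda \oc x \oc y.t)\oc z$, $(\lambda x.t)(\D \oc x)$ and the iterated $\F$-term.
\item For promotion, digging must also be handled. If the contexts are formed by previous derelictions, the context entries $\oc x_i : \oc a_i$ are consumed by $\F$ applied to $\oc x_i$. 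When a context formula is itself $\oc\oc$-shaped we insert $\cd$.
\end{itemize}
At each step we check that the produced term is linear in the sense of linear $\lambda_\oc$-terms. Every rule uses each context variable exactly once. In contraction the two occurrences are bound by $\lambda\oc x\oc y$, so the free variable $z$ occurs once. In promotion, the bound variables $\oc x_i$ are not under a $\oc$ at the binder, since $t$ sits inside $\oc(\lambda\oc\overline{x}.t)$ and is closed there.

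\textbf{Step 2: from closed terms to the core.} For a closed linear $\lambda_\oc$-term $t$, Theorem~\ref{exponential combinatory} gives an exponential combinatory term $t_0$ with $t_0 \red t$. By Proposition~\ref{reductions2} (together with the corresponding $\beta$-fact), $t_0^\A \preq t^\A \preq A^\A$. Now $t_0^\A$ is built from the interpretations of $\B, \C, \I, \Ke, \We, \D, \cd, \F$, of their $\oc$'s, and of applications. All of these lie in $\ecore(\A)$: the separator contains the combinators, is closed under $\oc$, and is closed under application by item 3'. Upward closure then gives $A^\A \in \ecore(\A)$.

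\textbf{Elementary case.} For proofs without dereliction and digging, $\D$ and $\cd$ never appear in the produced term. The combinatory term from Theorem~\ref{exponential combinatory} must likewise avoid them. We guarantee this by checking that its construction introduces $\D$ only for a variable $\oc x$ used under no $\oc$, and $\cd$ only for nested $\oc$'s. Neither pattern occurs in terms produced by the elementary rules, since promotion wraps the entire body and binds only $\oc$-variables.

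\textbf{Main obstacle.} The hardest step is promotion, together with this last refinement of Theorem~\ref{exponential combinatory}. One must show that the iterated $\F$ term has type $\oc A^\A$ in a context $\oc\Gamma$. This requires $\oc$ to be monotone, so that $\oc(\lambda\oc\overline{x}.t)^\A \preq \oc(\oc a_1 \li \cdots \li A^\A)$, and then repeated use of the equality for $\F^\A$. If the combinatory translation does not visibly avoid $\D$ and $\cd$, we will instead bypass it in the elementary case. We would do so by inducting directly on derivations, using the combinator forms $\Ke, \We, \F$ in the typing rules, so that the realiser itself is assembled from $\B,\C,\I,\Ke,\We,\F$.
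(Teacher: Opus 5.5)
Your overall plan is the intended one: build a realiser by induction on the derivation with the exponential typing rules, then push it into the core by closure. The promotion case, however, fails as you set it up. The chain $\F(\cdots(\F\,\oc(\lambda\oc x_1\ldots\oc x_n.t)\,\oc x_1)\cdots)\,\oc x_n$ is well typed only for \emph{functorial} promotion, where the body is typed in $\Gamma$ and abstracted with plain binders. Since $\F^\A\preq\oc(a\li b)\li\oc a\li\oc b$, boxing a function of type $\oc a_1\li c$ yields something that expects $\oc\oc a_1$, whereas $\oc x_1$ only supplies $\oc a_1$. So for the standard IMELL rule (premise $\oc\Gamma\vdash A$) you must apply $\cd$ to \emph{every} context variable. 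Neither ``contexts formed by previous derelictions'' nor ``$\oc\oc$-shaped context formulas'' is a correct case split. For instance, promoting the axiom $\oc A\vdash\oc A$ gives $\F\,\oc(\lambda\oc x.\oc x)\,\oc x$. Its interpretation is only bounded by $(\oc\oc a\li\oc\oc a)\,\oc a$, not by $\oc\oc a$. The paper's displayed $\oc R$ term has the same mismatch unless its premise is read as $\Gamma\vdash t:A$, so you cannot cite it literally.

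This also undermines the elementary half. With standard promotion, $\oc A\li\oc\oc A$ is provable with no dereliction, and there is no digging rule to avoid. Yet $\oc a\li\oc\oc a$ is exactly the digging principle, which need not lie in $\elemcore(\A)$; the elementary models discussed in the paper are said not to validate it. So your claim that $\cd$ never occurs fails in that calculus. The second statement only makes sense for a presentation with functorial promotion plus explicit dereliction and digging rules, matching the groups $\Ke,\We,\F$ and $\D,\cd$. Fix that calculus and add the missing digging case.

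Then, rather than relying on an unproved refinement of the exponential combinatory theorem, carry out the fallback you sketch. Show by induction that the curried formula $B_1\li\cdots\li B_n\li A$ of each derivable sequent is interpreted in the separator. Each rule is then realised by a linear $\lambda$-term with parameters in $S$:
\begin{itemize}
\item promotion: $\lambda y_1\ldots y_n.\F(\cdots(\F\,\oc p\,y_1)\cdots)\,y_n$, using closure under $\oc$ and monotonicity of $\oc$;
\item weakening: $\lambda\overline{x}y.\Ke(p\overline{x})y$;
\item contraction: $\lambda\overline{x}z.\We(p\overline{x})z$;
\item dereliction: $\lambda\overline{x}y.p\overline{x}(\D y)$;
\item digging: $\lambda\overline{x}y.p\overline{x}(\cd y)$.
\end{itemize}
Proposition~\ref{prop:lambdaclosure} then concludes. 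In the elementary case no $\lambda\oc$-binders appear and neither $\D$ nor $\cd$ occurs, purely syntactically.
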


\begin{theorem}\label{thm:fromlineartoint}
    Let $(\A, \preq, \li, !)$ be an exponential implicative structure. Then $(\A', \preq, \to)$, with $\A' = \A$ and $a \to b =\ !a \li b$, is an implicative structure.

    In addition, if $S$ is an exponential separator in $\A$, it is an intuitionistic separator in $\A'$.
\end{theorem}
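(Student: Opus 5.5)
The statement splits into a purely order-theoretic part and a part about separators, and I would treat them in that order.

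\textbf{First part: the implicative structure axioms.} Both axioms for $a \to b = \oc a \li b$ follow from those of $\li$ and the monotonicity of $\oc$. For axiom 1, take $a' \preq a$ and $b \preq b'$. Monotonicity of $\oc$ gives $\oc a' \preq \oc a$, and axiom 1 for $\li$ then gives $\oc a \li b \preq \oc a' \li b'$. For axiom 2, we have $a \to \bigmeet_{b\in B} b = \oc a \li \bigmeet_{b \in B} b = \bigmeet_{b\in B}(\oc a \li b)$, directly from axiom 2 for $\li$ applied with premise $\oc a$. The complete lattice is unchanged, so nothing else needs checking.

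\textbf{Second part: $S$ is a separator of $\A'$.} Let $S$ be an exponential separator of $\A$. Upward closure is inherited, since the order is unchanged.

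For modus ponens in $\A'$, suppose $a \in S$ and $\oc a \li b \in S$. Because $S$ is closed under $\oc$, we get $\oc a \in S$. Linear modus ponens in $S$ then gives $b \in S$.

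Miquel's definition also requires the intuitionistic core combinators, equivalently $\K^{\A'}$ and $\cS^{\A'}$, to lie in $S$. Read with $\to$, these are
\[
\K^{\A'} = \bigmeet_{a,b} (\oc a \li \oc b \li a)
\]
and
\[
\cS^{\A'} = \bigmeet_{a,b,c} \big(\oc(\oc a \li \oc b \li c) \li \oc(\oc a \li b) \li \oc a \li c\big).
\]

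\textbf{Deriving the combinators from IMELL.} Each of these meets is the interpretation of a closed IMELL-provable formula, instantiated uniformly in $a,b,c$.
\begin{itemize}
\item $\K$ comes from dereliction followed by weakening of $\oc b$.
\item $\cS$ comes from several steps: dereliction on the two banged hypotheses; contraction of $\oc a$ into two copies; promotion, turning the second copy into $\oc\oc a$ via $\cd$ and then $\F$, so that $\oc(\oc a \li b)$ can be applied; finally, application.
\end{itemize}

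To transfer these derivations into $S$, I would not go through atomic interpretations. Instead I would build explicit exponential combinatory terms from $\Ke,\We,\D,\cd,\F,\B,\C,\I$ whose interpretations lie below the relevant meets. The typing rules !wL, !cL, !dL and !R of the last proposition give each such term the type $\oc a \li \oc b \li a$ (resp.\ the $\cS$ type) for every fixed $a,b,c$. The Generalization rule of Proposition~\ref{prop:semantictyping} then yields that the single term's interpretation lies below the meet. Closure of $S$ under application and the membership of the generators (Proposition~\ref{prop:lambdaclosure} extended by the exponential combinators) puts that interpretation in $S$, and upward closure finishes the argument.

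\textbf{Main obstacle.} The expected difficulty is the promotion step inside $\cS$. We must produce $\oc b$ from $\oc(\oc a\li b)$ and a copy $\oc a$ that we also need in $\oc\oc a$ form. This is exactly where $\cd$ (digging) combines with $\F$, and it should be checked carefully that the !R rule applies with a context consisting only of banged hypotheses. If it is cleaner, I would instead show $\B^{\A'},\C^{\A'},\I^{\A'}$ directly, together with $\K^{\A'}$ and $\W^{\A'}$, since Miquel's core is also generated by these. Each of these is again a routine IMELL theorem whose realiser uses only dereliction, digging, functoriality and weakening/contraction.
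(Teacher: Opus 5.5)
Your proposal is correct and follows the same route as the paper. Both derive the two axioms from monotonicity of $\oc$ and the axioms of $\li$, get modus ponens from closure of $S$ under $\oc$ followed by linear modus ponens, and place $\K^{\A'}$ and $\cS^{\A'}$ in $S$ by exhibiting realisers built from the exponential combinators; the paper's witnesses are $(\B\Ke\D)^\A$ and $(\lambda\oc x\oc y\oc z.x\oc z\oc(y\oc z))^\A$. Your $\cS$ witness contracts $\oc a$ with $\We$ and produces $\oc b$ by applying $\F$ to the non-derelicted $\oc(\oc a\li b)$ and to $\cd$ of the second copy, which makes explicit why the witness lies in $S$, a point the paper leaves implicit since its term uses $z$ twice. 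One correction to your step list: only $\oc(\oc a\li\oc b\li c)$ should be derelicted, not both banged hypotheses.
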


\begin{proof}
    The two axioms are easy to check.
    Upward closure is the same in both structure.
    Closure by modus ponens follow from closure by exponential.
    We have
    \begin{align*}
        (\B\Ke\D)^\A & \preq \K^{\A'} \\
        (\lambda !x!y!z.x!z!(y!z))^\A & \preq \cS^{\A'} 
    \end{align*}
    with 
\[\cS^{\A'} = \bigmeet_{a,b,c \in \A} (!(!a \li !b \li c) \li !(!a \li b) \li !a \li c)\]\qedhere
\end{proof}

\subsection{Exponentials in linear realisability models}

The above approach to exponential can be used to show that some of the 
known linear realisability constructions are not only instances of linear 
implicative algebras but also of either elementary or exponential implicative 
algebras.

(TODO: in the following, we should add intersections since the proof-term is universal, but it should be introduced with a footnote since intersection is in the papers localised.)

In particular, Girard's geometry of interaction in the hyperfinite factor \citep{goi5}
and Seiller's interaction graphs model based on graphings \citep{seiller-goig} are 
instances of elementary implicative algebras. Indeed, both models introduce a
operation called \emph{perennisation}, noted $\oc$, on the set of types. In both cases,
it is established that specific terms inhabit the following types:
\[ \mathbf{\bigcap_{A,B}\oc(A\multimap B)\otimes \oc A \multimap \oc B}, \quad \mathbf{\bigcap_{A,B} A\otimes \oc B \multimap A}, \quad \mathbf{\bigcap_{A} \oc A \otimes \oc A \multimap \oc A},   \]
which, up to de Morgan equalities and standard manipulation, establishes that $\Ke$, $\We$, and $\F$ are inhabited.
It is worth noting that some arguments are given \citep{seiller-goig} to establish that those models do not satisfy further 
exponential principles, and therefore should not define exponential implicative algebras.

Other linear realisability models, notably the first geometry of interaction models \citep{goi1,goi2,goi3} and Seiller's later extensions of Interaction Graphs models \citep{seiller-goif,seiller-markov} give rise to exponential implicative algebras. Indeed, it is established that the above three types are inhabited, but it is also the case of the following two:
\[ \mathbf{\bigcap_{A} \oc A \multimap \oc \oc A}, \quad \mathbf{\bigcap_{A} \oc A \multimap  A}.\]
These results thus establish that in these models, $\Ke$, $\We$, $\F$, $\D$, and $\cd$ are inhabited. As a consequence, those models define implicative algebras in the sense of Miquel based on Theorem \ref{thm:fromlineartoint}, something that was not known before.



\section{Additives and the fixpoint exponential}

In this section, we will introduce additional structure to define additive connectives in a different, more explicit way. This will allow us to introduce an alternative approach for exponential connectives based on the fixpoint exponential construction.

\subsection{Records}

\begin{definition}[record]
  Let $(\A, \preq, \to)$ be an implicative structure. We say that a
  \emph{record} on $\A$ is a function $\Recl: \A \to \A$ such
  that:
  \begin{align*}
    \bigmeet_{i \in I} \Recl(a_i) = \Recl\left(\bigmeet_{i \in I}a_i\right)
  \end{align*}
\end{definition}

\begin{remark}
The above definition implies the existence of a second function such that:
    \begin{align*}
      \Appl{a} \preq b\qquad\qquad &\Leftrightarrow \qquad\qquad a \preq
                                       \Recl(b).
    \end{align*}
Both $\Appl$ and $\Recl$ are monotone and satisfy the following:
\begin{align*}
\Recl(\top)&= \top\\
\Appl{a} &= \bigmeet \{ b \in \A \mid a \preq \Recl(b) \}.
\end{align*}
 This adjoint couple induces both a monad and a comonad on $\A$, hence $a
    \preq \Recl(\Appl{a})$ and $\Appl{\Recl(a)} \preq a$.
\end{remark}

\begin{proposition}[typing rules for records]
  Let $\Recl$ be a record. The semantic typing rules
  \begin{align*}
    \infer{\Gamma \vdash \Appl{t} : a}{\Gamma\vdash t : \alpha} \qquad \qquad
    \infer{\Gamma \vdash \Recl(t) : \Recl(a)}{\Gamma\vdash t : a}
  \end{align*}
  are valid in any implicative structure, where $\alpha$ is such that
  $\Appl{\alpha} = \Appl{\Recl(a)}$.
\end{proposition}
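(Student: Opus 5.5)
The plan is to unfold the definition of the semantic typing judgment, $\Gamma \vdash t : a$ iff $\fv(t) \subseteq \dom(\Gamma)$ and $(t[\Gamma])^\A \preq a$. The interpretation of the new term formers is the obvious one: $(\Recl(t))^\A := \Recl(t^\A)$ and $(\Appl{t})^\A := \Appl{(t^\A)}$. In both rules the free variables of the conclusion are those of the premise, so the side condition on free variables is immediate. Substitution of $\Gamma$ commutes with both constructors, so it only remains to compare the interpreted terms with the given types.

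For the rule introducing $\Recl$, the premise gives $(t[\Gamma])^\A \preq a$. By the remark following the definition of record, $\Recl$ is monotone: it preserves all meets, and in particular binary ones. Hence
\[ (\Recl(t)[\Gamma])^\A = \Recl((t[\Gamma])^\A) \preq \Recl(a), \]
which is the conclusion.

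For the rule introducing $\Appl$, the premise gives $(t[\Gamma])^\A \preq \alpha$. By monotonicity of $\Appl$ (from the same remark), $\Appl{((t[\Gamma])^\A)} \preq \Appl{\alpha}$. The hypothesis on $\alpha$ then gives $\Appl{\alpha} = \Appl{\Recl(a)}$. The counit of the adjunction, $\Appl{\Recl(a)} \preq a$, finishes the chain:
\[ (\Appl{t}[\Gamma])^\A \preq \Appl{\alpha} = \Appl{\Recl(a)} \preq a. \]

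Both facts used are derived directly from the Galois connection $\Appl{a} \preq b \Leftrightarrow a \preq \Recl(b)$. Monotonicity of $\Appl$ holds because if $a \preq a'$ then $a \preq a' \preq \Recl(\Appl{a'})$, and the connection yields $\Appl{a} \preq \Appl{a'}$. The counit holds because $\Recl(a) \preq \Recl(a)$ gives $\Appl{\Recl(a)} \preq a$.

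The only real obstacle is making sure the Galois connection is well defined, i.e.\ that $\Appl{a} := \bigmeet\{b \mid a \preq \Recl(b)\}$ satisfies $a \preq \Recl(\Appl{a})$. This holds because $\Recl$ commutes with arbitrary meets: $\Recl(\Appl{a}) = \bigmeet\{\Recl(b) \mid a \preq \Recl(b)\} \succcurlyeq a$. The meet is over a nonempty set since $\Recl(\top) = \top$ (empty meet). The converse direction of the connection follows from monotonicity of $\Recl$.
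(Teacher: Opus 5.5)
Your proof is correct and is essentially the argument the paper relies on. The paper gives no separate proof of this proposition; it uses the facts stated in the remark after the definition of records, namely monotonicity of $\Recl$ and of $a \mapsto \Appl{a}$ and the counit $\Appl{\Recl(a)} \preq a$, which are exactly the ingredients of your two chains of inequalities, and your derivation of these facts from preservation of arbitrary meets is a correct supplement.
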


We say that two records $\Recl$ and $\Recr$ are compatible if
\begin{align*}
  \Appl{(\Recl(a) \curlywedge \Recr(b))} &= \Appl{\Recl(a)}\\
  \Appr{(\Recl(a) \curlywedge \Recr(b))} &= \Appr{\Recr(b)}
\end{align*}

\begin{example}[$\lambda$-calculus with records]
  \label{ex:lambda-calculus-records}
  Given disjoints sets of \emph{variables} $x,y,\dots$ and \emph{labels}
  $\ell_1,\ell_2,\dots$, the $\lambda$-calculus with records is defined by the
  grammar:
  \begin{align*}
    t,u := x \mid tu \mid \lambda x. t \mid t \mid \Record{} \mid
    \Record{t ; \ell = u} \mid t . \ell
  \end{align*}
  with the reduction rules
  \begin{align*}
    (\lambda x. t)u &\to_{\beta} t[u/x]\\
    \Record{t ; \ell =u}.\ell &\to_{\iota} u\\
    \Record{t; \ell=u}.\ell' &\to_{\iota} t.\ell'\text{ if }\ell\neq \ell'
  \end{align*}

  We will write $\Record{\ell_1 = t_1 ; \dots ; \ell_n = t_n}$ for the iterated
  nesting 
  \[ \Record{\Record{\cdots\Record{} ; \ell_1 = t_1} \cdots ; \ell_n =
    t_n}.\] 
    Note that the operational semantics given by the $\iota$-reduction
  means that the same label can be redefined, and only the last definition will
  ever be accessed.

  As every extension of the $\lambda$-calculus, it defines an implicative
  structure $\A$ of sets of terms modulo $\beta\iota$-equivalence. Moreover, if
  $\ell$ is a label, we define a function $\Recl : \A \to \A$ by:
  \begin{align*}
    \Recl(a) = \left\{ t \mid [t.\ell]_{\beta\iota} \in a\right\}
  \end{align*}

  Intuitively $\Recl(a)$ is the set of terms that have a record of label $\ell$
  which contains an element of $a$.  We check that
  $\bigmeet_{i \in I} \Recl(a_i) = \left\{ t \mid \forall i \in I,
    [t.\ell]_{\beta\iota} \in a_i\right\} = \left\{ t \mid [t.\ell]_{\beta\iota}
    \in \cap_{i\in I} a_i\right\} = \Recl(\bigmeet_{i \in I} a_i)$.  Remark
  that for all $a \in \A$, $\Appl{\Recl(a)}=a$; this is a consequence of
  our choice to work in Kleene realizability (where equivalent terms are
  equated) and not in Krivine realizability (where an order relation abstracts
  the reduction).

  Moreover, given two distinct labels $\ell$ and $r$ and associated records
  $\Recl$ and $\Recr$, we have that:
  \begin{align*}
    \lefteqn{\Appl{(\Recl(a) \curlywedge \Recr(b))}}\\
    					&= \bigmeet \{ c \in \A \mid
                                             (\Recl(a) \curlywedge \Recr(b))
                                             \preq \Recl(c) \} \\
                                           &= \bigmeet \left\{ c \in \A \mid
                                             \left\{ t \mid
                                             [t.\ell]_{\beta\iota} \in a \wedge
                                             [t.r]_{\beta\iota} \in b \right\}
                                             \subseteq \left\{ t \mid
                                             [t.\ell]_{\beta\iota} \in c\right\}
                                             \right\}\\
                                           &= a = \Appl{(\Recl(a)}
  \end{align*}
  So $\Recl$ and $\Recr$ are compatible records.
\end{example}

\subsection{Additives}

\begin{definition}[additive connectives]
  Let $\Recl$ and $\Recr$ be two compatible records. We define:
  \begin{align*}
    a \with b &= \Recl(a) \curlywedge \Recr(b) \\
    a \oplus b &= \bigmeet_{c \in \A} \Recl(a \to c) \curlywedge \Recr(b
                 \to c) \to c \\
  \end{align*}
\end{definition}

\begin{proposition}
We have that: 
\[ a\oplus b \preq ((a\to \bot) \with (b \to \bot)) \to \bot.\]
\end{proposition}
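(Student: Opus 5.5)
The inequality follows by instantiating the meet that defines $a\oplus b$ at $c := \bot$, then using monotonicity of implication to simplify the two records. The only step needing care is the first.

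Since $a\oplus b = \bigmeet_{c\in\A}\bigl(\Recl(a\to c)\curlywedge\Recr(b\to c)\to c\bigr)$ is a meet, it lies below each of its terms. Taking $c=\bot$ gives
\[ a\oplus b \preq \bigl(\Recl(a\to\bot)\curlywedge\Recr(b\to\bot)\bigr)\to\bot. \]
By definition of $\with$, we have $\Recl(a\to\bot)\curlywedge\Recr(b\to\bot) = (a\to\bot)\with(b\to\bot)$. The right-hand side is therefore exactly $((a\to\bot)\with(b\to\bot))\to\bot$, and the inequality holds.

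The one point to check is that the definition of $\oplus$ parses as I have read it. The intended reading is $\bigmeet_c\bigl((\Recl(a\to c)\curlywedge\Recr(b\to c))\to c\bigr)$, with $\curlywedge$ binding tighter than $\to$. This is the only reading that makes $\oplus$ the usual second-order encoding of a disjunction eliminated by a pair of continuations.

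If one instead reads $\Recl(a\to c)\curlywedge\bigl(\Recr(b\to c)\to c\bigr)$, the argument adapts but needs one more inequality. We would need $\Recl(a\to\bot)\curlywedge(\Recr(b\to\bot)\to\bot)\preq((a\to\bot)\with(b\to\bot))\to\bot$. This is not generally available without further structure, so I would adopt the first reading, which is clearly the intended one.

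The instance $c=\bot$ is the only place the universal quantification is used. No property of records beyond their definition is needed: in particular, neither compatibility of $\Recl$ and $\Recr$ nor meet-preservation is used. Variance of $\to$ (axiom 1 of implicative structures) is also not needed, because the two expressions coincide syntactically once $\with$ is unfolded.
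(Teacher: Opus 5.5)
Your proof is correct and is exactly the paper's argument: instantiate the meet defining $a\oplus b$ at $c=\bot$, then recognise $\Recl(a\to\bot)\curlywedge\Recr(b\to\bot)$ as $(a\to\bot)\with(b\to\bot)$ (the paper writes this last step as $\preq$, though, as you note, it is an equality). One correction to your aside: the inequality also holds under the alternative parse with no extra structure, since $\Recl(a\to\bot)\curlywedge(\Recr(b\to\bot)\to\bot)\preq\Recr(b\to\bot)\to\bot\preq(\Recl(a\to\bot)\curlywedge\Recr(b\to\bot))\to\bot$ by antitonicity of $\to$ in its first argument; this does not affect your main argument.
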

\begin{proof}
  \begin{align*}
    a \oplus b &= \bigmeet_{c \in \A} \Recl(a \to c) \curlywedge \Recr(b
                 \to c) \to c \\
               &\preq \Recl(a \to \bot) \curlywedge \Recr(b \to \bot)
                 \to \bot \\
               &\preq ((a\to \bot) \with (b \to \bot)) \to \bot \qedhere
  \end{align*}
\end{proof}

\begin{definition}
  Let $(\A, \preq, \li)$ be an implicative structure with two
  compatible records $\Recl$ and $\Recr$.  An \emph{additive separator} is a
  separator $S$ on $\A$ such that
  \begin{enumerate}
  \item for all $a\in \A$, $a\to \Recl(a) \in S$ and $a\to \Recr(a) \in S$ ;
  \item for all $a\in \A$, $a\to \Appl{a} \in S$ and $a\to \Appr{a} \in S$;
  \item for all $a,b \in \A$, $(\Recl(a) \curlywedge \Recr(b)) \to (\Recl(b) 
  \curlywedge \Recr(a)) \in S$; \label{cond:records-commutativity}
  \item for all $a,b,c \in \A$,
  \begin{align*}
   (\Recl\Recl(a) \curlywedge \Recl\Recr(b) \curlywedge \Recr(c)) &\to (\Recl(a)
    \curlywedge \Recr\Recl(b) \curlywedge \Recr\Recr(c)) \in S,\\
    (\Recl(a)
    \curlywedge \Recr\Recl(b) \curlywedge \Recr\Recr(c)) &\to (\Recl\Recl(a)
    \curlywedge \Recl\Recr(b) \curlywedge \Recr(c)) \in S. 
    \end{align*}
    \label{cond:records-associativity}
  \end{enumerate}
  An \emph{additive implicative algebra} is a tuple
  $(\A, \preq, \li, \Recl,\Recr,S)$ where
  $(\A, \preq, \to)$ is an implicative structure, $\Recl$ and $\Recr$
  two compatible records and $S$ an additive separator.
\end{definition}

\begin{remark}
  Condition \ref{cond:records-commutativity} states the commutativity that a
  linear separator does not distinguish between different records: the labels
  are just names that allow to store and retrieve information in isomorphic
  containers.
  
  Condition \ref{cond:records-associativity} states the associativity of the
  nesting of records. Indeed, as records preserve arbitrary infima, the
  condition can be rewriten as
  \begin{enumerate}
  \item[\ref{cond:records-associativity}'] for all
    $a,b,c \in \A$, 
    \begin{align*}
    (\Recl(\Recl(a) \curlywedge \Recr(b)) \curlywedge \Recr(c)) &\to (\Recl(a)
    \curlywedge \Recr(\Recl(b) \curlywedge \Recr(c))) \in S,\\
    (\Recl(a) \curlywedge \Recr(\Recl(b) \curlywedge \Recr(c))) &\to
    (\Recl(\Recl(a) \curlywedge \Recr(b)) \curlywedge \Recr(c)) \in S.
    \end{align*}
  \end{enumerate}
\end{remark}

The interpretation of the $\lambda$-calculus can be extended to the
$\lambda$-calculus with records.


\begin{proposition}
  \label{prop:additives}
  Let $(\A, \preq, \li,S,\Recl,\Recr)$ be an additive implicative
  algebra. We have the following properties between additives:
  \begin{enumerate}
  \item $a\with b \equiprov_S b \with a$
  \item $a\oplus b \equiprov_S b \oplus a$
  \item $a \with b \vdash_S a$ and $a \with b \vdash_S b$
  \item $a \vdash_S a \oplus b$ and $b\vdash_S a \oplus
    b$ \label{prop:additives:oplus-colimit}
  \item $a\with \top \equiprov_S a \equiprov \top \with a$
  \item $a \oplus \bot \equiprov_S a \equiprov \bot \oplus a$
  \item $a\with (b \with c) \equiprov_S (a \with b) \with c$
  \item for all $d\in \A$ such that $a \vdash_S d$ and $b \vdash_S d$, $a
    \oplus b \vdash_S d$
  \item $a\oplus (b \oplus c) \equiprov_S (a \oplus b) \oplus c$
  \end{enumerate}
  as well as the distributivity laws of multiplicatives over additives:
  \begin{enumerate}
  \item $a \otimes (b \oplus c) \dashv_S (a \otimes b) \oplus (a \otimes c) $
  \item $a \multimap (b \with c) \vdash_S (a \multimap b) \with (a \multimap
    c)$
  \item $(a \oplus b) \multimap c \equiprov_S (a \multimap c) \with (b \multimap
    c)$ 
  \end{enumerate}
\end{proposition}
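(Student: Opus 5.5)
Every item reduces to exhibiting a (generalised) realiser in $S$ of the relevant implication. The realiser will always be a term of the $\lambda$-calculus with records built from the combinators, $\Recl,\Recr$, the projections $\Appl{\cdot},\Appr{\cdot}$ and elements of $S$. I will first fix a small toolkit. By the additive separator axioms (1)--(2), the elements $\bigmeet_a(a\to\Recl(a))$, $\bigmeet_a(a\to\Appl{a})$ and their $\Recr$ analogues are bounded below by elements of $S$. I will then use that the record typing rules give, for any $t$, $\Appl{(\Recl(t)\curlywedge\Recr(u))}\preq t$, by compatibility together with the comonad inequality $\Appl{\Recl(a)}\preq a$. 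Combining these with the semantic typing rules of Proposition~\ref{prop:semantictyping}, the $\tens$ rules of Proposition~\ref{prop:semantictyping:tensor}, and the linear $\lambda$-closure Proposition~\ref{prop:lambdaclosure} (which also covers the $\cS$/$\K$-free terms when $S$ is a full separator), it will suffice in each case to write a term and type it.

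\textbf{With-items.} Item (1) is exactly condition~\ref{cond:records-commutativity} with $a,b$ swapped. For item (3), $\lambda x.\Appl{x}$ types $a\with b\to a$ by compatibility, so it lies in $S$ by axiom (2) and upward closure; the case of $b$ is symmetric. For item (5), one direction is item (3). For the other, $a\to a\with\top$ follows from $a\to\Recl(a)\in S$ together with $\Recr(\top)=\top$, which gives $\Recl(a)\curlywedge\Recr(\top)=\Recl(a)$. Item (7) is condition~\ref{cond:records-associativity}' read through the definition of $\with$.

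\textbf{Plus-items.} I will treat $\oplus$ as an impredicative encoding. For item (4), the injection $\lambda x f.(\Appl{f})x$ types $a\to\Recl(a\to c)\curlywedge\Recr(b\to c)\to c$ uniformly in $c$, hence types $a\to a\oplus b$ by Generalization. For item (8), given $u\in S$ realising $a\to d$ and $v$ realising $b\to d$, I will form $\lambda z.z\,w$, where $w\preq\Recl(a\to d)\curlywedge\Recr(b\to d)$ is built by meeting the images of $u,v$ under the record-introduction elements of $S$. Here I need $S$ closed under the binary meet of such record elements. I expect to handle this by noting that condition (1) plus commutativity/associativity yields a pairing term, or else by working directly with $\Recl(u)\curlywedge\Recr(v)$ and checking it lies in $S$ via upward closure from a linear term. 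This is the step I expect to be the main obstacle: separators are not closed under meets in general, so I will need to derive pairing from axioms (1)--(4). My first attempt will be to show $\Recl(a)\to\Recr(b)\to\Recl(a)\curlywedge\Recr(b)$ is realised, using the adjunction characterisation of $\Appl{\cdot}$. Items (2), (6) and (9) then follow from items (4) and (8) in the usual colimit style, using $\bot\to d$ being trivially in $S$ since it equals $\top$.

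\textbf{Distributivity.} For $(a\oplus b)\multimap c\equiprov_S(a\multimap c)\with(b\multimap c)$, the left-to-right direction precomposes with the injections of item (4) and pairs the results. The right-to-left direction is essentially $\lambda p z.z\,p$, up to the record reshuffling from conditions (1)--(2). The tensor law follows by currying (Proposition~\ref{prop:quotientproperties}(4)) from the previous one and item (8). The law $a\multimap(b\with c)\vdash_S(a\multimap b)\with(a\multimap c)$ is obtained by pairing $\lambda x.\Appl{(fx)}$ and $\lambda x.\Appr{(fx)}$ with the same pairing lemma.
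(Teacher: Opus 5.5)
The $\with$-items (1), (3), (5) and (7) are handled correctly, and your argument is exactly the paper's. It uses condition~\ref{cond:records-commutativity}; the instance $(a\with b)\to\Appl{(a\with b)}\in S$ with $\Appl{(a\with b)}\preq a$; $\Recr(\top)=\top$ together with $a\to\Recl(a)\in S$; and condition~\ref{cond:records-associativity}'. The genuine gap is the one you flag yourself, and your planned repair cannot work. From item (8) onwards you need an element of $S$ below a meet $\Recl(u)\curlywedge\Recr(v)$, i.e.\ a $\with$-introduction, and this does not follow from the additive separator axioms:

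\begin{itemize}
\item Axioms (1) and (2) turn one element into one record or one projection.
\item Axioms (3) and (4) take a meet of records as hypothesis and only rearrange it.
\item The only meets available for free are the degenerate ones $\Recl(x)\curlywedge\Recr(\top)=\Recl(x)$.
\item The adjunction $\Appl{x}\preq y\iff x\preq\Recl(y)$ says nothing about $\Recl(u)\curlywedge\Recr(v)$ beyond $\Appl{(\Recl(u)\curlywedge\Recr(v))}\preq u$.
\end{itemize}

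Since $\Recl(a)\to\Recr(b)\to\Recl(a)\curlywedge\Recr(b)$ is the meet of $\Recl(a)\to\Recr(b)\to\Recl(a)$ and $\Recl(a)\to\Recr(b)\to\Recr(b)$, realising it needs one element of $S$ that acts as both a first and a second projection. Generalization over the implicational combinators cannot supply that. Concretely, take the record model of Example~\ref{ex:lambda-calculus-records}, and let $S$ be the sets meeting the applicative closure of $\K$, $\cS$ and the obvious witnesses: $\lambda x.\Record{\ell = x}$, $\lambda x.\Record{r = x}$, $\lambda x.x.\ell$, $\lambda x.x.r$, the swap and the two reassociations. Conditions (1)--(4) hold, yet no element of that closure evaluates to a record with two non-stuck fields. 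So $S$ is not closed under $(u,v)\mapsto\Recl(u)\curlywedge\Recr(v)$, and some pairing principle has to be added as an axiom.

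Consequently several items remain unproved: (9); (2) and (6) along your colimit route; the tensor law; the law for $a\multimap(b\with c)$; and the left-to-right half of $(a\oplus b)\multimap c\equiprov_S(a\multimap c)\with(b\multimap c)$. Only the right-to-left half, $\lambda pz.zp$, goes through as stated, and it needs no record reshuffling. In the last two laws the pair is formed under a variable $f$ used in both components ($\lambda x.\Appl{(fx)}$ and $\lambda x.\Appr{(fx)}$), which is not a linear $\lambda$-term. So if $S$ is only a linear separator, even a closed pairing combinator would not suffice; you would need $\with$-introduction in its context-sharing form.

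The paper's one-line argument for (8), $\lambda t.\,t\,\Recl(a\to d)\,\Recr(b\to d)$, does not resolve this either. It feeds the two records to $t$ separately, which matches a curried $\oplus$ rather than the meet in the definition, so it silently assumes the same pairing.

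Two smaller points:
\begin{itemize}
\item Your toolkit reads axioms (1)--(2) uniformly, as if $\bigmeet_a(a\to\Recl(a))\in S$. They are stated pointwise, and separators are not closed under infinite meets. Your proof of (4), however, needs a single element of $S$ below $f\to\Appl{f}$ for every $f=\Recl(a\to c)\curlywedge\Recr(b\to c)$ before generalising over $c$.
\item $\bot\to d=\top$ fails in a general implicative structure. Still, $\bot\to d\in S$ holds because $d\to d\preq\bot\to d$.
\end{itemize}
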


\subsection{Additives in linear realisability models}

The above definition of additives relates closely to the definition of additive connectives in some of the models from the literature. In particular, we will focus on Girard's model \citep{goi3}, and more precisely the operator-algebraic formulation \citep{Duchesne-phd}\citep[Section 4.1]{seiller-phd} in which additives are treated using partial isometries $p, q \in \mathcal{L}(\mathbb{H})$, where $\mathbb{H}$ is a Hilbert space with countable basis, say $\ell^2(\mathbb{N})$. These partial isometries may be defined as follows on the natural basis of $\mathbb{H}$: $p(b_i) = b_{2i}$, and $q(b_i)=b_{2i+1}$. They have conjugates $p^*, q^*$, and they satisfy the following properties: 
\[ p^* p = 1 = q^* q; \hspace{2em} q^*p = 0 = p^* q; \hspace{2em} pp^* + qq^* = 1, \]
where $1$ denotes the identity operator $x\mapsto x$ and $0$ denotes the zero operator $x\mapsto 0$, both in $\mathcal{L}(\mathbb{H})$.

We can recover the operations $\Appl{a}$ and $\Recl(a)$ as follows: 
\begin{align*}
\Appl{a} &= (1\otimes p^*)a\otimes 1(1\otimes p) \\
\Recl(a) &= (1\otimes p)a\otimes 1(1\otimes p^*)
\end{align*}
It is then easy to check that the interpretation of additives follows the constructions given in the preceding section. Moreover, the needed properties of additive separators are satisfied. In particular, we notice that the existence of the operators $s, t$, used by Girard in his constructions\footnote{From the first geometry of interaction model \citep{goi1} which, even though did not interpret additive connectives, introduced the operators $p, q, s, t$ considered here.} \citep{goi1}, correspond to the last two properties. Indeed, those operators satisfy:
\begin{align*}
s s^* = s^* s = 1 \quad s (u\otimes v) = (v\otimes u)s,\\
tt^* = t^*t =1 \quad t(u\otimes (v\otimes w)) = ((u\otimes v)\otimes w)t,
\end{align*}
where $u\otimes v$ is a notation for $p^* u p+q^* v q$.


\subsection{Fixpoint exponential}

An implicative structure is a complete lattice; it is in particular possible to
use the lattice structure itself to define some operations, such as the
exponentials. In particular, Baelde \citep{Baelde12} has introduced, in the
framework of \(\mu\)MALL, that is, multiplicative-additive linear logic extended
with fixed points, an encoding of the exponentials as fixpoints. We will follow
this encoding.

Let us first remark that, in any additive implicative algebra, and \(X\in \A\),
the applications:
\begin{align*}
  a \mapsto \1 \with X \with (a \otimes a)\\
  a \mapsto \bot \oplus X \oplus (a \parr a)
\end{align*}
are monotonic. Then, as \(A\) is a complete lattice, by Knaster-Tarski theorem,
they both have complete lattices of fixpoints. We can then define the
exponentials to be any of these fixpoints. A canonical choice can the be:
\begin{align*}
  \oc X &:= \nu a. \1 \with X \with (a \otimes a)\\
  \wn X &:= \mu a. \bot \oplus X \oplus (a \parr a)
\end{align*}
where \(\mu\) denotes the least fixpoint and \(\nu\) the greatest fixpoint. It
is fairly easy to check that \(\oc\) verifies the axioms of exponential we gave
in Section \ref{sec:exponentials}. Precisely:

\begin{proposition}
  Let $(\A, \preq, \li, \Recl, \Recr, S)$ be an additive implicative algebra.
  
  The \(\oc:\A \to \A\) function defined as the greatest fixpoint \(\oc X\)of
  \(a \mapsto \1 \with X \with (a \otimes a)\) is monotonic and \(S\) is an
  exponential separator for this function: $(\A, \preq, \li, \oc, S)$ is an
  exponential implicative algebra.
\end{proposition}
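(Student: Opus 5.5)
Monotonicity comes first and is routine. If $X \preq Y$, then for every $a$ we have $\1 \with X \with (a\otimes a) \preq \1 \with Y \with (a \otimes a)$, because $\with$ is a meet of records and records preserve meets, hence are monotone. Knaster–Tarski describes the greatest fixpoint as $\nu a.\phi_X(a) = \bigjoin\{a \mid a \preq \phi_X(a)\}$. Every post-fixpoint of $\phi_X$ is then a post-fixpoint of $\phi_Y$, so $\oc X \preq \oc Y$.

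For the separator conditions the plan has three parts.

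\textbf{Unfolding.} Since $\oc X$ is a fixpoint, $\oc X = \Recl(\1) \curlywedge \Recr(\Recl(X) \curlywedge \Recr(\oc X \otimes \oc X))$, up to the nesting convention for $\with$.

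\textbf{Dereliction, weakening and contraction.} These follow from the additive projections. By Proposition~\ref{prop:additives}(3) we get $a \with b \vdash_S a$ and $a\with b\vdash_S b$. Composing these projections, using the combinator $\B$ and the associativity of $\with$ in Proposition~\ref{prop:additives}(7), gives three facts:
\begin{itemize}
\item $\bigmeet_X (\oc X \li X) \in S$, so $\D^\A \in S$;
\item $\bigmeet_X (\oc X \li \1)\in S$, which combined with $\1 \otimes a \equiprov a$ yields $\Ke$;
\item $\bigmeet_X(\oc X \li \oc X\otimes \oc X) \in S$, which combined with currying (Proposition~\ref{prop:quotientproperties}) yields $\We$.
\end{itemize}
One subtlety: these must be uniform elements, i.e.\ meets over all $X$. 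I would therefore exhibit a single record-calculus term, such as $\lambda x.\Appl{(\Appr{x})}$, whose interpretation lies below each instance. I would then use the typing rules for records, the Generalization rule, and upward closure of $S$, rather than taking meets of memberships in $S$.

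\textbf{Digging, functoriality and closure under $\oc$.} These need coinduction. The principle I intend to use is: if $c \preq \phi_X(c)$ then $c \preq \oc X$. To get $\cd$, I would show that $\oc X$ is a post-fixpoint of $\phi_{\oc X}$. This amounts to checking three inequalities: $\oc X \preq \1$ (from the first component), $\oc X \preq \oc X$ (trivially), and $\oc X \preq \oc X \otimes \oc X$. The last is only available as an entailment $\vdash_S$, not as an order inequality.

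That gap is the main obstacle. The coinduction principle is stated for $\preq$, while the additive and multiplicative laws hold only modulo $S$. My first attempt would be to close a realizer under the fixpoint rather than a truth value: take $c := \bigmeet_X (\oc X \li \oc\oc X)$, or better a self-applicative term built from $\B,\C,\I$ and the records. The goal would be to show by Knaster–Tarski that a suitable candidate $e := \bigjoin\{u \mid u \preq \phi(u)\}$ computed over realizers satisfies $e \preq \oc\oc X$ whenever it is applied to an element of $\oc X$. Concretely, I would define the monotone map $\Psi(u) = \bigmeet_X(\oc X \li \phi_{\oc X}(\ldots u\ldots))$ on $\A$, take its greatest fixpoint, and check that it lies in $S$ because it sits above a linear combination of $S$-elements, using upward closure. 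The same scheme, with $\phi_{X\li Y}$ paired against $\oc X$, handles $\F$.

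Closure of $S$ under $\oc$ follows the same route. If $a \in S$, the realizer witnessing $\1$, $a$, and duplication of $\oc a$ must itself be produced by a fixpoint, so I expect that step to need the same realizer-level coinduction. If it resists, I would fall back on showing that $\oc$ is the largest post-fixpoint modulo $\enteq_S$ in the quotient $\A/S$. Even if that fallback works, the step would still need justification, since $\A/S$ is not guaranteed to be complete.
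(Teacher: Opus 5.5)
\textbf{Main gap: digging, $\F$ and closure under $\oc$.} You leave these open, and they carry the content of the statement. The scheme you sketch for them does not close.

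With $\Psi(u)=\bigmeet_X(\oc X\li(\1\with\oc X\with(u\,\oc X\otimes u\,\oc X)))$, Knaster--Tarski gives you $\nu\Psi\in\A$. However, $\nu\Psi\in S$ holds exactly when some $u\in S$ satisfies $u\preq\Psi(u)$, that is, when $S$ contains a realizer lying below its own unfolding. Such a $u$ is a corecursive realizer: the coiteration that realizes Baelde's $\nu$-rule.

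For a linear separator nothing supplies one. (The linear reading is the intended one, since otherwise $\Ke$ and $\We$ come for free.) Apart from upward closure and modus ponens, the axioms give only $\I^\A,\B^\A,\C^\A$, which interpret strongly normalising linear terms with no fixpoint combinator, plus the pointwise record axioms. Also, $S$ is not closed under the transfinite descending meet from $\top$ that computes $\nu\Psi$. Your phrase ``sits above a linear combination of $S$-elements'' is therefore exactly the unproved step.

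Closure under $\oc$ hits the same wall. By the same coinduction, $\oc a\in S$ holds iff there is $c\in S$ with $c\preq\1\with a\with(c\otimes c)$, which again requires a corecursive realizer in $S$.

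A second, independent problem remains even at the level of entailment. The invariant $\oc X\vdash_S\1\with\oc X\with(\oc X\otimes\oc X)$ needs a $\with$-introduction. From $d\li\Recl(b)\in S$ and $d\li\Recr(c)\in S$ you must conclude $(d\li\Recl(b))\curlywedge(d\li\Recr(c))=d\li(b\with c)\in S$. That is closure of $S$ under binary meets, which is neither an axiom of additive separators nor part of Proposition~\ref{prop:additives}.

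So the missing idea is a genuine transfer of coinduction from $\preq$ to $\vdash_S$, for instance an extra axiom on $S$ or a different definition of $\oc$. Your fallback to $\A/S$ cannot supply it: the fixpoint is computed in $\A$, and nothing makes its class the greatest $\vdash_S$-post-fixpoint. The paper does not fill this in either. It only says the result is ``fairly easy to check'' following Baelde, whose $\nu$-rule is a proof-level rule with an invariant entailment, and that is precisely what fails to carry over.

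\textbf{Two routine steps that need repair.} First, the additive separator axioms are pointwise ($a\to\Appl{a}\in S$ for each $a$). Your uniform realizer $\lambda x.\Appl{(\Appr{x})}$ has interpretation $\bigmeet_a(a\to\Appl{(\Appr{a})})$. It lies in $S$ only under an extra hypothesis that $S$ contains the interpretations of closed record terms. Without that, you do not get $\D^\A$, $\Ke^\A$, $\We^\A$ as meets over all $X$.

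Second, your $\Ke$ step uses $\1\otimes a\equiprov_S a$. Proposition~\ref{prop:quotientproperties} gives this for the unit $\I^\A$, not for the paper's $\1:=\bot\li\bot$. Since $\bot\,b=\bot$, the linear term $\lambda tz.zt$ realizes $\top\li\bot\li\bot$. Hence $\bot\li\bot\equiprov_S\top$, and $\1\otimes a\vdash_S a$ would already be a linear weakening principle. You should take $\1:=\I^\A$ explicitly.
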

\section{Conclusion}

We have shown that linear realizability, as defined by Seiller merging different
strands of work originating in Girard's investigations of linear logic can be
unified with forcing, intutionnistic and classical realizability in the
framework of implicative algebras, embodying a general intuition that all these
works are connected.

This unification opens a lot of research directions, in particular:
\begin{itemize}
\item Miquey \citep{miquey} defined disjunctive and conjunctive algebras so as
  to study call-by-value and call-by-name \(\lambda\)-calculus. Linear logic
  being another tool used to relate these calculi, we expect it to be related;
\item implicative algebras have been used to define topoi (and this construction
  is actually the core of forcing). We can expect such topoi to be decomposable
  through the linear decomposition of the implicative algebra;
\item many constructions build models of classical linear logic (where negation
  is involutive) from models of intutionnistic linear logic through a kind of
  Chu construction \citep{shulman}. This can also be investigated for linear
  implicative algebras.
\end{itemize}

\bibliographystyle{abbrv}
\bibliography{biblio}

\appendix
\clearpage
\section{Omitted proofs}

\begin{proof}[Proof of Proposition \ref{prop:applicative}]
    For all $a, b \in \A$, we write $V_{a,b} = \{c \in \A : c \cdot a \preq b\}$, so that $a \leadsto b := \bigjoin V_{a,b}$.
    \begin{enumerate}
        \item let $a' \preq a$, $b \preq b'$, then $\forall c \in \A : c \cdot a' \preq c \cdot a$ and $b \preq b'$ so $V_{a,b} \subseteq V_{a',b'}$ and then $a \leadsto b = \bigjoin V_{a,b} \preq \bigjoin V_{a',b'} = a' \leadsto b'$.
    
        \item It is clear that $a \in V_{b,a \cdot b}$, hence $a \preq \bigjoin V_{b,a \cdot b} = a \preq (b \leadsto a \cdot b)$.
    
        \item We have $(a \leadsto b) \cdot a = (\bigjoin V_{a,b}) \cdot a  = \bigjoin_{c \in V_{a,b}} (c \cdot a) \preq b$, from the definition of $V_{a,b}$.
    
        \item From (3), $a \leadsto b \in V_{a,b}$, so $a \leadsto b = \min(V_{a,b})$
    
        \item Assuming that $a \cdot b \preq c$, we have $a \in V_{b,c}$ so $a \preq b \leadsto c$. Conversely, if $a \preq (b \leadsto c)$, from (3) we have $a \cdot b \preq (b \leadsto c) \cdot b \preq c$. 
    \end{enumerate}
\end{proof}

\begin{proof}[Proof of Proposition \ref{prop:applicativevsimplicative}]
     we already prooved that $\leadsto$ verify $(1)$.

    We have 
    \begin{align*}
        \forall b \in B, \bigmeet_{b \in B} b \preq b 
        & \implies \forall b \in B, (a \leadsto \bigmeet_{b \in B} b) \preq a \leadsto b \\
        & \implies (a \leadsto \bigmeet_{b \in B} b) \preq \bigmeet_{b \in B} (a \leadsto b)
    \end{align*}
    and forall $z \in \A$
    \begin{align*}
        z \preq \bigmeet_{b \in B} (a \leadsto b) 
        & \implies \forall b \in B, z \preq (a \leadsto b) \\ 
        & \implies \forall b \in B, z \cdot a \preq b \\
        & \implies z \cdot a \preq \bigmeet_{b \in B} b \\
        & \implies z \preq a \leadsto \bigmeet_{b \in B} b
    \end{align*}
    so 
    $$
        \bigmeet_{b \in B} (a \leadsto b) \preq a \leadsto \bigmeet_{b \in B} b
    $$
    then $\leadsto$ verify $(2)$.

    And finally for all $a,b \in \A$ :
    \begin{align*}
        ab & = \bigmeet \{c\in \A : a \preq b \leadsto c\} \\ 
        & = \bigmeet \{c\in \A : a \preq \max \{z \in \A : z \cdot b \preq c\}\} \\
        & = \bigmeet \{c\in \A : a \cdot b \preq c\} \\
        & = a \cdot b
    \end{align*}
\end{proof}

\begin{proof}[Proof of Proposition \ref{prop:implicativetoapplicative}]
    We already know that the application respects 1.
    
    For 2. we have that, for all $b \in \A$, $A \subseteq \A$ :
    \begin{align*}
        \left(\bigjoin_{a \in A} a\right)b & = \bigmeet \{c\in \A : \bigjoin_{a \in A} a \preq b \to c\} \\ 
        & = \bigmeet \{c\in \A : \forall a \in A, a \preq b \to c\} \\
        & = \bigmeet \{c\in \A : \forall a \in A, ab \preq c\} \\
        & = \bigmeet \{c\in \A : \bigjoin_{a \in A} (ab) \preq c\} \\
        & = \bigjoin_{a \in A} (ab)
    \end{align*}

    Lastly we have for all $a,b \in \A$ :
    \begin{align*}
        a \leadsto b & = \bigjoin \{c\in \A : ca \preq b\} \\ 
        & = \bigjoin \{c\in \A : \min \{z \in \A : c \preq a \to z\} \preq b\} \\
        & = \bigjoin \{c\in \A : c \preq a \to b\} \\
        & = a \to b \\
    \end{align*}
\end{proof}

\begin{proof}[Proof of Proposition \ref{prop:semantictyping}]
    Axiom, Parameter, Subsumption : immediate with the definition.

    Context subsumption : follow by \ref{monotony of substitution} (monotony of substitution).

    $\li$-R : Let assume that $\fv(t) \subseteq \dom(\Gamma, x : a)$ and $(t[\Gamma, x := a])^\A \preq b$. we have that $\fv(\lambda x.t) \subseteq \dom(\Gamma)$ and $x \not\in \dom(\Gamma)$, so that :
    \begin{align*}
        ((\lambda x.t)[\Gamma])^\A
        & = (\lambda x.t[\Gamma])^\A \\
        & = \bigmeet_{a_0 \in \A} \left(a_0 \li (t[\Gamma, x := a_0])^\A \right) \\
        & \preq a \li (t[\Gamma, x := a])^\A \\
        & \preq a \li b
    \end{align*}

    Cut : Let assume that $\fv(t) \subseteq \dom(\Gamma)$, $(t[\Gamma])^\A \preq a$, $\fv(u) \subseteq \dom(\Delta, x : a)$ and $(u[\Delta, x := a])^\A \preq b$. We have that $\fv(\lambda x.u) \subseteq \dom(\Delta)$ and $x \not\in \dom(\Delta)$. We can suppose that $\dom(\Gamma) \cap \dom(\Delta) = \emptyset$ and $x \not\in \dom(\Gamma)$, so that :
    \begin{align*}
        (((\lambda x.u)t)[\Gamma^, \Delta])^\A
        & = (\lambda x.u[\Delta])^\A (t[\Gamma])^\A \\
        & \preq \bigmeet_{a_0 \in \A} \left(a_0 \li (u[\Delta, x := a_0])^\A \right) a \\
        & \preq (a \li (u[\Delta, x := a])^\A) a \\
        & \preq (a \li b)a \\
        & \preq b
    \end{align*}

    $\li$-L : Let assume that $\fv(t) \subseteq \dom(\Gamma)$, $(t[\Gamma])^\A \preq a$, $\fv(u) \subseteq \dom(\Delta, x : b)$ and $(u[\Delta, x := b])^\A \preq c$. We have that $\fv(\lambda x.u) \subseteq \dom(\Delta)$ and $x \not\in \dom(\Delta)$. We can suppose that $\dom(\Gamma) \cap \dom(\Delta) = \emptyset$, $x \not\in \dom(\Gamma)$, $y \not\in \dom(\Gamma)$ and $y \not\in \dom(\Delta)$, so that :
    \begin{align*}
        \lefteqn{(((\lambda x.u)(yt))[\Gamma^, \Delta, y := a \li b])^\A}\\
        & = (\lambda x.u[\Delta])^\A ((y[y := a \li b])^\A(t[\Gamma])^\A) \\
        & = \bigmeet_{b_0 \in \A} \left(b_0 \li (u[\Delta, x := b_0])^\A \right) ((a \li b) a) \\
        & \preq (b \li (u[\Delta, x := b])^\A) b \\
        & \preq (b \li c)b \\
        & \preq c
    \end{align*}

    Generalisation : $(t[\Gamma])^\A \preq a_i$ for all $i \in I$ implies that $(t[\Gamma])^\A \preq \bigmeet_{i \in I} a_i$.
\end{proof}

\begin{proof}[Proof of proposition \ref{prop:permutations}]
    We have :
    \begin{align*}
        \lambda_\sigma t t_1 ... t_n
            & = \lambda x y_1 \ldots y_m.x y_{\sigma^{-1}(1)} \ldots y_{\sigma^{-1}(m)} t_1 ... t_n \\
            & \bred t t_{\sigma^{-1}(1)} \ldots t_{\sigma^{-1}(m)} t_{m+1} ... t_n 
    \end{align*}
    and
    \begin{align*}
        \lambda_\tau t t_1 ... t_n
            & = \lambda x y_1 \ldots y_n.x y_{\tau^{-1}(1)} \ldots y_{\tau^{-1}(n)} t_1 ... t_n \\
            & \bred t t_{\tau^{-1}(1)} \ldots t_{\tau^{-1}(n)} \\
            & = t t_{\sigma^{-1}(1)} \ldots t_{\sigma^{-1}(m)} t_{m+1} ... t_n 
    \end{align*}
\end{proof}

\begin{proof}[Proof of Lemma \ref{lem:linearcombiterms}]
    We procede by induction on the structure of $t$ :

    If $t$ is a free variable, we take $\I$.

    If $t = uv$, then for all $x_1, \ldots, x_n \in \fv(t)$, $i \leq n$, either $x_i \in \fv(u)$ either $x_i \in \fv(v)$.
    By \ref{combinatory permutations}\footnote{Take a linear combinatory term $T \bred \lambda_\sigma$ and then use $(T\cmb_{k,n})$ instead of $\cmb_{k,n}$} we can suppose that there is a $k \leq n$ such that $x_1, \ldots, x_k \in \fv(u)$ and $x_{k+1}, \ldots, x_n \in \fv(v)$.
    By inductions, there are two linear combinatory terms $u_0 \bred \lambda x_1 \ldots x_k.u$ and $v_0 \bred \lambda x_{k+1} \ldots x_n.v$. Then we have
    \begin{align*}
        \cmb_{k,n}u_0v_0 
        & \bred \lambda x_1 \ldots x_n.u_0 x_1 \ldots x_k (v_0 x_{k+1} \ldots x_n) \\
        & \bred \lambda x_1 \ldots x_n.uv \\
        & = \lambda x_1 \ldots x_n.t
    \end{align*}

    If $t = \lambda y.u$ then for all $x_1, \ldots, x_n \in \fv(t)$, $x_1, \ldots, x_n, y \in \fv(u)$, so by induction there is a linear combinatory term $u_0 \bred \lambda x_1 \ldots x_n y.u = \lambda x_1 \ldots x_n.t$. 
    
\end{proof}

\begin{proof}[Proof of Proposition \ref{prop:lambdaclosure}]
    Let $t$ be a linear $\lambda$-term with free variables $\overline{x}$ and let $\overline{a} \in S$ be parameters in $S$. By the theorem \ref{linear combinatory} there is a closed linear combinatory term $t_0$ such that $t_0 \bred \lambda\overline{x}.t$. from the properties $(2)$ and (3') of separators, $t_0^\A \overline{a} \in S$. Moreover, from \ref{reductions},
    $$
        t_0^\A\overline{a} \quad \preq \quad (\lambda\overline{x}.t)^\A \overline{a} \quad \preq \quad (t[\overline{x} := \overline{a}])^\A
    $$
    then $(t[\overline{x} := \overline{a}])^\A \in S$ by upward closure.
\end{proof}

\begin{proof}[Proof of Proposition \ref{prop:semantictyping:tensor}]
    $\tens$-R : With the given premises we have :
    \begin{align*}
        ((\lambda z.ztu)[\Gamma, \Delta])^\A
        & = (\lambda z.zt[\Gamma]u[\Delta])^\A \\
        & = \bigmeet_{d \in \A} \left(d \li (z[z := d])^\A(t[\Gamma])^\A(u[\Delta])^\A \right) \\
        & \preq \bigmeet_{d \in \A} \left(d \li dab \right) \\
        & \preq (a \li b \li c) \li ((a \li b \li c)ab) \\
        & \preq (a \li b \li c) \li c
    \end{align*}
    
    $\tens$-L : With the given premise we have :
    \begin{align*}
        \lefteqn{((z(\lambda xy.t))[\Gamma, z := a \tens b])^\A}\\
        & = (z[z := a \tens b](\lambda xy.t)[\Gamma])^\A \\
        & = (a \tens b) \bigmeet_{a_0, b_0 \in \A} \left(a_0 \li  b_0 \li (t[\Gamma, x = a_0, y = b_0])^\A\right) \\
        & \preq \bigmeet_{c_0 \in \A} ((a \li b \li c_0) \li c_0) \bigmeet_{a_0, b_0 \in \A} (a_0 \li b_0 \li c) \\
        & \preq ((a \li b \li c) \li c) (a \li b \li c) \\
        & \preq c
    \end{align*}
\end{proof}

\begin{proof}[Proof of \ref{prop:MLLsoundness}]
    By induction, we check that if a typing jugement $\Gamma \vdash t : a$ is derived using the semantic typing rules (Axiom), (Cut), ($\li$-R), ($\li$-L), ($\tens$-R) and ($\tens$-L) from \ref{semantic typing} and \ref{semantic typing tensor}, then $\dom(\Gamma) = \fv(t)$ and $t$ is linear.
    So, by induction on the derivation of $A$, we use those rules to construct a linear $\lambda$-term $t$ such that $\vdash t : A^\A$. 
    Then we have $t^\A \preq A^\A$ and we conclude by \ref{lambda closure} (linear $\lambda$-closure).
\end{proof}

\begin{proof}[Proof of \ref{prop:quotientdef}]
    \begin{enumerate}
        \item See \citep{DBLP:journals/mscs/Miquel20}
        \item If $a \ent_S a'$ and $b \ent_S b'$, then $T = \lambda x.x(\lambda yzt.t((a \li a')y)((b \li b')z))$ is a linear $\lambda$-term with parameters in $S$ so $T^\A \in S$.
        We easily check that $T^\A \preq a \tens b \li a' \tens b'$ so $a \tens b \ent_S a' \tens b'$.
    \end{enumerate}
\end{proof}

\begin{proof}[Proof of Proposition \ref{prop:quotientproperties}]
    for each point we construct a linear $\lambda$-term with the required type :
    \begin{enumerate}
        \item 
        $\lambda x.x(\lambda aby.yba) : \cl a \tens \cl b \li \cl b \tens \cl a$
        
        \item 
        $\lambda x.x(\lambda yc.y(\lambda abt.ta(\lambda u.ubc))) : (\cl a \tens \cl b) \tens \cl c \li \cl a \tens (\cl b \tens \cl c)$

        $\lambda x.x(\lambda ay.y(\lambda bct.t(\lambda u.uab)c)) : \cl a \tens (\cl b \tens \cl c) \li (\cl a \tens \cl b) \tens \cl c$
        
        \item 
        $\lambda x.x(\lambda ai.ia) : \cl a \tens \cl{\I^\A} \li \cl a$

        $\lambda a. (\lambda x.xa\I) : \cl a \li \cl a \tens \cl{\I^\A}$
        
        \item 
        $\lambda xab.x(\lambda y.yab) : (\cl a \tens \cl b \li \cl c) \li \cl a \li \cl b \li \cl c$

        $\lambda xy.y (\lambda ab.xab) : (\cl a \li \cl b \li \cl c) \li \cl a \tens \cl b \li \cl c$
        
        \item 
        $\lambda x.x(\lambda ya.ya) : (\cl a \li \cl b) \tens \cl a \li \cl b$
        
        \item 
        $\lambda abx.xab : \cl a \li \cl b \li \cl a \tens \cl b$
        
        \item 
        $\lambda xa.x(\lambda yz.z(ya)) : (\cl a \li \cl b) \tens (\cl b \li \cl c) \li \cl a \li \cl c$
        
        \item 
        $\lambda xy.x(\lambda uv.y(\lambda act.t(ua)(vc))) : (\cl a \li \cl b) \tens (\cl c \li \cl d) \li \cl a \tens \cl c \li \cl b \tens \cl d$
    \end{enumerate}
\end{proof}

\begin{proof}[Proof of Proposition \ref{prop:linreallattice}]
    By definition $(\bigcup_{i \in I} \mathbf{A}_i)^{\simperp\simperp}$ is the smallest type containing all the $\mathbf{A}_i$ so it is the join.

    By definition $\bigmeet_{i \in I} \mathbf{A}_i \subseteq \bigcap_{i \in I} \mathbf{A}_i$ and we have for all $T_i \subseteq \Proj, i \in I$ :
    \begin{align*}
        \bigcap_{i \in I} \sorth{T_i}
        & = \bigcap_{i \in I} \{\proj{a} \in \Proj : \forall \proj{p} \in T_i, \proj{a} \simperp_{\antipode} \proj{p}\} \\
        & = \{\proj{a} \in \Proj : \forall \proj{p} \in \bigcap_{i \in I} T_i, \proj{a} \simperp_{\antipode} \proj{p}\} \\
        & = \sorth{(\bigcap_{i \in I} T_i)}
    \end{align*}
    So in particular, $\types$ is closed by intersection, and then $\bigmeet_{i \in I} \mathbf{A}_i = \bigcap_{i \in I} \mathbf{A}_i$
\end{proof}

\begin{proof}[Proof of Proposition \ref{prop:additives}]
  \begin{enumerate}
  \item This is exactly condition \ref{cond:records-commutativity};
  \item $a\oplus b \equiprov_S b \oplus a$
  \item We have that $(a \with b) \to \Appl{(a \with b)} \in S$, and moreover,
    $\Appl{(a \with b)} = \Appl{(\Recl(a) \curlywedge \Recr(b))} =
    \Appl{\Recl(a)} \preq a$, so $(a \with b) \to a \in S$;
  \item 
  \item 
    $a \with \top = \Recl(a) \curlywedge \Recr(\top) = \Recl(a) \curlywedge \top
    = \Recl(a)$, as $\Recr$ preserve the nullary infimum. So, $a \to (a\with
    \top) \in S$. The converse direction is a consequence of a previous bullet.
  \item $a \oplus \bot \equiprov_S a \equiprov \bot \oplus a$
  \item This is exactly condition \ref{cond:records-associativity};
    $a\with (b \with c) \equiprov_S (a \with b) \with c$
  \item $\lambda t. t \Recl(a\to d) \Recr(b\to d) \preq (a\oplus b) \to d \in
    S$
  \item by the previous item, as $a \vdash_S a \oplus (b \oplus c)$,
    $b \vdash_S a \oplus (b \oplus c)$, and $c \vdash_S a \oplus (b \oplus c)$,
    we have that $(a \oplus b) \vdash_S a \oplus (b \oplus c)$ and,
    consequently, $(a \oplus b) \oplus c \vdash_S a \oplus (b \oplus c)$. The
    other direction is similar.
  \end{enumerate}

  For the second part of the proposition.

  For the first statement, we can check that
    \[ \lambda p. p(\lambda xy.\lambda z. zx(\lambda v. (\Appl{v}) u)) \preq a
    \otimes b \to a \otimes (b \oplus c) \in S. \] 
    Similarly,
    \[ \lambda p. p(\lambda xy.\lambda z. zx(\lambda v. (\Appr{v}) u)) \preq a
    \otimes c \to a \otimes (b \oplus c) \in S.\] 
    So, by \Cref{prop:additives}.\ref{prop:additives:oplus-colimit}, 
    \[ (a \otimes b) \oplus (a \otimes c) \vdash_S a \otimes (b \oplus c). \]
    
    \begin{align*}
      &a \otimes (b \oplus c)\\
      =& \bigmeet_{d \in \A} a \to \left(
                               \bigmeet_{e \in \A} \Recl(b \to e)
                               \curlywedge \Recr(c \to e) \to e \right) \to d\\
      &(a \otimes b) \oplus (a \otimes c) \\=& \bigmeet_{d \in \A} \left(
                                           \Recl\left( \left( \bigmeet_{e \in \A}
                                           a \to b \to e \right) \to d \right)
                                           \curlywedge  \Recr\left( \left(
                                           \bigmeet_{f \in \A} a \to c \to
                                           f \right) \to d \right) \right) \to d
    \end{align*}    
     $\lambda z. z(\lambda x. \Appl{(a \multimap (b \with c)x)}) (\lambda
    x. \Appr{(a \multimap (b \with c)x)}) \preq $
    $a\multimap (b \with c) = a \to (\Recl(b) \curlywedge \Recr(c)) = (a
    \to \Recl(b)) \curlywedge (a \to \Recr(c)) \curlywedge (a \multimap b) \with
    (a \multimap c)$
\end{proof}

\end{document}